\newcolumntype{L}[1]{>{\raggedright\let\newline\\\arraybackslash\hspace{0pt}}m{#1}}
\newcolumntype{C}[1]{>{\centering\let\newline\\\arraybackslash\hspace{0pt}}m{#1}}
\newcolumntype{R}[1]{>{\raggedleft\let\newline\\\arraybackslash\hspace{0pt}}m{#1}}
\title{Computing Optimal Commitments to Strategies and Outcome-Conditional Utility Transfers}
\author {Nathaniel Sauerberg}
\affiliation{
  \institution{University of Texas--Austin}
  \city{Austin}
  \country{United States}}
\email{njs@cs.utexas.edu}
\author{Caspar Oesterheld}
\affiliation{
  \institution{Carnegie Mellon University}
  \city{Pittsburgh}
  \country{United States}}
\email{oesterheld@cmu.edu}
\begin{abstract}
Prior work \cite{conitzer_optimal_strategy_to_commit,commitment-to-correlated-strats} has studied the computational complexity of computing optimal strategies to commit to in Stackelberg or leadership games, where a leader commits to a strategy which is then observed by one or more followers. 
We extend this setting to one in which the leader can additionally commit to outcome-conditional utility transfers.
In this setting, we characterize the computational complexity of finding optimal commitments for normal-form and Bayesian games. 
We find a mix of polynomial time algorithms and \NP-hardness results.
Then, we allow the leader to additionally commit to a signaling scheme based on her action, inducing a correlated equilibrium.
In this variant, optimal commitments can be computed efficiently for arbitrarily many players.

\end{abstract}
\keywords{Stackelberg Games;
Leadership Games;
Commitment Games;
Computational Complexity;
Transferable Utility;
NP-Hardness;
Linear Programming;
Bayesian Games
}
\newcommand{\BibTeX}{\rm B\kern-.05em{\sc i\kern-.025em b}\kern-.08em\TeX}
\theoremstyle{definition}
\newtheorem*{definition*}{Definition}
\newtheorem{theorem}{Theorem}[section]
\newtheorem*{theorem*}{Theorem}
\newtheorem{corollary}{Corollary}[theorem]
\newtheorem{lemma}[theorem]{Lemma}
\newtheorem{problem}{Problem}[section]
\newtheorem*{problem*}{Problem}
\newtheorem{proposition}{Proposition}[section]
\newenvironment{proofsketch}{%
  \proof}{\endproof}
\crefname{problem}{problem}{problems} 
\Crefname{problem}{Problem}{Problems} 
\newcommand{\E}{\mathop{\mathbb{E}}}
\newcommand{\eps}{\varepsilon}
\newcommand{\NP}[0]{$\mathsf{NP}$}
\newcommand{\NPH}[0]{\NP-hard}
\newcommand{\NPC}[0]{\NP-complete}
\newcommand{\BPTIME}[0]{\mathsf{BPTIME}}
\newcommand{\NE}[0]{\mathrm{NE}}
\newcommand{\Rev}[0]{\mathrm{Rev}}
\newcommand{\BR}[0]{\mathrm{BR}}
\DeclareMathOperator\supp{supp}
\DeclarePairedDelimiter\abs{\lvert}{\rvert}%
\DeclarePairedDelimiter\norm{\lVert}{\rVert}%
\let\oldabs\abs
\def\abs{\@ifstar{\oldabs}{\oldabs*}}
\let\oldnorm\norm
\def\norm{\@ifstar{\oldnorm}{\oldnorm*}}
\begin{document}

\pagestyle{fancy}
\fancyhead{}

\maketitle

\section{Introduction}

Since the early days of the field, game theorists have studied models of commitments and sequential play. 
Perhaps the first was \citet{Stackelberg1934}, who introduced his now-eponymous model to argue that a firm with the ability to commit to an action (a quantity of production) benefits from doing so in the duopoly competition model of \citet{cournot1838recherches}.
Commitment to pure and mixed strategies were invoked by \citet{von-VNM-zero-sum} to motivate the min-max and max-min values of zero-sum games. 
\citet{vonStengel-mixed-commitment} study the payoff implications of commitment to mixed strategies, comparing the resulting payoffs to those in the Nash and correlated equilibria in the simultaneous versions of the game. 
The first to study the computational complexity of computing optimal commitments were \citet{conitzer_optimal_strategy_to_commit}, who characterized the complexity of computing optimal pure and mixed strategies to commit to in normal-form and Bayesian games.
\citet{commitment-to-correlated-strats} introduced the notion of commitment to correlated strategies and showed that finding optimal such commitments is tractable even with arbitrarily many players.

More recently, a number of authors in game theory and multi-agent learning have studied the use of voluntary utility transfers and commitments to pay money \cite{rational-explanation-bribery,Haupt2023,Willis2023,yi2022learning,kolle2023learning,Wang2021, lupu2020gifting,k-implementation, complexity-and-algs-k-implementation,internal_implementation}. We are specifically interested in outcome-conditional payments, such as committing to pay a particular agent if they take a particular action. We see two primary motivations to study such commitments to payments. The first is descriptive: Commitments to pay others are a common feature of human economies. For example, companies pay employees for performing tasks. The second is normative: It may be good if commitments to payments were available to and used by agents, as commitments to payments often allow for cooperative equilibria.
For example, in the Prisoner’s Dilemma, one player may pay the other to cooperate \cite{Haupt2023,Willis2023}.

In this paper, we combine these two lines of work and study games in which players can commit to both strategies and outcome-conditional utility transfers.
To our knowledge, only \citet{rational-explanation-bribery} have considered such a setting before. (See \Cref{subsec:related-work} for a discussion of their contributions.)%

In the most basic case, we take a given normal-form game (the \enquote{base game}) and imagine that Player 1 (the \enquote{leader}) 
makes a commitment as follows:
As in Stackelberg games, she commits to taking a specific action or mixture over actions in the base game.
In addition, she can commit to transfer utility to other players based on the outcome of the game. 
Specifically, she commits to a payment function $P: A \rightarrow \mathbb{R}^{n-1}_{\geq 0}$, which we index $i=2 \dots n$, where $P_i(a)$ is the amount of utility the leader will transfer to Player $i$ if outcome $a$ obtains.%

\begin{table}[b]
\begin{tabular}{cccc}
                            & Left                      & Middle                          & Right                     \\ \cline{2-4} 
\multicolumn{1}{c|}{Top}    & \multicolumn{1}{c|}{$-1$, $0$} & \multicolumn{1}{c|}{$-1$, $-2$} & \multicolumn{1}{c|}{$-1$, $0$} \\ \cline{2-4} 
\multicolumn{1}{c|}{Bottom} & \multicolumn{1}{c|}{$\hphantom{-}2$, $0$} & \multicolumn{1}{c|}{$0$, $2$}       & \multicolumn{1}{c|}{$\hphantom{-}0$, $1$}  \\ \cline{2-4} 
\end{tabular}%
\caption{A game in which commitment to strategies and payments is beneficial to the leader.}
\label{ex:intro}
\end{table}

For example, consider the game in \Cref{ex:intro}.
Since Bottom is a strictly dominant strategy for Player 1, the only Nash equilibrium is (Bottom, Middle), resulting in payoffs of $(0, 2)$.
However, in our setting, the leader can commit to play the mixture $(\nicefrac{1}{3}, \nicefrac{2}{3})$ over (Top, Bottom) and to the payment function $P$ where $P(B, L) = 1$ and $P(a) = 0$ for all $a \neq (B, L)$, essentially promising to transfer $1$ utility to the follower if (Bottom, Left) is played.

Then, the follower's expected utility for playing Right is %
$\nicefrac{2}{3}$%
, as is his utility for playing Left since the their payoffs after payments are the same.
His utility for playing Middle is $2\cdot{}\nicefrac{2}{3} -2\cdot{}\nicefrac{1}{3} = \nicefrac{2}{3}$. 
Hence, the follower is indifferent between each of his actions and tie-breaks in favor of the leader, playing Left for a leader payoff of $-1\cdot{}\nicefrac{1}{3}+(2-1)\cdot{}\nicefrac{2}{3} = \nicefrac{1}{3}$.
(Of course, the leader could also pay slightly more to strictly incentivize Left.)

It turns out that both commitment to actions and payments are necessary for the leader to achieve positive utility in this game.
Intuitively, the leader cannot incentivize Left over Right without payments and cannot incentivize Left over Middle without commitment to (mixtures over) actions. 
For a more detailed analysis, see %
\Cref{lemma:analysis-intro-ex}. 

Settings involving commitment to both strategies and payments are ubiquitous in human interactions.
For example, governments often pass laws (commitments) that determine both systems of subsidies (payments) as well as government policies that influence the structure of economic interactions (actions). These actions include what the government buys and sells, what services and entitlements it provides to citizens, what infrastructure it maintains, and how it regulates sectors. 

Another example is companies' interactions with their employees. 
Here, companies set both general policies like rules, priorities, company hierarchy, etc. (commitment to strategies) and compensation policies like pay scales and bonus structure (commitment to payments), to which employees then respond. 
We might further imagine that leaders of subdivisions respond to the company’s commitments by in turn setting general and compensation policies within their subdivisions, thus resulting in a multi-step game where players make commitments in sequence. 

Both pure and mixed commitment are quite natural and common. In some cases, the followers observe the specific base game action the leader takes before making their decisions. In such cases, even if the leader does randomize, the followers would observe the specific draw from the distribution before making their choices, and so such randomization would not be helpful. 
In other cases, the followers can observe the leader's long-run distribution over actions before choosing their own actions, but cannot observe the leader's realized action in their specific instance.

\subsection{Contributions}
In this paper, we ask: How can we compute optimal commitments in such settings?
The answer depends on various features of the setting, such as the number of players, whether only the first or also later players can commit, which players (if any) have Bayesian types and whether correlation devices are available. We give an overview of our results here. 

In \Cref{sec:NE}, we study a setting without private information and without correlation devices. (See \Cref{table:NE-results} for a summary of the following results.) 
\begin{itemize}
    \item We show that in two-player games, optimal pure commitments can be found efficiently with dynamic programming
    (\Cref{Thm:2-player-pure-commitment}) and optimal mixed commitments can be found efficiently with linear programming (LP) (\Cref{Thm:2-player-mixed-strat})
    \item We show that if there are more than two players and only the first commits, the optimal commitment is NP-complete to find (\Cref{thm:3-player-single-commitment-hardness}).
    \item We show that in the case of three players in which the players commit in sequence, the pure commitment case can be solved efficiently with LP (\Cref{thm:3-player-iterated-pure-LP}) while the mixed commitment case is NP-hard (\Cref{theorem:3-player-iterative-mixed-commitment-hard}).
\end{itemize}

In \Cref{sec:CE}, we study a setting without private information but where correlation devices are available. 
Specifically, the leader can construct an arbitrary signaling scheme (as in \cite{bayesian-persuasion}) 
that depends on her realized action. 
We show that in this setting, optimal mixed (\Cref{thm:correlated-n-player-mixed-normal-form}) and pure (\Cref{thm:correlated-n-player-pure-normal-form}) commitments can both be computed efficiently with linear programming. (See \Cref{table:CE-Results} for a summary of these results.) 

In \Cref{sec:Bayesian}, we extend the previous settings to Bayesian games, i.e., games where players have private information about their own preferences. (See \Cref{table:Bayesian-Results} for a summary of the following results.) 
\begin{itemize}
    \item We show that in two-player games where only the follower has Bayesian types, the optimal commitment is \NPH{} to compute under standard complexity-theoretic assumptions, regardless of the availability of correlation devices (\Cref{cor:2-player-bayesian-follower-types-hardness}). 
    \item We show that in two-player games where only the leader has Bayesian types and correlation devices are not available, computing the optimal pure commitment is \NPH{} (\Cref{thm:2-player-pure-commitment-leader-types-hardness}), while the optimal mixed commitment can be computed efficiently with LP (\Cref{thm:two-player-mixed-strats-leader-types}).
    \item We show that in $n$-player games where only the leader has Bayesian types and correlation devices are available, optimal mixed commitments can be computed in polynomial time with LP (\Cref{thm:correlated-n-player-mixed-leader-types}).
\end{itemize}

We give only proof sketches in the main body of the paper. For the full proofs, see the appendices.

\subsection{Related Work}\label{subsec:related-work}

The work of \citet{rational-explanation-bribery} is most closely related to ours -- they essentially consider the two-player, mixed-commitment version of our setting. 
Their focus is on comparing the cases where the follower tiebreaks for and against the leader and characterizing the properties of optimal commitments in more detail. They also provide an alternative proof of our \Cref{Thm:2-player-mixed-strat}.

As discussed in the introduction, 
\citet{Stengel10:Leadership} study a two-player setting with mixed commitment over actions, but without commitment to payments. They characterize the range of possible payoffs for the players under optimal commitment and compare them to the payoffs from other ways of playing the game, such as playing a Nash or correlated equilibrium of the simultaneous-move game, or playing a commitment game with leadership roles reversed. 
\citet{conitzer_optimal_strategy_to_commit} were the first to study the computational complexity of computing optimal strategies in Stackelberg games. They consider both pure and mixed commitment in both Bayesian and normal-form games. 
Follow up work \cite{commitment-to-correlated-strats} introduced the idea of committing to \textit{correlated} strategies. 
Our present paper essentially extends the latter work to Bayesian games and extends both papers to allow the leader to commit to payments. 
Other works consider computing optimal commitments in extensive form \cite{Letchford2010} and Markov games \cite{Letchford2021}.
Further variations on the setting include mixed commitments to which multiple followers must respond with pure strategies \cite{coniglio2020computing-mixed-pure}, and games with multiple leaders who commit simultaneously \cite{castiglioni2019leader-multiple-leaders}.%

Another line of related work is that on $k$-implementation \cite{k-implementation, complexity-and-algs-k-implementation}, in which an outside \enquote{interested party} influences a normal-form game by committing to outcome-conditional payments, as in our setting. 
This is similar to our single commitment setting (\Cref{subsec:single-commitment}), but the rationality assumptions of $k$-implementation allow the followers to play any undominated strategy, while we make the stronger assumption that the followers play according to a Nash or correlated equilibrium.
\citet{internal_implementation} apply the $k$-implementation framework to games the interested party plays in herself. 
They essentially characterize commitments to payments that optimize the leader's pure security level in the induced game. 
In reinforcement learning, settings like policy teaching \cite[e.g.][]{policy-teaching} and reward poisoning \cite[e.g.][]{adaptive-reward-poisoning-RL} also study an outside party's ability to influence the agent's behavior via limited control over their reward function.

Payments have also been considered in the context of multi-agent reinforcement learning (MARL). 
\citet{Haupt2023} also consider outcome-conditional payments. However, their setting has one player propose payment contracts for everyone, leaving the remaining players merely with the decision of whether to accept or reject the proposed set of payments. In contrast, in our setting, the leader unilaterally commits to payments but only for herself. 
\citet{coco-Q} consider binding commitments to joint action profiles and side payments in two-player games, while other lines of work have considered (commitments to) sharing a fraction of one's reward \textit{without} conditioning on the outcome \cite{Willis2023,yi2022learning,kolle2023learning} and even entirely unconditional gifting of reward, i.e., the transfer of some constant amount of money \cite{Wang2021, lupu2020gifting}.

The principal--agent problem literature in economics, also called contract theory, \cite[e.g.,][]{Lambert1986, Demski1987, Stoughton1993, Core2002, Barron2003, Feess2004, Gromb2007, Malcomson2009} also considers settings in which one player (the principal) can pay another player (the agent) to incentivize that player to take an action which induces a distribution over \enquote{outcomes}, which the principal cares about. %
However, in this literature there is generally no notion of the principal taking actions herself, and hence no analog of commitment to actions. 
A key challenge is that the agent's action is generally unobservable except through its stochastic influence on the outcome, a difficulty our setting avoids. 
The principal-agent literature also generally assumes substantially more structure than the general normal-form games we consider.
There is typically a single agent who maximizes their payment minus the cost of their action (though they are not always risk neutral).
The rare settings with multiple agents typically have simple game structures, such as \cite[]{Barron2003, Gromb2007, OesConAcquisition} in the principal-expert sub-literature and \cite{combinatorial-agency, combinatorial-agency-optimal-contracts}, where the agents make binary decisions about whether to act and the outcome is either the AND or OR of these. 
Computational questions like those we consider are not typically the focus in principal-agent settings, though see \cite[][]{Babaioff-contract-complexity, simple-vs-optimal-contracts}, as well as \cite[][]{contracts-private-cost-per-unit-effort, contracts-under-moral-hazard-and-adverse-selection, Bayesian-agency-linear-vs-tractable-contracts} for more recent work focused on settings with Bayesian types (which we also consider).

Another economic setting, mechanism design, relies implicitly on the commitment ability of the designer and frequently assumes transferable utility, but assumes the designer has more expansive abilities to control the rules of interaction than we consider here.

One alternative to payments is outcome-conditional commitments to burn utility \cite{moulin1976_money_burning, money-burning-2x2}. For example, 
\citet{moulin1976_money_burning} characterizes two-player normal-form games in which the payoffs in completely mixed equilibria can be improved by commitments to burn utility. 
The benefits of burning utility here come from changing one's own incentives and hence the game's Nash equilibrium, something which doesn't apply to our setting because the leader commits to her actions.

\section{Preliminaries}
\subsection{Normal-Form Games}
An  $n$-player normal-form game $G$ is a pair $(A, u)$, where $A = A_1 \times A_2 \times \cdots \times A_n$ is a non-empty, finite set of pure action profiles (or outcomes) and $u: A \rightarrow \mathbb{R}^n$ is a utility function mapping each outcome to a vector containing utilities for each player.
Player $i$ has action space $A_i$ and utility function $u_i$. 
For $2$-player games, we will sometimes denote the action space $A\times B$. %

Let $\Delta(X)$ denote the set of probability distributions over a finite set $X$.
In a normal-form game, a strategy for Player $i$ is some $\sigma_i \in \Delta(A_i)$ and a strategy profile $\sigma$ is a vector of strategies for each player. 
A strategy is \textit{pure} if it plays one action with certainty. 
We'll use $-i$ to refer to the set of players besides Player $i$, so for example 
$\sigma_{-i}\in \bigtimes_{j \neq i} \Delta(A_{j})$
and $\sigma = (\sigma_i, \sigma_{-i})$. 
For convenience, we'll define utilities over strategy profiles $u$ in the obvious way: $u_i(\sigma) = \sum_{a \in A} \sigma(a) u_i(a)$, as we make the standard assumption that all players are risk-neutral expected utility maximizers.

An action $a_i$ is a \textbf{best response} to $\sigma_{-i}$ if $u_i(a_i, \sigma_{-i}) \geq u_i(a_i', \sigma_{-i})$ for all $a_i' \in A_i$.
A strategy $\sigma_i$ is a best response to $\sigma_{-i}$ if all supported actions are best responses. %
A \textbf{Nash equilibrium} (NE) %
of a game $G$ is a strategy profile $\sigma$ in which each $\sigma_i$ is a best response to $\sigma_{-i}$.
We use $\NE(G)$ to denote the set of Nash equilibria of a game $G$.
A \textbf{correlated equilibrium} (CE) \cite{aumann1987correlated} of a game $G$ is a distribution over outcomes $D \in \Delta(A)$ in which for each player $i$ and each $a_i$ with $D(a_i) > 0$, $a_i$ is a best response to $D(\cdot \mid a_i)\in \Delta(A_{-i})$.

\subsection{Commitment Games with Payments}
We study games in which a distinguished player (the \enquote{leader}) has the ability to modify a normal-form game by making a two-part commitment before it is played. 
First, she commits to taking a specific action or mixture over actions in the base game. 
Second, she can modify the payoff matrix by promising to transfer utility to another player whenever a certain outcome obtains. 
The leader's commitment is perfectly observed by all players. 

Formally, a commitment is a pair $(\sigma_1, P)$, where $\sigma_1$ is a strategy in the base game $G = (A, u)$ and $P: A \rightarrow \mathbb{R}^{n-1}_{\geq 0}$ is a payment function indexed $\{2, \dots, n\}$ that intuitively means the leader commits to transfer $P_i(a)$ utility to Player $i$ whenever outcome $a$ obtains.
A commitment induces a new normal-form game $G[\sigma_1, P]$ with $n-1$ players $\{2, \dots, n\}$, strategy space $A_{-1} = A_2 \times A_3 \times \dots \times A_n$, and utility functions $v_i^P(a_{-1}) = \sum_{a_1 \in A_1} \sigma_1(a_1) [u_i(a) + P_i(a)]$. 
We'll often drop the superscript $P$ from $v_i$ when it's clear from context, and we'll refer to players $i \geq 2$ as the followers.
While not a player in $G[\sigma_1, P]$, the leader still has utility function $v_1^P = u_1(a) - \sum_{i \geq 2}P_i(a)$ over the outcomes.
As with utility functions, we'll extend the domain of $P$ to include distributions over outcomes, i.e. $P(\sigma) = \sum_{a \in A} \sigma(a) P(a)$.  

We'll sometimes require $\sigma_1$ to be a pure action in $G$, which we refer to as the \textbf{pure commitment} case, and we'll sometimes allow $\sigma_1$ to be a mixture over actions in $G$, which we refer to as the \textbf{mixed commitment} case.%

When $n=2$, $G[\sigma_1, P]$ is a single-player game so the follower simply takes an action maximizing his expected utility. %
We assume he tiebreaks in favor of the leader, so the outcome of any game is clear. 
With more than two players, however, it's not clear what happens after the leader commits. We study two variants.

In the \textbf{single commitment} setting, the followers play $G[\sigma_1, P]$ simultaneously as a typical normal-form game.
This game may have multiple equilibria, so we'll assume the followers play the best Nash equilibrium for the leader. 
Hence, the leader's utility for a commitment $(\sigma_1, P_1)$ is well defined as
$$\max_{\sigma_{-1} \in NE(G[\sigma_1, P])} u_1(\sigma_1, \sigma_{-1}) - \sum_{j} P_{1,j}(\sigma_1, \sigma_{-1}).$$

In the \textbf{sequential commitment} case, Player $2$ becomes the leader of the (sequential) commitment game with base game $G[\sigma_1, P]$, which he plays optimally.\footnote{%
In other words, we consider the subgame perfect Nash equilibria of our sequential commitment setting, as is typical in Stackelberg games without payments \cite[e.g.,][]{vonStengel-mixed-commitment, conitzer_optimal_strategy_to_commit}.}
We assume that each player $i$ has a lexicographic tiebreaking rule: They first maximize their own utility, then that of player $i-1$, then $i-2$, etc.\footnote{We make this assumption to guarantee the existence of optimal commitments. For instance, if Player $2$ broke ties against Player $1$, Player $1$ would be incentivized to give Player $2$ a small additional payment $\eps$ so that Player $2$ would strictly prefer the action benefiting Player $1$, but no fixed $\eps$ could be optimal. In mechanism design, it is typically assumed that the agents break ties in favor of the leader for the same reason.} %
(We won't need to specify tiebreaking beyond this.)
Since the leader's utility for a commitment is well-defined in two-player games, in the $n\geq3$ sequential commitment case it's well-defined recursively.
The leader's optimal commitments are the $(\sigma_1, P_1)$ maximizing the leader's expected utility over the outcome of the sequential commitment game $G[\sigma_1, P_1]$.

\section{Nash Equilibria}\label{sec:NE}

We begin by considering the case where the leader does not have access to a correlation device and there are no private types. 
We start with the simplest setting, with only two players. 
Note that with mixed commitment, the leader can always achieve at least the utility of the best-for-her Nash equilibrium of the base game, simply by committing to her strategy in that Nash equilibrium and zero payments. 
In contrast, the leader's utility under her optimal pure commitment may or may not exceed her utility in a Nash equilibrium of the base game. 
For proofs of these claims%
, see 
\Cref{appendix:utiliity-implications}.

\subsection{Two Players}
We first consider the two-player case where the leader commits to payments and pure actions. 
In this setting, the optimal commitments can be computed efficiently using dynamic programming.

\begin{restatable}{theorem}{thmTwoPlayerPureCommitmentEasy}
\label{Thm:2-player-pure-commitment}    
    In a two-player game, the leader's optimal commitment to a payment function and a pure action can be computed in polynomial time. 
\end{restatable}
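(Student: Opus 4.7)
My plan is to show that the optimal pure-commitment plus payments for a two-player game can be found by a straightforward enumeration over pairs $(a_1, b) \in A_1 \times B$, which makes the dynamic-programming (really, table-filling) algorithm immediate. The key simplification is that once the leader commits to a \emph{pure} action $a_1$, only the payments $P(a_1, b)$ for $b \in B$ ever affect the follower's incentives, so the infinite-dimensional choice of $P$ collapses to $|B|$ real numbers.

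First I would argue a ``canonical form'' reduction: for any commitment $(a_1, P)$ in which the follower picks some action $b^*$, we can set $P(a_1, b) = 0$ for every $b \neq b^*$ (and $P(a_1', \cdot) = 0$ for $a_1' \neq a_1$) without decreasing the leader's utility. Shrinking those payments can only reduce the follower's payoff from playing an alternative $b \neq b^*$, so $b^*$ remains a best response, and since those payments were never actually disbursed, the leader's utility is unchanged. Second, for each target pair $(a_1, b^*)$, the minimum payment $p^*(a_1, b^*)$ needed to make $b^*$ a (weak) best response of the follower is
\[
p^*(a_1, b^*) \;=\; \max\!\Bigl(0,\; \max_{b' \in B} u_2(a_1, b') - u_2(a_1, b^*)\Bigr),
\]
computable in $O(|B|)$ time. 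Define $f(a_1, b) = u_1(a_1, b) - p^*(a_1, b)$.

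The core claim is that the leader's optimal utility equals $\max_{(a_1, b)} f(a_1, b)$. The upper bound follows directly from the canonical-form reduction and the definition of $p^*$: any commitment inducing $(a_1, b^*)$ has leader utility $u_1(a_1, b^*) - P(a_1, b^*) \le f(a_1, b^*)$ since $P(a_1, b^*) \ge p^*(a_1, b^*)$ by follower rationality. For the lower bound, let $(a_1^*, b^*)$ maximize $f$ and commit to $(a_1^*, P)$ with $P(a_1^*, b^*) = p^*(a_1^*, b^*)$ and zero elsewhere. The follower's best-response set contains $b^*$, and under tie-breaking in favor of the leader, he picks some $b^{**}$ in that set that maximizes $u_1(a_1^*, \cdot) - P(a_1^*, \cdot)$; this is at least $u_1(a_1^*, b^*) - p^*(a_1^*, b^*) = f(a_1^*, b^*)$. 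The algorithm then simply enumerates all $|A_1| \cdot |B|$ pairs, computes $f$, and returns the maximizer together with its witnessing payment, giving overall time $O(|A_1| \cdot |B|^2)$.

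I anticipate the main (still minor) subtlety is the tie-breaking argument in the lower bound: one has to verify that if the follower deviates from the targeted $b^*$ to some tied $b^{**}$, the leader is no worse off (and in fact no better off than some other $f(a_1^*, b^{**})$ that we also enumerate), so the max over the table is indeed achievable. Everything else — reducing to pointwise payments, checking nonnegativity, and bounding the runtime — is routine.
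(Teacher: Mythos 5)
Your proposal is correct and follows essentially the same approach as the paper: for each outcome $(a_1,b)$, compute the minimal payment $\max_{b'} u_2(a_1,b') - u_2(a_1,b)$ needed to make $b$ a best response (your outer $\max(0,\cdot)$ is redundant since the inner max ranges over $b'=b$ too), then maximize $u_1(a_1,b)$ minus that payment over all outcomes. The only cosmetic difference is runtime: the paper precomputes $\max_{b'} u_2(a_1,b')$ once per leader action to get $\Theta(|A|)$ time rather than your $O(|A_1|\cdot|B|^2)$, but both are polynomial.
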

\begin{proofsketch}
    To incentivize the follower to play $a_2$ after committing to play $a_1$ herself, the leader would need to pay him exactly $\max_{a_2'} u_2(a_1, a_2') - u_2(a_1, a_2)$.
    Therefore, we can easily compute the leader's maximum utility for implementing any outcome $(a_1,a_2)$ and then maximize over them.
\end{proofsketch}

Optimal mixed commitments in two-player games can also be computed in polynomial time. 
This result has already been shown \cite[Corollary 11]{rational-explanation-bribery}, but we include it here for completeness.

\begin{restatable}{theorem}{thmTwoPlayerMixedCommitmentEasy}
\label{Thm:2-player-mixed-strat}    
    In a two-player game, the leader's optimal commitment to a payment function and a mixture over actions can be computed in polynomial time. 
\end{restatable}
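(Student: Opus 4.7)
The plan is to linearize what is naturally a bilinear optimization by guessing the follower's best response and performing a standard change of variables. For each candidate target action $a_2^\star \in A_2$ that we want the follower to play, I would set up a linear program whose optimal solution gives the best commitment $(\sigma_1, P)$ such that $a_2^\star$ is a best response for the follower in $G[\sigma_1, P]$. Returning the maximum over the $|A_2|$ such LPs gives the overall optimum, since the follower tiebreaks in favor of the leader, so the leader's utility from any commitment equals the utility obtained by inducing some specific action in $A_2$.

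First, I would argue that without loss of generality, $P(a_1, a_2') = 0$ whenever $a_2' \neq a_2^\star$: zeroing out these payments only weakens the incentive to deviate to $a_2'$ and has no impact on the realized outcome (since the follower plays $a_2^\star$), so any commitment can be modified in this way without decreasing the leader's utility. The remaining bilinearity in $\sigma_1(a_1) \cdot P(a_1, a_2^\star)$ can then be handled by introducing new variables $q(a_1) := \sigma_1(a_1) P(a_1, a_2^\star)$ for each $a_1 \in A_1$. This is the same kind of substitution used to write optimal Stackelberg mixed commitment as an LP, adapted to allow payments.

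The LP I would write has variables $\{\sigma_1(a_1)\}_{a_1 \in A_1}$ and $\{q(a_1)\}_{a_1 \in A_1}$, with constraints $\sum_{a_1} \sigma_1(a_1) = 1$, $\sigma_1(a_1) \ge 0$, $q(a_1) \ge 0$, and the best-response constraint for each $a_2' \neq a_2^\star$:
\begin{equation*}
\sum_{a_1} \bigl[\sigma_1(a_1) u_2(a_1, a_2^\star) + q(a_1)\bigr] \;\ge\; \sum_{a_1} \sigma_1(a_1) u_2(a_1, a_2').
\end{equation*}
The objective is to maximize $\sum_{a_1} \bigl[\sigma_1(a_1) u_1(a_1, a_2^\star) - q(a_1)\bigr]$, which equals the leader's expected utility net of payments when the follower plays $a_2^\star$. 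Given an optimal $(\sigma_1, q)$, the payment function is recovered by $P(a_1, a_2^\star) = q(a_1)/\sigma_1(a_1)$ whenever $\sigma_1(a_1) > 0$ and arbitrary (e.g., zero) otherwise, which is irrelevant to outcomes and payoffs.

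The main obstacle, and the reason this is more subtle than the vanilla Stackelberg LP, is verifying that the substitution is lossless in both directions: every feasible commitment maps to a feasible $(\sigma_1,q)$, and every feasible $(\sigma_1,q)$ corresponds to a well-defined commitment with the same objective value. The forward direction is immediate; the reverse direction requires checking that $q(a_1) > 0$ only if $\sigma_1(a_1) > 0$, which follows because any $a_1$ with $\sigma_1(a_1) = 0$ contributes zero to both the objective and the best-response constraints, so such $q(a_1)$ can be set to zero at no cost. Once this equivalence is established, the algorithm simply solves $|A_2|$ LPs of polynomial size and returns the best commitment found.
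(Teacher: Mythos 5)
Your proposal is correct and matches the paper's proof in all essentials: one LP per candidate follower action, off-path payments zeroed out WLOG, and a change of variables to absorb the payment bilinearity. Since your variables $q(a_1)$ enter the objective and constraints only through their sum $\sum_{a_1} q(a_1)$, your LP collapses to the paper's formulation with a single expected-payment scalar $P$, so the two arguments are the same up to notation.
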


\begin{proofsketch}
    For each follower action $a_2$, we construct an LP to compute the leader's optimal commitment that incentivizes $a_2$. 
    There are variables corresponding to the follower's expected payment when $a_2$ is played and to the leader's probabilities of playing each action.
    Constraints ensure $a_2$ gives the follower at least as much utility as any other action.
    We can then simply maximize over the $|A_2|$ LPs to find the leader's optimal commitment.
    This is an extension of \citet{conitzer_optimal_strategy_to_commit}'s approach from the version of the setting without payments. 
\end{proofsketch}

\subsection{More than Two Players, Single Commitment}\label{subsec:single-commitment}

We now consider games with $n\geq 3$ players, beginning with the single commitment case. 
Recall that in this setting, the leader makes a commitment $(\sigma_1, P_1)$ and the followers are assumed to play the best Nash equilibrium for the leader of the induced game $G[\sigma_1, P_1]$.

We give a strong negative complexity result: Even if the leader has only a single action, computing the optimal commitment for her to make is NP-hard, even in a $3$-player game. 
This immediately shows that both the pure and mixed commitment variants are NP-hard. 
Our proof is via reduction from the following problem:

\begin{definition*}[\textsc{Balanced Complete Bipartite Subgraph}]
    Given bipartite graph $G=(V,E)$ partitioned into partite sets $V_1$ and $V_2$ (s.t.\ $E$ consists only of edges between $V_1$ and $V_2$) and a natural number $k$, decide whether there exist subsets $V_1'\subseteq V_1$ and $V_2'\subseteq V_2$ s.t.\ $V_1'$ and $V_2'$ each have (at least) $k$ elements and there is an edge from each vertex in $V_1'$ to each vertex in $V_2'$.
\end{definition*}

\textsc{Balanced Complete Bipartite Subgraph} is NP-complete \cite[][page 446]{balanced-complete-bipartite-subgraph}.
It is sometimes called the \textsc{Balanced Biclique Problem}. 

\begin{restatable}{theorem}{thmThreePlayerSingleCommitmentHardness}
\label{thm:3-player-single-commitment-hardness}
    Consider an $n$-player game in which the leader commits to payments and a (mixture over) actions and then the remaining players play the best Nash equilibrium for the leader of the induced normal-form game. 
    Computing the leader's optimal commitment is \NP-hard, even for $n=3$ players and for games in which the leader has only a single action. 
\end{restatable}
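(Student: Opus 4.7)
The plan is a polynomial-time reduction from \textsc{Balanced Complete Bipartite Subgraph}. Given an instance $(G, k)$ with bipartite graph $G = (V_1, V_2, E)$, I construct a three-player game in which Player 1 (the leader) has a single action. Player 2's action set is $V_1 \cup \{z_2^v : v \in V_2\}$ and Player 3's action set is $V_2 \cup \{z_3^u : u \in V_1\}$, where the $z_2^v$ and $z_3^u$ are auxiliary ``threshold'' actions. The reduction will show that the leader's optimal value equals $1$ if and only if $G$ admits a $k \times k$ balanced biclique.

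For the base follower utilities I set $u_2(v_1, v_2) = u_3(v_1, v_2) = 1$ when $(v_1, v_2) \in E$ and a sufficiently large negative value otherwise. The auxiliaries implement concentration thresholds: let $u_2(z_2^v, v_2) = k \cdot \mathbb{1}[v = v_2]$ with $u_2 = 0$ on all other outcomes involving auxiliary actions, and symmetrically for $u_3$. The effect is that $z_2^{v^*}$ becomes a best response for Player 2 whenever $\sigma_3(v^*) \geq 1/k$, and similarly for $z_3^{u^*}$ and $\sigma_2(u^*)$. Finally, set the leader's utility $u_1$ to be $1$ on edges of $G$ and $0$ on every other outcome (including any outcome involving an auxiliary action).

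For the ($\Leftarrow$) direction, given a $k \times k$ biclique $V_1^* \times V_2^* \subseteq E$, the leader commits to zero payments; the NE in which the two followers mix uniformly over $V_1^*$ and $V_2^*$ gives each main action in the support value $1$, ties the corresponding threshold auxiliaries at the same value, and yields the leader's payoff exactly $1$. For the ($\Rightarrow$) direction, suppose the leader earns $\geq 1$ under some commitment $P$ and NE $\sigma$. Since $u_1 \leq 1$ and $P \geq 0$, both $\mathbb{E}[u_1] = 1$ and $\mathbb{E}[P_2 + P_3] = 0$; thus the support of $\sigma$ lies in the edges of $G$ (in particular $\supp(\sigma_2) \subseteq V_1$, $\supp(\sigma_3) \subseteq V_2$), and payments are zero on the support. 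The best-response condition for any $v_1 \in \supp(\sigma_2)$ against the auxiliary $z_2^{v^*}$ then gives $1 \geq k \cdot \sigma_3(v^*) + \mathbb{E}_{\sigma_3}[P_2(z_2^{v^*}, \cdot)] \geq k \cdot \sigma_3(v^*)$ for every $v^* \in V_2$. So $\sigma_3(v) \leq 1/k$ for all $v$, which forces $|\supp(\sigma_3)| \geq k$; by symmetry $|\supp(\sigma_2)| \geq k$, producing the desired $k \times k$ biclique in $G$.

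The main obstacle I anticipate is ruling out that nonzero payments allow the leader to sustain an equilibrium with smaller supports. The payment-budget argument above handles this: the equality $\mathbb{E}[P] = 0$ on the support means the critical threshold inequality must hold without any support-side compensation, while off-support payments can only make the auxiliary action $z_2^{v^*}$ more attractive and therefore only tighten the concentration bound $1 \geq k \cdot \sigma_3(v^*)$. A secondary technical point is choosing the non-edge penalty large enough (a polynomial in $k$ and $|V|$ suffices) to preclude exotic equilibria that mix main and auxiliary actions in ways that would weaken the ``supports lie in $E$'' conclusion.
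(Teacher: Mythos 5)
Your proposal is correct and follows essentially the same approach as the paper's proof: a reduction from \textsc{Balanced Complete Bipartite Subgraph} where the two followers have vertex actions plus ``exploit''/threshold actions paying $k$ times the probability the other follower places on a given vertex, the leader's value of $1$ forces supports onto edges with zero on-path payments, and the threshold best-response constraints cap each vertex's probability at $1/k$, forcing supports of size at least $k$. The minor differences (a $0$ rather than $-k-1$ payoff to the exploited player, and a large negative penalty on non-edges) are immaterial to the argument.
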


\begin{proofsketch}

    We give a reduction from \textsc{Balanced Complete Bipartite Subgraph}.     
    The leader has only a single action, and roughly speaking, we design the game such that she cannot benefit from committing to payments. 
    Players 2 and 3 have strategy spaces corresponding to the vertices of the graph, and a strategy profile is an equilibrium giving the leader high utility if and only if it corresponds to balanced complete bipartite subgraph of size $k$. 
    Intuitively, this is because the followers benefit from playing adjacent vertices in their respective partite sets but cannot play any single vertex with probability more than $1/k$ without making themselves exploitable by the other follower.
    Our reduction is inspired by the reduction of \citet{gilboa1989nash-complexity} from \textsc{Clique} to \textsc{Best Nash} (deciding whether there exists a Nash equilibrium giving a certain player at least $k$ utility), but is simpler and reduces between different problems.
\end{proofsketch}

\subsection{More than Two Players, Sequential Commitment}

We now consider the case where the players commit sequentially. That is, first Player 1 commits to $(\sigma_1,P_1)$, then Player 2 commits to $(\sigma_2,P_2)$, and so on. As noted before, we assume that each player $i$ tiebreaks in favor of $i-1$, then $i-2$, etc. 
First, we give an efficient algorithm for the $n=3$ player pure commitment case, leaving the $n>3$ player case as an open problem. 

\begin{restatable}{theorem}{thmThreePlayerIteratedPureLP}
\label{thm:3-player-iterated-pure-LP}
    In a three-player game in which the players commit sequentially to payment functions and pure actions, the leader's optimal commitment can be computed in polynomial time. 
\end{restatable}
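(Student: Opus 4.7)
The plan is to enumerate the $|A_1||A_2||A_3|$ possible terminal outcomes and, for each $(a_1^*,a_2^*,a_3^*)$, solve one polynomial-size linear program that computes Player 1's maximum utility among commitments $(a_1^*,P_1)$ inducing this terminal outcome; the algorithm then returns the best LP value and the corresponding commitment.

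Fix a target $(a_1^*,a_2^*,a_3^*)$. By \Cref{Thm:2-player-pure-commitment} applied to the induced subgame with utilities $v_i(a_2,a_3)=u_i(a_1^*,a_2,a_3)+P_1^{(i)}(a_1^*,a_2,a_3)$, Player 2's optimal commitment implements the outcome $(a_2,a_3)$ maximizing $W_2(a_2,a_3):=v_2(a_2,a_3)+v_3(a_2,a_3)-M(a_2)$, where $M(a_2)=\max_{a_3'}v_3(a_2,a_3')$. Inducing the target amounts to enforcing $W_2(a_2^*,a_3^*)\ge W_2(a_2,a_3)$ for every $(a_2,a_3)$. I would split these constraints into two families. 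For $a_2=a_2^*$ and $a_3\ne a_3^*$, the $M(a_2^*)$ term cancels and the constraint becomes the linear inequality $(u_2+u_3+P_1^{(2)}+P_1^{(3)})(a_1^*,a_2^*,a_3^*)\ge(u_2+u_3+P_1^{(2)}+P_1^{(3)})(a_1^*,a_2^*,a_3)$. For $a_2\ne a_2^*$, the payments $P_1^{(i)}(a_1^*,a_2,\cdot)$ are off-equilibrium and do not affect Player 1's realized utility; Player 1's optimal ``threat'' is to set $P_1^{(2)}(a_1^*,a_2,\cdot)\equiv 0$ and place a single large payment $P_1^{(3)}(a_1^*,a_2,a_3^\dagger)$ at $a_3^\dagger=\arg\min_{a_3}u_2(a_1^*,a_2,a_3)$, which makes $a_3^\dagger$ the unique argmax of $v_3(a_2,\cdot)$ and reduces Player 2's best attainable value against $a_2$ to exactly $\min_{a_3}u_2(a_1^*,a_2,a_3)$. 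The whole family thus collapses to the single linear constraint $W_2(a_2^*,a_3^*)\ge\min_{a_3}u_2(a_1^*,a_2,a_3)$ for each $a_2\ne a_2^*$.

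The resulting LP has the on-path payment variables $P_1^{(i)}(a_1^*,a_2^*,a_3)\ge 0$ for $i\in\{2,3\}$ and $a_3\in A_3$, an auxiliary scalar $M_{a_2^*}$ subject to $M_{a_2^*}\ge u_3(a_1^*,a_2^*,a_3')+P_1^{(3)}(a_1^*,a_2^*,a_3')$ for each $a_3'$, the two families of incentive constraints above, and objective $u_1(a_1^*,a_2^*,a_3^*)-P_1^{(2)}(a_1^*,a_2^*,a_3^*)-P_1^{(3)}(a_1^*,a_2^*,a_3^*)$. Because $M_{a_2^*}$ appears in the incentive constraints only with a negative coefficient, the LP automatically drives it to the true maximum, so each LP has $O(|A_3|)$ variables and $O(|A_2|+|A_3|)$ constraints and is solvable in polynomial time.

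The step I expect to be the main obstacle is justifying that the maximum LP value across all targets coincides with Player 1's true optimal commitment utility, because the LP uses the weak preference $W_2(a_2^*,a_3^*)\ge W_2(a_2,a_3)$ while Player 2 actually implements the target outcome only when ties break in its favor. I would close this gap using the assumed lexicographic tiebreaking: at the overall LP maximizer for some target $T^*$, any alternative outcome $T'$ tied in $W_2$ must give Player 1 weakly less utility (otherwise the LP for $T'$ would have yielded a strictly larger objective value), so Player 2's tiebreak in favor of Player 1 picks $T^*$, and the analogous argument at Player 3's level handles further ties among actions yielding the same $W_2(a_2^*,\cdot)$. Once this tiebreaking bookkeeping is in place, the enumeration over $|A_1||A_2||A_3|$ targets combined with the polynomial-time LPs yields the claimed overall polynomial-time algorithm.
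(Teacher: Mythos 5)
Your proposal is correct and matches the paper's proof in all essentials: one LP per target outcome, large off-equilibrium \enquote{threat} payments from Player 1 to Player 3 at $\arg\min_{a_3} u_2(a_1^*,a_2',a_3)$ that collapse each deviation $a_2'\neq a_2^*$ to the single constraint against $\min_{a_3}u_2(a_1^*,a_2',a_3)$, the summed-utility constraint for alternative $a_3$ under $a_2^*$, and the same tiebreaking argument via the LPs for competing outcomes. The only cosmetic difference is that the paper carries an explicit variable $t_{2,3}(a)$ for Player 2's on-path payment to Player 3 (bounded by Player 3's incentive constraint), whereas you fold it into the derived quantity $M(a_2^*)-v_3(a_2^*,a_3^*)$ via \Cref{Thm:2-player-pure-commitment}; the two formulations are equivalent.
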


\begin{proofsketch}
    We construct an LP that computes, for any given outcome $a$, an optimal leader commitment that implements $a$. 
    To do so, we first show that, intuitively, it's optimal for the leader to pay Player $3$ some extremely large amount to minimize Player $2$'s utility if Player $2$ doesn't play $a_2$. 
    That is, for each action $a_2' \neq a_2$, the leader commits to make a large payment to Player $3$ when $(a_1, a_2', a_3^*)$ is played, where $a_3^* = \min_{a_3'} u_2(a_1, a_2', a_3')$.
    Since the LP has incentive constraints that ensure $a$ will be played, these large off-equilibrium payments will never actually need to be made. 
    The LP has $3$ variables corresponding to the on-equilibrium payments (including from Player $2$ to $3$) and incentive constraints ensuring that Player $2$ prefers to implement $a$ rather than either deviate himself or implement a deviation by Player $3$. 
    As usual, we then maximize over the $|A|$ possible outcomes. 
\end{proofsketch}

When players commit sequentially to \textit{mixtures} over actions, computing the optimal commitment is \NP-hard (as in the case without payments \cite{conitzer_optimal_strategy_to_commit}), even with only three players. 
We show this via reduction from the following problem.

\begin{problem}[\textsc{Balanced Vertex Cover}]\label{prob:balanced vertex cover}
    Given a graph $G = (V, E)$, decide whether there exists a subset of vertices $S \subseteq V$ of size at most $|V|/2$ such that, for all edges $e = (v_1, v_2) \in E$, at least one of $v_1$ and $v_2$ is in $S$.
\end{problem}
Given a subset of vertices $S$, we say an $e = (v_1, v_2)$ is \enquote{covered} if at least one of $v_1$ and $v_2$ is in $S$, and otherwise we say it is \enquote{uncovered}.
If $S$ covers all edges in $G$, it is called a \textit{cover} of $G$, or a $K$-cover if it has cardinality $K$. 

\textsc{Balanced Vertex Cover} is the special case of \textsc{Vertex Cover} in which the size of the requested cover is $|V|/2$. 
\textsc{Vertex Cover} was one of Karp's original 21 \NPC{} problems \cite{Karp-Original-NP-Completeness}. \citet{conitzer_optimal_strategy_to_commit} define
\textsc{Balanced Vertex Cover} and show it remains \NPC{}.
Intuitively, %
an arbitrary \textsc{Vertex Cover} instance can be reduced to a balanced instance by either adding isolated vertices while $K> |V|/2$ or by adding isolated triangles and increasing $K$ by $2$ per triangle while $K< |V|/2$.

\begin{restatable}{theorem}{thmIterativeMixedCommitmentHard}
\label{theorem:3-player-iterative-mixed-commitment-hard}
    In an $n$-player game in which the players commit sequentially to payment functions and mixtures over actions, computing the leader's optimal commitment is \NPH, even for $n=3$ players. 
\end{restatable}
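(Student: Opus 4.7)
The plan is to reduce from \textsc{Balanced Vertex Cover} to the problem of computing the leader's optimal sequential mixed commitment, following the spirit of \citet{conitzer_optimal_strategy_to_commit}'s reduction for the payment-free case but modifying the construction so that the availability of payments does not trivialize the instance. Given a graph $G=(V,E)$ with $|V|$ even, I would construct a 3-player game in which Players 2 and 3 each have an action $a_v$ for every $v \in V$ (plus a small number of auxiliary actions) and the leader has only a few actions. The utilities would be designed so that Player 3's best response to a mixture $\sigma_2$ played by Player 2 is to ``exploit'' $\sigma_2$ by playing the action corresponding to a vertex $v$ such that some neighbor of $v$ is in $\supp(\sigma_2)$, giving Player 2 low utility whenever $\sigma_2$ puts positive probability on a vertex that is not in a vertex cover.

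Next, I would calibrate the leader's payoffs so that her utility is high exactly when $\supp(\sigma_2)$ is a balanced vertex cover of $G$, i.e., $|\supp(\sigma_2)| \le |V|/2$ and every edge has at least one endpoint in $\supp(\sigma_2)$. The role of the leader's mixed commitment (and any payments she offers to Player 2) is then to induce a response in which Player 2 commits to a uniform mixture over such a cover. The backward-induction analysis shows that Player 2's equilibrium mixtures correspond to supports on vertex covers, while the ``balanced'' size constraint is enforced by the uniform-support structure needed for Player 2 to be indifferent between the cover's vertices.

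The main obstacle, and the aspect that distinguishes this from the payment-free reduction, is to rule out the possibility that the leader uses payments to bypass the graph-theoretic structure and cheaply induce her preferred outcome. To handle this, I would build the base game so that the leader--Player 2 interaction is effectively zero-sum in the relevant range of outcomes, so that any payment from Player 1 to Player 2 reduces the leader's own utility by the same amount it raises Player 2's utility, leaving no slack for a ``bribe-only'' solution. Payments from Player 1 to Player 3 can similarly be shown to be useless, because Player 3's best response is already pinned down by Player 2's commitment, and paying Player 3 to choose a particular action can only worsen Player 2's attainable utility and hence the leader's lexicographic tiebreak target. After verifying these properties, payments become a red herring, and the problem collapses to the payment-free sequential mixed commitment problem in disguise.

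Finally, I would establish correctness in the two directions. If $G$ admits a balanced vertex cover $S$, exhibit an explicit commitment by the leader (with zero or minimal payments) under which Player 2's unique best sequential commitment is the uniform distribution over $\{a_v : v \in S\}$, yielding leader utility at least the target value $K$. Conversely, if the leader achieves utility at least $K$, a structural argument on Player 2's and Player 3's best responses together with the bounds derived above shows that $\supp(\sigma_2)$ must have size exactly $|V|/2$ and must cover every edge, giving a balanced vertex cover. Since the reduction is computable in polynomial time, this establishes \NP-hardness for $n=3$, and the same construction extends trivially to $n>3$ by adding dummy players.
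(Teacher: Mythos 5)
Your high-level plan (reduce from \textsc{Balanced Vertex Cover}, adapt the construction of \citet{conitzer_optimal_strategy_to_commit}, and then argue that payments cannot be used to short-circuit the graph structure) matches the paper's strategy, but two load-bearing mechanisms in your sketch are either missing or would not work as described. First, your architecture puts the vertex actions on Players 2 and 3 and gives the leader ``only a few actions,'' with the balancedness constraint $|\supp(\sigma_2)|\le |V|/2$ supposedly enforced by ``the uniform-support structure needed for Player 2 to be indifferent.'' Indifference over a uniform support does not bound the support size, and with a single player mixing over vertices there is no probability-mass budget that forces the support to have size exactly $|V|/2$. The paper instead gives vertex actions to \emph{both} Players 1 and 2 (who share a utility function), and gives Player 3 exploit actions $c_v$ (punishing any vertex that neither of the first two players covers with probability $\approx 2/|V|$) and $c_e$ (punishing uncovered edges). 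The balanced constraint then falls out of a counting argument: the two players have total mass $2$ to cover $|V|$ vertices at $\approx 2/|V|$ each, so each must support exactly half the vertices, with Player 1's half forming the cover. Your single-player version cannot reproduce this, since one unit of probability mass cannot cover all $|V|$ vertices at rate $2/|V|$.

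Second, your payment-neutralization argument has a gap. Making the leader--Player 2 interaction ``effectively zero-sum'' does not address the real danger, which is Players 1 and 2 bribing \emph{Player 3} to play the outcome they want (the analogue of $c_0$); and your claim that such bribes are useless ``because Player 3's best response is already pinned down by Player 2's commitment'' is circular --- the entire purpose of a payment to Player 3 is to change that best response. The paper's concrete mechanism is to scale Players 1 and 2's utilities down to some $\epsilon \le 1/|V|^5$: any player who commits to pay more than $\epsilon$ would end up with negative utility and would profitably deviate to zero payments, so total affordable bribes are at most $2\epsilon$, which is quantitatively too small to close the gap of order $1/|V|$ in Player 3's incentives when the cover condition fails. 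Without some such quantitative affordability bound, the reverse direction of your reduction (leader utility $\ge K$ implies a balanced cover exists) does not go through. The forward direction and the polynomial-time computability of the reduction are fine, and the remark about $n>3$ is harmless, but as written the proposal does not constitute a correct proof.
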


\begin{proofsketch}
    We reduce from \Cref{prob:balanced vertex cover}, \textsc{Balanced Vertex Cover}. %
    We construct a game in which the first two players each have an action for each vertex and play \enquote{cooperatively} because they share the same utility function.
    They can achieve high utility if and only if the first mixes uniformly over vertices corresponding to a balanced vertex cover and the second mixes uniformly over its complement. 
    The third player has actions that \enquote{exploit} the first two if they don't play a balanced vertex cover, meaning the first player covers every edge and together they play every vertex with high enough probability.
    If no such exploit would succeed, the third player will instead play a different action giving the first two players high utility. %
    This approach adapts the reduction of \citet[Theorem 4]{conitzer_optimal_strategy_to_commit} to the version of the setting without payments, modifying their construction slightly so that the ability to commit to payments cannot benefit the first two players.    
\end{proofsketch}

\begin{table}[t]
\resizebox{\columnwidth}{!}{%
\begin{tabular}{|c|ll|}
\hline
 & \multicolumn{1}{l|}{\textbf{Pure Commitment}} & \textbf{Mixed Commitment} \\ \hline
$n=2$ &
  \multicolumn{1}{l|}{\begin{tabular}[c]{@{}c@{}} $\Theta(|A|)$ time via DP \\ (\Cref{Thm:2-player-pure-commitment})\end{tabular}} &
  \begin{tabular}[c]{@{}c@{}} 1 LP-solve per follower \\ action (\Cref{Thm:2-player-mixed-strat}) \end{tabular} \\ \hline
\begin{tabular}[c]{@{}l@{}}$n=3$,\\ single\end{tabular} &
  \multicolumn{2}{l|}{\begin{tabular}[c]{@{}c@{}} \NPH{}, even with only one leader action\\ (\Cref{thm:3-player-single-commitment-hardness}) \end{tabular}} \\ \hline
\begin{tabular}[c]{@{}c@{}}$n=3$,\\ sequential\end{tabular} &
  \multicolumn{1}{l|}{\begin{tabular}[c]{@{}c@{}} 1 LP-solve per action \\ profile (\Cref{thm:3-player-iterated-pure-LP})\end{tabular}} &
  \NPH{} (\Cref{theorem:3-player-iterative-mixed-commitment-hard}) \\ \hline
\end{tabular}%
}
\caption{Overview of Results from \Cref{sec:NE}}
\label{table:NE-results}
\end{table}

\section{Correlated Equilibria}\label{sec:CE}

We will now consider allowing the leader to commit to a signaling scheme (as in \cite{bayesian-persuasion})
 along with her payments and actions, inducing a correlated equilibrium. 
Specifically, the leader picks a probability distribution $D \in \Delta(A)$ and a payment function $P_j: A_j \rightarrow \mathbb{R}_{\geq 0}$ for each other player $j$. 
Intuitively, $P_j(a_j)$ is the amount paid to Player $j$ for following a recommendation to take action $a_j$. 
After committing to $D$, the leader privately draws an action profile $a \sim D$, sends each player $j\geq 2$ the private recommendation $a_j$, and plays $a_1$ herself. 
Finally, the followers simultaneously choose their actions in the base game, making no commitments of their own.
Note that the leader's commitment $(D, P)$ includes a commitment to actions since she commits to play according to the draw from $D$.

We give a revelation principle style result %
(\Cref{lemma:CE-signaling-revelation-principal-and-payments-simplification}) 
showing that any leader commitment
is equivalent (in terms of the distribution over outcomes and payoffs) to an \textbf{incentive compatible} commitment in which all followers are incentivized to follow their recommended actions. 
Hence, we will assume without loss of generality that the leader's commitment is incentive compatible.
We also show that allowing the leader to send arbitrary messages makes no difference compared to allowing only action recommendations and that allowing payments to depend on the full outcome and recommendation profile is no more powerful than simply paying players for following their recommended action.  
(Again, see \Cref{lemma:CE-signaling-revelation-principal-and-payments-simplification} for details.)

Formally, a commitment $(D, P)$ is \textbf{incentive compatible} if, for any player $j$ and any action $a_j$ with $D(a_j)>0$, player $j$ is always (weakly) incentivized to follow a recommendation to play $a_j$, assuming all other players follow their recommendations. 
That is, for all players $j$, actions $a_j$ with $D(a_j) > 0$, and all $a_j' \neq a_j$, $$\sum_{a_{-j}} D(a_{-j}| a_j) u_j(a_j, a_{-j}) + P_j(a_j) \geq \sum_{a_{-j}} D(a_{-j}| a_j) u_j(a_j', a_{-j}).$$
We call these the incentive constraints for player $j$. 
Hence, the problem of computing the leader's optimal commitment is equivalent to that of finding a utility maximizing distribution $D$ over outcomes and payment function $P$, subject to the incentive constraints. 
(Note that there is no incentive constraint for the leader since she commits to her action, though she is sometimes constrained to play a pure action.)

It turns out that an optimal commitment to a signaling scheme can be found in polynomial time regardless of whether the leader can commit to mixtures or only to pure actions, as the following the results show.

\begin{restatable}{theorem}{thmCorrelatedMixedActionNormalForm}
\label{thm:correlated-n-player-mixed-normal-form}
    In an $n$-player game, the leader's optimal commitment to a payment function, mixture over actions, and signaling scheme can be computed in polynomial time.
\end{restatable}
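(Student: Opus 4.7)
The plan is to reduce the problem to a polynomial-size linear program by applying the revelation-principle simplification already established and then linearizing the incentive-compatibility (IC) constraints via a change of variables. By \Cref{lemma:CE-signaling-revelation-principal-and-payments-simplification}, it suffices to optimize over incentive-compatible commitments $(D,P)$, with $D \in \Delta(A)$ and $P_j : A_j \to \mathbb{R}_{\geq 0}$ paying follower $j$ for following recommendation $a_j$. The leader's expected utility is $\sum_{a} D(a)\, u_1(a) - \sum_{j \geq 2}\sum_{a_j} D(a_j)\, P_j(a_j)$, and each IC constraint bilinearly couples $D$ and $P$ through the conditional probability $D(\cdot \mid a_j)$ on its two sides and the additive $P_j(a_j)$ on the LHS, so the raw optimization is not a linear program.

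To linearize, I would introduce variables $q_j(a_j) := D(a_j)\, P_j(a_j)$---intuitively, the unconditional expected payment to follower $j$ on the event that $j$ is recommended $a_j$---and multiply each IC constraint through by $D(a_j)$ to clear the conditional. The IC constraint becomes
\[
\sum_{a_{-j}} D(a_j, a_{-j})\, u_j(a_j, a_{-j}) + q_j(a_j) \;\geq\; \sum_{a_{-j}} D(a_j, a_{-j})\, u_j(a_j', a_{-j}),
\]
which is linear in $(D,q)$ and also sensibly captures the case $D(a_j) = 0$, where it reduces to the already-imposed $q_j(a_j) \geq 0$. The objective becomes $\sum_a D(a)\, u_1(a) - \sum_{j \geq 2}\sum_{a_j} q_j(a_j)$, also linear. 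The resulting LP has $|A| + \sum_{j \geq 2}|A_j|$ variables and $O(|A| + \sum_{j \geq 2}|A_j|^2)$ constraints, so it is solvable in polynomial time.

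It then remains to argue that the LP value equals the original optimum and that a feasible commitment can be recovered. The forward map $(D, P) \mapsto (D, D(a_j) P_j(a_j))$ sends every IC-feasible commitment to an LP-feasible point with matching objective, so the LP value upper-bounds the original optimum. For the converse, recover $P$ from an LP optimum $(D, q)$ by setting $P_j(a_j) := q_j(a_j)/D(a_j)$ when $D(a_j) > 0$ and $P_j(a_j) := 0$ otherwise. The main subtlety---the only genuine obstacle---is that the LP formally permits $q_j(a_j) > 0$ when $D(a_j) = 0$, which would cause the back-conversion to lose objective value; but any such configuration strictly decreases the LP objective without relaxing any constraint, so at any LP optimum $q_j(a_j) = 0$ whenever $D(a_j) = 0$, and the recovered $(D, P)$ is IC-feasible with objective exactly the LP value.
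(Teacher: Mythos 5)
Your proposal is correct and matches the paper's proof essentially exactly: the paper's LP uses variables $t_i(a_i)$ that are precisely your $q_j(a_j) = D(a_j)P_j(a_j)$, with the same multiplied-through incentive constraints and the same observation that the objective forces $t_i(a_i)=0$ for never-recommended actions. Nothing further is needed.
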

\begin{proofsketch}
    We construct a linear program which computes the leader's optimal incentive-compatible commitment. Such commitments are optimal by Lemma C.1. %
    Our LP is similar to the feasibility LP for a correlated equilibrium except that it incorporates payments.
    In addition, the leader's incentive constraints are dropped because she commits to follow her action recommendation and the objective is the leader's expected utility (after payments).

    The LP has variables $p_a$ for each action profile $a$ representing the that action profile is recommended and variables $t_i(a_i)$ representing the payment from the leader to each follower $i$ when an recommendation to play action $a_i$ is followed. 
    Constraints ensure that, for each follower and each possible action recommendation, the follower's expected utility for following the recommendation (including payments) is at least their expected utility for any other action.
    This expectation is taken over the player's uncertainty over the other players' action recommendations given their own. 
    (The constraints are designed such that those corresponding to actions that are never recommended are trivially satisfied.)

    Our approach is similar to and inspired by that of \citeauthor{commitment-to-correlated-strats}
    \cite{commitment-to-correlated-strats} for computing optimal correlated distributions to commit to without payments.
    \end{proofsketch}

\begin{restatable}{theorem}{thmCorrelatedPureActionNormalForm}\label{thm:correlated-n-player-pure-normal-form}
    In an $n$-player game, the leader's optimal commitment to a payment function, pure action, and signaling scheme can be computed in polynomial time.%
\end{restatable}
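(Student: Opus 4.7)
\begin{proofsketch}
The plan is to reduce directly to the LP from \Cref{thm:correlated-n-player-mixed-normal-form} by enumerating over the leader's possible pure actions. First, I would invoke the revelation-principle-style result (\Cref{lemma:CE-signaling-revelation-principal-and-payments-simplification}) to restrict attention to incentive compatible commitments in which each follower $j \geq 2$ is paid only for following a recommended action $a_j$, so the commitment is described by a distribution $D \in \Delta(A)$ together with payment values $t_j(a_j) \geq 0$.

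The key observation is that committing to a pure action $a_1^* \in A_1$ is equivalent, at the level of the linear program of \Cref{thm:correlated-n-player-mixed-normal-form}, to restricting $D$ to the support $\{a \in A : a_1 = a_1^*\}$. Thus for each $a_1^* \in A_1$ I would solve the LP of \Cref{thm:correlated-n-player-mixed-normal-form} with the additional constraints $p_a = 0$ for every $a$ with $a_1 \neq a_1^*$ (and $\sum_a p_a = 1$ as before). The incentive constraints for the followers are unchanged, and there is still no incentive constraint for the leader since she commits to $a_1^*$. The objective remains the leader's expected utility net of payments. I would then output the best commitment across the $|A_1|$ LPs.

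For correctness, I would argue in both directions. Any feasible incentive compatible commitment $(D, P)$ with $\sigma_1$ a pure action $a_1^*$ is a feasible solution of the corresponding restricted LP with the same objective value. Conversely, any feasible solution of a restricted LP for some $a_1^*$ yields an incentive compatible commitment with leader action $a_1^*$ attaining the same objective value; in particular, the definition of $G[\sigma_1,P]$ is unaffected because the leader's action is deterministic. Since each restricted LP has size polynomial in $|A|$ and $n$, and there are only $|A_1|$ of them, the whole procedure runs in polynomial time.

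Given the mixed-action case, there is no substantive obstacle here; the only point that merits care is verifying that the revelation-principle result and the LP formulation of \Cref{thm:correlated-n-player-mixed-normal-form} continue to hold verbatim when the leader's action is constrained to be pure, which is immediate since that constraint only restricts the support of $D$ and does not interact with any of the follower incentive constraints or the objective.
\end{proofsketch}
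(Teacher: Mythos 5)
Your proposal is correct and matches the paper's proof: both enumerate the $|A_1|$ leader actions and solve the LP of \Cref{thm:correlated-n-player-mixed-normal-form} with an added linear constraint forcing the support of $D$ onto outcomes with $a_1 = a_1^*$ (the paper writes this as $\sum_{a \mid a_1 = a_1^*} p_a = 1$, which is equivalent to your $p_a = 0$ for $a_1 \neq a_1^*$ given the simplex constraints), then maximize over the resulting objective values.
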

\begin{proofsketch}
    For any leader action $a_1$, we can add a constraint to the LP from \Cref{thm:correlated-n-player-mixed-normal-form} to require the leader to play the pure strategy $a_1$, which allows us to compute the leader's optimal commitment and corresponding utility when committing to $a_1$. 
    Solving such an LP for all actions $a_1$ and maximizing over their values gives the leader's optimal commitment overall. 
\end{proofsketch}

\begin{table}[t]
\resizebox{\columnwidth}{!}{%
\begin{tabular}{|c|c|c|}
\hline
        & \textbf{Pure Commitment}                             & \textbf{Mixed Commitment}                                 \\ \hline
Any $n$ & \begin{tabular}[c]{@{}c@{}}
1 LP-solve per leader action \\ (\Cref{thm:correlated-n-player-pure-normal-form})\end{tabular} & \begin{tabular}[c]{@{}c@{}}1 LP-solve \\ (\Cref{thm:correlated-n-player-mixed-normal-form})\end{tabular} \\ \hline
\end{tabular}%
}
\caption{Overview of Results from \Cref{sec:CE}}
\label{table:CE-Results}
\end{table}

\section{Bayesian Games}\label{sec:Bayesian}

We now consider settings with Bayesian, rather than normal-form, base games. 
In normal-form games, the agents' utility functions are common knowledge. 
Bayesian games, in contrast, model agents with private information about their own preferences.
In Bayesian games, each player $i$ has a set of types $\Theta_i$ and for each type $\theta_i$, a utility function $u_i^{\theta_i}: A \rightarrow \mathbb{R}$.
In games in which only one player has multiple types, we often drop the subscript $i$ from $\theta$ and $\Theta$.

The distribution $\pi_i \in \Delta(\Theta_i)$ over each player's type is common knowledge before the game begins (\textit{ex ante}). 
Then, \textit{ex interim}, each player $i$ learns their own realized type $\theta_i$ and chooses an action.
Hence, a strategy for Player $i$ in a Bayesian game is a mapping $\sigma_i : \Theta_i \rightarrow \Delta(A_i)$ which specifies a mixture over actions for each potential type. 
An action $a_i$ is a best response for type $\theta_i$ if $\E_{\theta_{-i} | \theta_i }[u_i^{\theta_i}(a_i, \sigma_{-i}(\theta_{-i}))] \geq \E_{\theta_{-i} | \theta_i }[u_i^{\theta_i}(a_i', \sigma_{-i}(\theta_{-i}))]$ for all $a_i' \in A_i$, and a strategy $\sigma_i$ is a best response to $\sigma_{-i}$ if all actions supported in each $\sigma_i(\theta_i)$ are best responses for $\theta_i$. 
A strategy profile $\sigma$ is a \textbf{Bayes-Nash equilibrium} (BNE) if each $\sigma_i$ is a best response to $\sigma_{-i}$.

It turns out that leader and follower types introduce different considerations and have different impacts on the complexity of computing optimal commitments, so we'll consider the two cases separately. 
(When we refer to a setting with leader types, we mean the special case of Bayesian games in which the follower has only one type, and vice versa.) %
In addition, we have all the same variations in the Bayesian setting as we did without private information: pure vs mixed commitment, sequential vs single commitment, and whether or not the leader is able to commit to a signaling scheme. 

It's important to note that the commitments are to potentially different (mixtures over) actions for each type. 
In particular, in the pure commitment case, each type must play a pure action, but different types can play different actions. 
If the leader's actions couldn't depend on her type, this setting would be equivalent to one without leader types in which the leader's utility was equal to her expected utility over types.
We also make the natural assumptions that the leader's payment cannot depend on the follower's type and that the followers don't gain any information about the leader's realized type before taking their action (except insofar as their action recommendations are correlated with the leader's type). 

\subsection{Follower Types}\label{subsec:follower-types}

First, consider the case with $n=2$ agents where the leader has only one type but has Bayesian uncertainty over the follower's type. 
Without payments, computing the optimal pure action for the leader to commit to is quite simple: for each possible leader action, one can compute each follower type's best response and hence the leader's expected utility \cite[Theorem 6]{conitzer_optimal_strategy_to_commit}.

However, the problem becomes difficult if the leader can additionally commit to payments. 
We show a surprising connection to auction theory, reducing to our present setting from problem of finding a revenue-maximizing item pricing for $m$ items and a single unit-demand buyer.

A unit-demand buyer is one who essentially \enquote{wants} at most one item, i.e. considers the items perfect substitutes.  
Formally, a unit-demand buyer has a value $v_i$ for each item $i$ and their utility for receiving a set of items $S$ is the maximum over their values for items in the set $\max_{i\in S} v_i$. 
An item pricing (or posted pricing) offers each item to the buyer at a take-it-or-leave-it price and allows the buyer to purchase whatever items they want, as at a typical retail store. 
Formally, this is a vector $r$ of non-negative prices, one for each item. 
Therefore, given an item pricing, a unit-demand buyer will purchase a single item maximizing his value minus the price or, if all prices are greater than the items' values to the buyer, will purchase nothing. 

\begin{restatable}[\textsc{Unit Demand Item Pricing}]{problem}{probUnitDemandItemPricing}\label{prob:unit-demand-item-pricing}
    Consider a finite-support distribution $D$ of value vectors $v \in \mathbb{R}^m_{\geq 0}$. %
    For an item pricing vector $r \in \mathbb{R}^m_{\geq 0}$, let $i^*(v) \in \arg\max_i [v_i - r_i]$ be the most favorable item for a buyer with values $v$ to buy, tiebreaking in favor of the item with highest value $v_i$.
    Let $A(r) = \{ v \in \supp(D) |  v_{i^*(v)} - r_{i^*(v)} \geq 0 \}$ be the set of values $v$ for which the buyer buys an item. 
    Decide whether there exists $r$ such that the seller's revenue
    $\Rev(r) = \sum_{v \in A(r)}[ D(v) \cdot r_{i^*(v)}]$ is at least $K$.
\end{restatable}

\citet{auctions-hardness} gives a hardness result for a special case of \textsc{Unit Demand Pricing} they define, which we'll call \textsc{Unit Demand Pricing for Uniform Budgets}.
The special case requires each value vector $v$ to have some \enquote{budget} $\beta_v$ such that all entries $v_i \in \{0, \beta_v\}$. 
They refer to it as the unit-demand min-buying (or envy-free) pricing problem with uniform budgets, economist's version. %
Specifically, they show \textsc{Unit Demand Pricing for Uniform Budgets} is \NPH{} to approximate subpolynomially in the number of items unless \NP{} is in bounded-error probabilistic sub-exponential time:

\begin{theorem*}[\cite{auctions-hardness}, Theorem 5]
    \textsc{Unit Demand Pricing for Uniform Budgets} is hard to approximate within $O(|G|^\eps)$ for some $\eps > 0$ if \NP{}
    $\not\subseteq \cap_{\delta > 0} \BPTIME(2^{O(n^\delta)})$.
\end{theorem*}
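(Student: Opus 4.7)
The plan is to exhibit a polynomial-time randomized reduction from a problem whose polynomial inapproximability is known under the stated complexity assumption, most naturally \textsc{Balanced Complete Bipartite Subgraph} (Balanced Biclique) or a \textsc{Label Cover}-style problem. Since approximating Balanced Biclique within a polynomial factor is hard under $\NP \not\subseteq \cap_{\delta>0}\BPTIME(2^{O(n^\delta)})$, a gap-preserving reduction into \textsc{Unit Demand Pricing for Uniform Budgets} would yield the claimed $|G|^\eps$-inapproximability.

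First I would set up a natural combinatorial skeleton. From a bipartite graph $H = (U \cup V, E)$, construct a pricing instance whose items are indexed by $U$ and whose value vectors (buyer types) are indexed by $V$: the buyer corresponding to $v \in V$ has value $\beta$ (a fixed positive budget) for item $u$ if $(u,v) \in E$ and value $0$ otherwise, and each buyer type is drawn with probability $1/|V|$. Because all positive values equal $\beta$, choosing a pricing effectively amounts to picking a subset $T \subseteq U$ of items priced at $\beta$ (items priced below $\beta$ only weakly hurt revenue, and items priced above $\beta$ are never bought); buyer $v$ then pays $\beta$ iff $N(v) \cap T \neq \emptyset$, so revenue equals $(\beta/|V|) \cdot |\{v : N(v) \cap T \neq \emptyset\}|$. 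A planted $(k,k)$-biclique induces a pricing of revenue at least $k\beta/|V|$, while graphs with no large biclique force any revenue-maximizing $T$ to cover proportionally few buyers at that price.

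The main technical work---and the chief obstacle---is turning this rough correspondence into a polynomially large inapproximability gap while preserving the uniform-budget structure. Here I would employ a randomized amplification in the style of \citet{auctions-hardness}: tensor-like products of hard instances, or random-graph gadgets superimposed on the source instance, to amplify the gap between YES and NO cases of Balanced Biclique. Such constructions rely on concentration arguments to guarantee with high probability that the random component contains only moderate bicliques on its own while preserving planted structure in YES instances. The randomization of the gadget-graph construction is precisely why the conclusion is phrased relative to $\BPTIME(2^{O(n^\delta)})$ rather than deterministic polynomial time; derandomizing such a construction would seemingly require subexponential randomized resources, which is what the hypothesis rules out. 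A secondary, more mechanical issue is matching the min-buying tiebreaking rule of \Cref{prob:unit-demand-item-pricing} (ties broken toward higher $v_i$) to the buyers' behavior in the gadget: with all positive values equal to $\beta$, a buyer purchases the cheapest priced item among her valued items, so the reduction must ensure that such choices accumulate revenue as intended rather than collapsing onto a single cheap item.
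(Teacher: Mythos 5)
This statement is not proved in the paper at all: it is quoted verbatim from \citet{auctions-hardness} (Theorem 5 there) and used as a black box to derive \Cref{cor:2-player-bayesian-follower-types-hardness} via the reduction in \Cref{thm:2-player-follower-types-hardness}. So there is no internal proof to compare against, and any evaluation of your proposal has to be on its own merits against the known proof in the cited work.

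On those merits, your core construction has a genuine gap. In the uniform-budgets problem, \enquote{uniform} means that each individual buyer's positive values all equal that buyer's own budget $\beta_v$; it does \emph{not} mean all buyers share one common budget $\beta$. In your skeleton every buyer has the same budget $\beta$, and then the instance is trivial: pricing every item at $\beta$ is optimal, every buyer with a nonempty interest set pays $\beta$, and the revenue objective you write down, $(\beta/|V|)\cdot|\{v : N(v)\cap T\neq\emptyset\}|$, is maximized by $T=U$. Even if one artificially restricted $|T|$, the objective is a coverage function and admits a constant-factor approximation, so no polynomial gap can be extracted from this skeleton. The entire difficulty of the problem lives in the interaction between \emph{many distinct budget levels} and the min-buying rule: a buyer purchases the cheapest affordable item she is interested in, so lowering a price to capture low-budget buyers cannibalizes revenue from high-budget buyers, and the seller must construct a careful ladder of prices. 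Your sketch never creates this tension, and \enquote{amplification} cannot manufacture it from a construction whose unamplified version has no gap. The actual argument of \citet{auctions-hardness} routes through an intermediate problem (a maximum expanding-sequence problem encoding exactly this price-laddering tension) and reduces to it from the hardness of balanced bipartite independent set in constant-degree bipartite graphs; the $\BPTIME(2^{O(n^\delta)})$ hypothesis is inherited from the PCP-based hardness of that source problem (and the derandomized graph products used to boost it), not from a randomized gadget you would add yourself. Your instinct that a biclique-type problem and the $\BPTIME$ assumption are involved is in the right neighborhood, but the reduction you propose does not encode the feature of the pricing problem that makes it hard.
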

The general problem of \textsc{Unit Demand Item Pricing} is widely believed to unconditionally be \NPH{} \cite{shuchi-algorithmic-mechanism-design-survey}, perhaps in part because the version where the buyer's values for items are independent rather that correlated is \NPH{} \cite{unit-demand-pricing-independent-hardness}.

We now give a reduction from general \textsc{Unit Demand Item Pricing} to our present setting of computing optimal pure action and payment commitments in $2$-player Bayesian games. We'll therefore inherit the hardness result from the uniform budgets special case.  

\begin{restatable}{theorem}{thmTwoPlayerFollowerTypesHardness}
\label{thm:2-player-follower-types-hardness}
    \textsc{Unit Demand Item Pricing} is polynomial-time reducible to the problem of computing the leader's optimal payments and (mixtures over) actions in a two-player Bayesian game with follower types only. 
    Further, it is reducible to instances in which the leader has only a single action. 
\end{restatable}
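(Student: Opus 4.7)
The plan is to reduce from \textsc{Unit Demand Item Pricing}, with the key observation that payments $P(a)$ committed to the follower's actions play the role of the seller's item prices, while the follower's types instantiate the possible realizations of the buyer's value vector. Given an instance with items $[m]$ and distribution $D$ over value vectors, I would let $M := \max_{v \in \supp(D)} \max_i v_i + 1$ and construct a two-player Bayesian game in which the leader has a single action $a_1^*$; the follower has actions $\{a_0, a_1, \ldots, a_m\}$, where $a_0$ represents ``don't buy'' and $a_i$ represents ``buy item $i$''; the follower's type space is $\supp(D)$ with prior $\pi(\theta_v) = D(v)$; and the payoffs are $u_2^{\theta_v}(a_1^*, a_0) = 0$, $u_2^{\theta_v}(a_1^*, a_i) = v_i - M$, $u_1(a_1^*, a_0) = 0$, and $u_1(a_1^*, a_i) = M$. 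A commitment $P$ then induces effective prices $r_i := M - P(a_i)$, so that the follower of type $\theta_v$ receives utility $v_i - r_i$ for playing $a_i$ and $-P(a_0)$ for playing $a_0$, while the leader's net payoff is $r_i$ whenever item $i$ is bought.

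Next I would argue via two swap arguments that the optimal leader commitment may be taken to satisfy $P(a_0) = 0$ and $P(a_i) \in [0, M]$ for each $i$. Decreasing $P(a_0)$ to $0$ only makes buying relatively more attractive; every type that newly switches from $a_0$ to some $a_i$ yields the leader non-negative revenue $r_i$ in place of a strictly positive cost, and no type switches away from an item. Similarly, if $P(a_i) > M$, every type that plays $a_i$ gives the leader strictly negative payoff $M - P(a_i)$; capping $P(a_i)$ at $M$ removes this loss and only makes $a_i$ less attractive, which cannot hurt anyone else's choice. Under these restrictions, $r \in [0, M]^m$ captures every pricing worth considering, since any $r_i > M$ makes item $i$ unbuyable and is outcome-equivalent to $r_i = M$ (where $v_i - M < 0$ strictly for every type).

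Finally, I would verify that the follower's best response for type $\theta_v$ matches the buyer's purchasing decision for value vector $v$, paying careful attention to tiebreaking. The follower tiebreaks in favor of the leader, whose payoff from $a_i$ equals $r_i$; among items with equal follower-utility $v_i - r_i$, this favors higher $r_i$, which under the tie equals higher $v_i$, matching the pricing problem's rule. Between $a_i$ and $a_0$, the leader strictly prefers $a_i$ whenever $r_i > 0$ and is indifferent otherwise, matching the weak inequality in the definition of $A(r)$. It will then follow that the leader's expected utility equals $\sum_{v \in A(r)} D(v) \cdot r_{i^*(v)} = \Rev(r)$, so the reduction preserves revenue exactly and runs in polynomial time. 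The main obstacle is the tiebreaking bookkeeping together with the lemma that the restriction $P \in [0, M]^{m+1}$ loses no generality; once these are in place, the translation is essentially syntactic.
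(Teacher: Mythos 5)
Your construction is essentially identical to the paper's: a single leader action, one follower action per item plus an opt-out action, types indexed by value vectors, utilities shifted by a large constant so that the payment $P(a_i)$ encodes the price $M - P(a_i)$, and a tiebreaking analysis showing that breaking ties in the leader's favor reproduces the pricing problem's tiebreak toward higher $v_i$. The only caveat is that your two normalization swaps (setting $P(a_0)=0$ and capping $P(a_i)$ at $M$) should be applied \emph{simultaneously} rather than sequentially --- either one alone can transiently push a type onto an action that is worse for the leader when the other payment is still unnormalized --- and the paper's version of this argument does exactly that, observing that the only types whose best responses change are those that were already giving the leader negative utility.
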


\begin{proofsketch}
    We reduce an arbitrary instance $U$ of \textsc{Unit Demand Item Pricing} to a $2$-player Bayesian game $G$ with a single leader type and single leader action. 
    In $G$, the follower has actions $b_i$ corresponding to each item in $U$ and types $\theta_v$ corresponding to each value vector $v$ in $U$. Each type $\theta_v$ occurs in $G$ with the same probability as $v$ in $U$. 
    We construct payoffs such that taking action $b_i$ corresponds to purchasing item $i$ in $u$ and the leader's payment function corresponds to an item pricing in $U$. 
    Specifically, the leader receives some very large utility $Z$ whenever the follower plays any action $b_i$, but the follower of type $\theta_v$ receives utility $-Z + v_i$ for playing $b_i$.
    Committing to a payment of $Z-r_i$ for action $b_i$ then corresponds to setting a posted price of $r_i$: if $b_i$ is played the leader gets utility $Z - (Z -r_i) = r_i$ and the follower gets utility $(-Z+ v_i)+(Z - r_i) = v_i - r_i$. 
    Because the follower in $G$ plays the single action that maximizes his utility, his behavior is equivalent to a unit demand buyer in $U$, who purchases the single item which maximizes his utility. 
\end{proofsketch}

\begin{corollary}\label{cor:2-player-bayesian-follower-types-hardness}
    In a two-player Bayesian game with follower types only, computing the leader's optimal commitment to a payment function and a (mixture over) actions is \NPH, even if the leader has only a single action, assuming 
    \NP{} $\not\subseteq \cap_{\delta > 0} \BPTIME(2^{O(n^\delta)})$. 
\end{corollary}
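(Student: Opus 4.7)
The plan is to deduce this corollary directly from the reduction established in \Cref{thm:2-player-follower-types-hardness} together with Briest's hardness-of-approximation result cited immediately before that theorem. The reduction of \Cref{thm:2-player-follower-types-hardness} takes an arbitrary \textsc{Unit Demand Item Pricing} instance (which in particular subsumes the \textsc{Unit Demand Pricing for Uniform Budgets} special case) and produces, in polynomial time, a two-player Bayesian game with follower types only and a single leader action, such that the leader's expected utility under any commitment $(\,\cdot\,, P)$ equals the seller's revenue under the corresponding posted item pricing.

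More concretely, I would carry out the following steps. First, invoke the reduction of \Cref{thm:2-player-follower-types-hardness} on an arbitrary instance of \textsc{Unit Demand Pricing for Uniform Budgets}, obtaining a two-player Bayesian commitment instance of polynomial size in which the leader has a single action. Second, verify that the correspondence between payment vectors of the form ``$Z - r$'' and posted price vectors ``$r$'' from the proof sketch of \Cref{thm:2-player-follower-types-hardness} is value-preserving, so that the optimal leader utility in the constructed game coincides with the optimal seller revenue in the \textsc{Unit Demand Pricing} instance. Third, observe that a polynomial-time algorithm for the leader's optimal commitment would, via this correspondence, yield a polynomial-time \emph{exact} algorithm for \textsc{Unit Demand Pricing for Uniform Budgets}, and therefore trivially an algorithm achieving any approximation ratio whatsoever. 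This contradicts the Briest theorem quoted above, which rules out $O(|G|^{\varepsilon})$-approximation under the assumption \NP{} $\not\subseteq \cap_{\delta > 0} \BPTIME(2^{O(n^\delta)})$, and so no such polynomial-time algorithm can exist under the same assumption.

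The main obstacle, if any, is essentially not in this corollary: all the technical work lies in \Cref{thm:2-player-follower-types-hardness}. What remains is only to confirm that (i) the reduction applies verbatim to the uniform-budgets subcase and continues to produce single-leader-action instances, both of which are immediate from the construction, and (ii) inapproximability implies hardness of exact computation under the same complexity-theoretic assumption, which is a standard observation. I would include a one-sentence remark to that effect and cite \Cref{thm:2-player-follower-types-hardness} together with the Briest theorem to close the argument.
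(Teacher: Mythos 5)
Your proposal is correct and follows essentially the same route as the paper: the corollary is an immediate consequence of the reduction in \Cref{thm:2-player-follower-types-hardness} (applied to the uniform-budgets subcase, which it covers since it handles general \textsc{Unit Demand Item Pricing}) together with the quoted hardness result of \citet{auctions-hardness}, exactly as you describe. The paper gives no separate argument beyond this, so nothing is missing.
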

Because our construction requires only a single leader action, it immediately shows the hardness of both the pure and mixed commitment versions of the problem. 
Likewise, with only $n=2$ players a single leader action, there is clearly no possible correlation to be had, and so access to signaling devices makes no difference. 
Since this hardness result applies to any version of the setting with follower types, we now consider settings with leader types only.

\subsection{Leader Types}\label{subsec:leader-types-NE}

When the leader has types, it's not immediately obvious whether to model the leader as already knowing her type at the time of commitment: Should the leader make her commitment \textit{ex ante} or \textit{ex interim}?
However, if the leader with types commits \textit{ex interim} with a realized type $\theta_i$, the problem instance is equivalent to one \textit{without} leader types in which the leader's utility function is $u_1^{\theta_i}$:
Once the leader commits to her strategy, the follower(s) simply best respond and the leader's utility function (and hence her type and uncertainty over it) has no further impact on the game.\footnote{Except that we assume the follower(s) break ties in favor of the leader's realized type.} 

Therefore, we'll assume the leader seeks to maximize her \textit{ex ante} expected utility, where the expectation is over the randomness in her realized type (as well as the randomness in the game, of followers' types, etc.). 
In this setting, a commitment is a payment function and a mapping from types to actions.
Note that it would make no difference if the leader could commit to different payment functions for different types: The follower's followers' actions and leader's \textit{ex ante} expected utility depend only on the expected payment function (over leader types).

Unlike with follower types, the pure and mixed commitment settings with leader types differ in hardness. We'll begin with the former. 
Even for $n=2$ players, the analog of the problem without payments is \NP-hard \cite[Theorem 5]{conitzer_optimal_strategy_to_commit}. 
We now show this hardness continues to hold when the leader can commit to payments.

\begin{restatable}{theorem}{thmLeaderTypesPureCommitmentTwoPlayerHardness}
\label{thm:2-player-pure-commitment-leader-types-hardness}
    In a two-player Bayesian game with leader types only, computing the leader's \textit{ex ante} optimal commitment to a payment function and an action for each type is \NPH, regardless of whether she has the ability to commit to a signaling scheme.  
\end{restatable}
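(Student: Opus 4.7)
\textbf{The plan} is to reduce from \textsc{Independent Set}, adapting the NP-hardness construction of \citet{conitzer_optimal_strategy_to_commit} (Theorem 5) for pure leader commitments in two-player Bayesian games without payments. Given an instance $(G=(V,E), K)$, one would construct a game with one uniformly-distributed leader type $\theta_v$ per vertex, leader actions encoding a vertex choice per type, and a follower action set containing one ``accept'' action $b_v$ per vertex together with a handful of ``punish'' actions. The base payoffs are calibrated so that the follower's best response---in the absence of payments and without any signaling---is high-value for the leader exactly when the leader's type-to-action mapping encodes an independent set of size at least $K$; otherwise a punish action strictly dominates and gives the leader low utility.

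The new step beyond the Conitzer--Sandholm reduction is to rule out that payments help the leader circumvent the independent-set characterization. The key observation is welfare-based: since payments transfer utility one-for-one, any incentive-compatible payment scheme that shifts the follower from some default best response $a_2^\dagger$ to a different action $a_2^*$ is profitable for the leader only if it strictly increases expected two-player welfare, i.e., only if $\mathbb{E}_\theta[u_1^\theta(a_1(\theta), a_2^*) + u_2(a_1(\theta), a_2^*)] > \mathbb{E}_\theta[u_1^\theta(a_1(\theta), a_2^\dagger) + u_2(a_1(\theta), a_2^\dagger)]$. I would design the gadget so that, for every possible leader mapping $a_1(\cdot)$, the follower's no-payment best response already maximizes $\mathbb{E}_\theta[u_1^\theta(a_1(\theta), \cdot) + u_2(a_1(\theta), \cdot)]$ over all of the follower's actions. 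Under this property, payments are strictly wasteful for the leader, so the optimal commitment uses zero payments and reduces to the no-payment case of Conitzer and Sandholm.

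For the signaling-scheme variant, essentially the same welfare argument rules out useful payments. What remains is to show that signaling alone does not help either: with pure type-contingent actions, any incentive-compatible signaling scheme corresponds to a joint distribution over (type, follower recommendation) with follower IC constraints. I would verify---by checking that in the constructed game every non-independent-set mapping still yields a punish action as the IC-optimal recommendation for the follower---that the optimal signaling commitment is supported only on type-recommendation pairs encoding an independent set, so that the hardness extends. Alternatively, one can strengthen the gadget so that follower incentives are essentially independent of any recommendation whenever the leader's mapping is not an independent set.

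\textbf{The main obstacle} I anticipate is establishing the expected-welfare-maximization property uniformly across every possible leader mapping. The follower's default best response must simultaneously (i) be bad for the leader whenever the mapping fails to encode an independent set, and (ii) maximize $u_1 + u_2$ in expectation over leader types, for every mapping. Arranging both properties with consistent payoff values may require carefully tuned dummy follower actions, and one must separately verify that the signaling variant admits no clever correlated commitment that evades the welfare argument.
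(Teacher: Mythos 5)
There is a genuine gap here: what you have written is a plan, not a proof. Your central tool---the observation that, because payments transfer utility one-for-one, the leader can profit from paying the follower to switch actions only if the switch (weakly) increases expected joint welfare $\E_\theta[u_1^\theta + u_2]$---is correct and is a perfectly good way to neutralize payments. But the proof's entire burden then falls on exhibiting a concrete gadget in which (i) the follower's no-payment best response is bad for the leader exactly when the type-to-action mapping fails to encode the combinatorial object, and (ii) that best response simultaneously maximizes expected welfare \emph{for every possible leader mapping}, and (iii) the same holds conditional on every recommendation under an arbitrary incentive-compatible signaling scheme. You state all three as design goals and explicitly flag them as the ``main obstacle,'' but you never construct the payoffs or verify them. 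In particular, your choice of one type per vertex (rather than one type per cover slot) makes the follower's detection of a bad mapping a threshold condition on sums of per-type contributions, and you would need to check that the leader's affordable payments cannot bridge the resulting $O(1/|V|)$ gaps; none of that is done. So the reduction cannot be checked as stated.

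For comparison, the paper reduces from \textsc{Vertex Cover} with $K$ uniformly distributed leader types (one per slot of the putative cover), leader actions $a_v$ per vertex, and follower actions $b_e$ per edge plus a safe action $b_0$; the leader gets $1$ iff $b_0$ is played, and the follower gets $1$ from $b_e$ if $e$ is uncovered, $-K$ if covered, and $0$ from $b_0$. Payments are killed not by a strict welfare inequality but by an exact cancellation: when some edge is uncovered, incentivizing $b_0$ costs at least $1$, which is the leader's entire gross utility, so her net is at most $0 < k$. (This is your welfare argument in the degenerate case where welfare is tied, which is why the decision threshold must be strict.) Signaling is handled by a short counting argument: conditional on recommending $b_0$, at most $K$ types (hence at most $K$ pure vertex-actions) are in the support, so absent a $K$-cover some edge remains uncovered and the follower again demands a payment of at least $1$. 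If you want to salvage your route, the fastest fix is to adopt a construction of this shape and then invoke your welfare lemma as the clean justification for why payments cannot help.
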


\begin{proofsketch}
    We prove this via reduction from \textsc{Vertex Cover}.
    The leader has $K$ types, each of which occur with equal probability, and an action $a_v$ corresponding to each vertex $v$ in the graph. 
    We show that she achieves strictly positive utility if and only if she commits to a strategy where, for each vertex $v$ in a $K$-cover of $G$, $a_v$ is the action of one of her types.
    The follower has strategies $b_e$ corresponding to each edge in $e \in E$ and an additional strategy $b_0$.
    All leader types get utility $1$ if the follower plays $b_0$ and $0$ otherwise. 

    The follower's utility function is such that, if all edges are covered by some leader type, the follower will prefer to play $b_0$, resulting in a leader utility of $1$. 
    However, if there exists an \enquote{uncovered} edge $e$ for which no leader type plays a vertex $v \in e$, the follower will prefer to play $b_e$ rather than $b_0$, resulting in a leader utility of $0$. 
    The leader cannot pay the follower enough to incentivize him to play $b_0$ when an uncovered edge exists while still getting positive utility herself, and signaling cannot help incentivize the follower to play $b_0$.
    Therefore, the leader can achieve strictly positive utility if and only if there exists a $K$-cover in the \textsc{Vertex Cover} instance. 

    Our reduction is the same as that of \citet{conitzer_optimal_strategy_to_commit} to the version of this setting without payments or signaling.
\end{proofsketch}

In contrast to the pure commitment case, the mixed commitment case with leader types is tractable with $n=2$ players. 

\begin{restatable}{theorem}{thmTwoPlayerLeaderTypesMixedStrategiesPolytime}
\label{thm:two-player-mixed-strats-leader-types}
    In a two-player Bayesian game with leader types only, the leader's \textit{ex-ante} optimal commitment to a payment function and a mixture over actions for each type can be computed in polynomial time. 
\end{restatable}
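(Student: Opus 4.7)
\begin{proofsketch}
    The plan is to adapt the LP approach of \Cref{Thm:2-player-mixed-strat} to handle the leader's types. Since the follower faces a single decision problem once the leader's commitment is fixed, he has a pure best response that is best for the leader under tiebreaking. We therefore enumerate over each candidate response $a_2^* \in A_2$, compute (via an LP) the leader's maximum \textit{ex-ante} expected utility over commitments under which $a_2^*$ is weakly preferred to every other action, and return the best of the resulting $|A_2|$ values.

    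For each target $a_2^*$, the LP has a variable $x^\theta_{a_1}$ for the probability that leader-type $\theta$ plays $a_1$, together with a single scalar $p \ge 0$ standing for the follower's \emph{expected} on-path payment. First observe that, without loss, the leader sets $P(a_1, a_2') = 0$ for every off-path $a_2' \neq a_2^*$: this only weakens the follower's incentive to deviate and does not affect on-path payoffs. Writing $\bar x_{a_1} := \sum_\theta \pi(\theta) x^\theta_{a_1}$ for the marginal probability of $a_1$, the leader's \textit{ex-ante} expected utility becomes
    \[ \sum_{\theta, a_1} \pi(\theta)\, x^\theta_{a_1}\, u_1^\theta(a_1, a_2^*) \;-\; p, \]
    and the follower's incentive constraints reduce to
    \[ \sum_{a_1} \bar x_{a_1}\, u_2(a_1, a_2^*) + p \;\ge\; \sum_{a_1} \bar x_{a_1}\, u_2(a_1, a_2') \quad \forall a_2' \neq a_2^*. \]
    Together with the probability-simplex constraints $\sum_{a_1} x^\theta_{a_1} = 1$ and $x^\theta_{a_1} \ge 0$, the LP is linear and polynomial-size in the input.

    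The main subtlety, and what keeps the LP linear despite the apparent product structure in $\sum \bar x_{a_1} P(a_1, a_2^*)$, is that both the objective and the IC constraints depend on the payment function only through the scalar $p = \sum_{a_1} \bar x_{a_1} P(a_1, a_2^*)$; the distribution of the on-path payment across the leader's actions is irrelevant. Hence, from an optimal LP solution $(x, p)$ we recover a valid payment function simply by setting $P(a_1, a_2^*) = p$ for every $a_1$ (and $0$ elsewhere); since $\sum_{a_1} \bar x_{a_1} = 1$, this yields expected on-path payment exactly $p$. Conversely, every feasible commitment yields a feasible LP solution of equal objective value, so the LP's optimum equals the true optimum for the incentivized action $a_2^*$. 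Maximizing across the $|A_2|$ LPs gives the leader's optimal commitment in polynomial time.
\end{proofsketch}
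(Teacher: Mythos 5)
Your proposal is correct and takes essentially the same approach as the paper: one LP per candidate follower action, with per-type mixing variables and a single scalar payment variable, maximizing over the $|A_2|$ LPs. Your inline observation that everything depends on the payment function only through the expected on-path payment is exactly the content of the paper's supporting lemma (\Cref{lemma:n=2_mixed_leader-types_payments-only-depend-on-follower-action}), which it proves separately.
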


\begin{proofsketch}
    We extend our LP approach from \Cref{Thm:2-player-mixed-strat}, which considers the analogous setting without types. Our LP variables now encode a distribution over actions for each leader type and optimizes the leader's \textit{ex ante} expected utility over her types, again subject to the constraint that the follower is incentivized to play a particular pure action. 
    As usual, we then maximize over the $|B|$ follower actions. 
\end{proofsketch}

If the leader doesn't have access to a signaling device, the problem becomes hard for $n\geq 3$: 
Even without leader types, we've already given hardness results in \Cref{thm:3-player-single-commitment-hardness} and \Cref{theorem:3-player-iterative-mixed-commitment-hard} for both single and sequential mixed commitment, respectively. 
Hence, we turn to the case where the leader can commit to a signaling scheme.
With signaling, the problem becomes tractable for any number of players $n$. 

\begin{restatable}{theorem}{thmCELeaderTypesMixedAction}
\label{thm:correlated-n-player-mixed-leader-types}
    In an $n$-player Bayesian game with leader types only, the leader's \textit{ex ante} optimal commitment to a payment function, a mixture over actions for each type, and a signaling scheme can be computed in polynomial time.
\end{restatable}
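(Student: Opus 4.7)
The plan is to extend the linear-programming approach of \Cref{thm:correlated-n-player-mixed-normal-form} to accommodate a leader with multiple types. First, I would prove a Bayesian analog of the revelation-principle-style \Cref{lemma:CE-signaling-revelation-principal-and-payments-simplification}: any \textit{ex ante} commitment by the leader (using arbitrary messages to followers and arbitrary outcome-conditional payments) is equivalent in distribution over outcomes and payoffs to an incentive-compatible commitment in which (i) the signal sent to each follower $j$ is an action recommendation $a_j \in A_j$, (ii) each follower is weakly incentivized to follow his recommendation, and (iii) the payment to follower $j$ depends only on $a_j$. Because followers have a single type and their utility functions are independent of the leader's type, the argument from the non-Bayesian case goes through verbatim after marginalizing the joint distribution over $\theta \in \Theta$.

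Given this simplification, an arbitrary leader commitment can be encoded as a joint distribution $p(\theta,a)$ over leader types and full action profiles (with $a_1$ being the leader's play and $a_j$ the recommendation to follower $j$), together with expected-payment variables $t_j(a_j)$ that bundle the recommendation probability with the per-recommendation payment. I would then write the LP with variables $p(\theta,a)\geq 0$ for each $\theta \in \Theta, a \in A$ and $t_j(a_j) \geq 0$ for each $j \geq 2, a_j \in A_j$, subject to the prior marginal constraints
\[
\sum_{a \in A} p(\theta, a) = \pi(\theta) \qquad \forall \theta \in \Theta,
\]
and, for each follower $j$, each recommendable action $a_j$, and each deviation $a_j' \in A_j$, the incentive constraint
\[
\sum_{\theta, a_{-j}} p(\theta, (a_j, a_{-j}))\, u_j(a_j, a_{-j}) + t_j(a_j) \geq \sum_{\theta, a_{-j}} p(\theta, (a_j, a_{-j}))\, u_j(a_j', a_{-j}).
\]
The objective is the leader's \textit{ex ante} expected utility net of payments:
\[
\max \ \sum_{\theta, a} p(\theta, a)\, u_1^{\theta}(a) \;-\; \sum_{j \geq 2}\sum_{a_j \in A_j} t_j(a_j).
\]
Since follower utilities are type-independent, the follower's posterior over the others' actions given his recommendation marginalizes $\theta$ out automatically, so the above IC constraints are exactly the ones the Bayesian follower faces; constraints for never-recommended $a_j$ reduce to $t_j(a_j) \geq 0$ and are trivially feasible.

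From an optimal solution $(p^*, t^*)$, I would reconstruct the commitment: for each $\theta$ with $\pi(\theta) > 0$, the leader's mixture is $\sigma_1^\theta(a_1) = \sum_{a_{-1}} p^*(\theta, a)/\pi(\theta)$; conditional on drawing $(\theta, a_1)$, the follower recommendations are drawn from $p^*(\theta, \cdot)$ renormalized appropriately; and the per-action payment is $P_j(a_j) = t_j^*(a_j)/q^*(a_j)$ when $q^*(a_j) := \sum_{\theta, a_{-j}} p^*(\theta, (a_j, a_{-j})) > 0$, else $P_j(a_j) = 0$. The LP has $O(|\Theta|\cdot|A| + \sum_j |A_j|)$ variables and $O(|\Theta| + \sum_j |A_j|^2)$ constraints, both polynomial in the input, so it can be solved in polynomial time.

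The main obstacle I anticipate is the revelation-principle step. One must verify that nothing is lost by restricting to action recommendations even though the leader's type is private and the signaling scheme may in general condition both messages and payments on $\theta$. Because the followers' payoffs do not depend on $\theta$, any more elaborate signaling structure can be collapsed by taking the expectation over $\theta$ consistent with each recommendation profile, and any payment function can be replaced by its expectation conditional on the recommendation to the paid follower without changing IC or the leader's expected utility; this collapsing is the one place where care is needed, and I would treat it in detail in the full proof.
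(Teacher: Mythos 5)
Your proposal is correct and takes essentially the same approach as the paper: the same LP with type-indexed recommendation-distribution variables and bundled expected-payment variables $t_j(a_j)$, with incentive constraints averaged over the leader's type; your joint-distribution parametrization $p(\theta,a)$ with marginal constraints $\sum_a p(\theta,a)=\pi(\theta)$ is just a change of variables from the paper's conditional distributions $p_a^\theta$ with $\sum_a p_a^\theta = 1$. The extra care you flag on the Bayesian revelation-principle step and on reconstructing the commitment from the LP solution is sensible but does not change the argument.
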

\begin{proofsketch}
    We extend our linear programming approach from \Cref{thm:correlated-n-player-mixed-normal-form}, which considers the analogous setting without types. Our LP variables now encode a different distribution over outcomes for each leader type, as well as payments. 
    We optimize the leader's \textit{ex ante} expected utility over her types, and the incentive constraints now hold in expectation over the leader's types. 
\end{proofsketch}

One may recall that for the normal-form case with access to signaling devices, we could extend the mixed commitment approach to the pure setting by simply having one LP for each leader action. 
Taking the same approach doesn't work with leader types, however, because commitments to pure actions are now functions from leader types to actions, rather than just single actions.
Hence, we'd need one LP for each of the $\Omega(|A_1|^{|\Theta|})$ commitment functions rather than just one per leader action. 
Indeed, with pure commitment and leader types, we already have the hardness result from the $n=2$ case (\Cref{thm:2-player-pure-commitment-leader-types-hardness}). %

\begin{table}[]
\resizebox{\columnwidth}{!}{%
\begin{tabular}{|cc|rc|}
\hline
\multicolumn{2}{|c|}{} &
  \multicolumn{1}{c|}{\textbf{Pure Commitment}} &
  \textbf{Mixed Commitment} \\ \hline
\multicolumn{1}{|c|}{\begin{tabular}[c]{@{}c@{}} \textbf{Follower} \\ \textbf{Types}\end{tabular}} &
  $n=2$ &
  \multicolumn{2}{c|}{\begin{tabular}[c]{@{}c@{}}\NPH{} under standard complexity-\\ theoretic assumptions (\Cref{cor:2-player-bayesian-follower-types-hardness})\end{tabular}} \\ \hline
\multicolumn{1}{|c|}{\multirow{3}{*}{\begin{tabular}[c]{@{}c@{}} \textbf{Leader} \\ \textbf{Types}\end{tabular}}} &
  $n=2$ &
  \multicolumn{1}{r|}{\multirow{3}{*}{  \NPH{} (\Cref{thm:2-player-pure-commitment-leader-types-hardness})
  }} &
  \begin{tabular}[c]{@{}c@{}}1 LP solve per follower\\ action (\Cref{thm:two-player-mixed-strats-leader-types})\end{tabular} \\ \cline{2-2} \cline{4-4} 
\multicolumn{1}{|c|}{} &
  \begin{tabular}[c]{@{}c@{}}$n=3$, \\ no signaling\end{tabular} &
  \multicolumn{1}{r|}{} &
  \begin{tabular}[c]{@{}c@{}}\NPH{} even without types\\ (\Cref{thm:3-player-single-commitment-hardness}, \Cref{theorem:3-player-iterative-mixed-commitment-hard})\end{tabular} \\ \cline{2-2} \cline{4-4} 
\multicolumn{1}{|c|}{} &
  \begin{tabular}[c]{@{}c@{}}Any $n$, \\ signaling\end{tabular} &
  \multicolumn{1}{r|}{} &
  1 LP solve (\Cref{thm:correlated-n-player-mixed-leader-types}) \\ \hline
\end{tabular}%
}
\caption{Summary of Results from \Cref{sec:Bayesian}}
\label{table:Bayesian-Results}
\end{table}

\section{Conclusion}

In this work, we've presented a model combining Stackelberg games with outcome-conditional utility transfers and analyzed the computational complexity of computing optimal commitments. 
We've varied our setting along several dimensions: whether the leader commits to mixed or pure actions, whether the leader can commit to a signaling scheme, whether the followers play simultaneously or make commitments sequentially, and whether players have private information about their payoffs.  
We've given a mixture of efficient algorithms for computing optimal commitments, primarily via linear programming, and \NP-hardness results. 
For an overview of the specific results, we refer back to \Cref{table:NE-results,,table:CE-Results,,table:Bayesian-Results}.

There are many open directions for future work. 
One natural idea is to extend the framework of commitments to both actions and payments to game representations besides normal form. 
Another is to vary the assumptions we make about the payments. 
For instance, one could consider settings where there are costs associated with commitment, akin to the legal costs of writing a contract. 
One could also consider restrictions on what the payments can depend on, for instance if %
only certain actions can be detected and thus have payments associated with them.  

One could also consider weakening the strong, completely binding model of commitment we've studied in this work.  
For instance, agents might only be able to commit to certain aspects of actions, or they might be able to commit \textit{against} taking certain pure actions without being able to commit to a mixture over the remaining actions (as in \cite{disarmament-games}).
Examples of weaker levels of commitment are prevalent in economic and societal interactions.  
For instance, firms give press releases and may fear the public relations or stock price ramifications of reversing course. 
Similarly, they make investments that make the indicated actions very likely, but not entirely certain.

\begin{acks}
We thank Vincent Conitzer for helpful discussions. 
Nathaniel Sauerberg acknowledges financial support from the Center on Long-Term Risk and ML Alignment and Theory Scholars. Caspar Oesterheld acknowledges financial support through grants from the Cooperative AI Foundation, Polaris Ventures (formerly the Center for Emerging Risk Research), and Jaan Tallinn's donor-advised fund at Founders Pledge. Additionally he is grateful for support by an FLI PhD Fellowship.
\end{acks}

\bibliographystyle{ACM-Reference-Format} 
\bibliography{refs2}

%%% -*-BibTeX-*-
%%% Do NOT edit. File created by BibTeX with style
%%% ACM-Reference-Format-Journals [18-Jan-2012].

\begin{thebibliography}{51}

%%% ====================================================================
%%% NOTE TO THE USER: you can override these defaults by providing
%%% customized versions of any of these macros before the \bibliography
%%% command.  Each of them MUST provide its own final punctuation,
%%% except for \shownote{}, \showDOI{}, and \showURL{}.  The latter two
%%% do not use final punctuation, in order to avoid confusing it with
%%% the Web address.
%%%
%%% To suppress output of a particular field, define its macro to expand
%%% to an empty string, or better, \unskip, like this:
%%%
%%% \newcommand{\showDOI}[1]{\unskip}   % LaTeX syntax
%%%
%%% \def \showDOI #1{\unskip}           % plain TeX syntax
%%%
%%% ====================================================================

\ifx \showCODEN    \undefined \def \showCODEN     #1{\unskip}     \fi
\ifx \showDOI      \undefined \def \showDOI       #1{#1}\fi
\ifx \showISBNx    \undefined \def \showISBNx     #1{\unskip}     \fi
\ifx \showISBNxiii \undefined \def \showISBNxiii  #1{\unskip}     \fi
\ifx \showISSN     \undefined \def \showISSN      #1{\unskip}     \fi
\ifx \showLCCN     \undefined \def \showLCCN      #1{\unskip}     \fi
\ifx \shownote     \undefined \def \shownote      #1{#1}          \fi
\ifx \showarticletitle \undefined \def \showarticletitle #1{#1}   \fi
\ifx \showURL      \undefined \def \showURL       {\relax}        \fi
% The following commands are used for tagged output and should be
% invisible to TeX
\providecommand\bibfield[2]{#2}
\providecommand\bibinfo[2]{#2}
\providecommand\natexlab[1]{#1}
\providecommand\showeprint[2][]{arXiv:#2}

\bibitem[Alon et~al\mbox{.}(2021)]%
        {contracts-private-cost-per-unit-effort}
\bibfield{author}{\bibinfo{person}{Tal Alon}, \bibinfo{person}{Paul D\"{u}tting}, {and} \bibinfo{person}{Inbal Talgam-Cohen}.} \bibinfo{year}{2021}\natexlab{}.
\newblock \showarticletitle{Contracts with Private Cost per Unit-of-Effort}. In \bibinfo{booktitle}{\emph{Proceedings of the 22nd ACM Conference on Economics and Computation}} (Budapest, Hungary) \emph{(\bibinfo{series}{EC '21})}. \bibinfo{publisher}{Association for Computing Machinery}, \bibinfo{address}{New York, NY, USA}, \bibinfo{pages}{52–69}.
\newblock
\showISBNx{9781450385541}
\urldef\tempurl%
\url{https://doi.org/10.1145/3465456.3467651}
\showDOI{\tempurl}


\bibitem[Anderson et~al\mbox{.}(2010)]%
        {internal_implementation}
\bibfield{author}{\bibinfo{person}{Ashton Anderson}, \bibinfo{person}{Yoav Shoham}, {and} \bibinfo{person}{Alon Altman}.} \bibinfo{year}{2010}\natexlab{}.
\newblock \showarticletitle{Internal Implementation}. In \bibinfo{booktitle}{\emph{Proceedings of the 9th International Conference on Autonomous Agents and Multiagent Systems: Volume 1}} (Toronto, Canada) \emph{(\bibinfo{series}{AAMAS '10})}. \bibinfo{publisher}{International Foundation for Autonomous Agents and Multiagent Systems}, \bibinfo{address}{Richland, SC}, \bibinfo{pages}{191–198}.
\newblock
\showISBNx{9780982657119}


\bibitem[Aumann(1987)]%
        {aumann1987correlated}
\bibfield{author}{\bibinfo{person}{Robert~J Aumann}.} \bibinfo{year}{1987}\natexlab{}.
\newblock \showarticletitle{Correlated equilibrium as an expression of Bayesian rationality}.
\newblock \bibinfo{journal}{\emph{Econometrica: Journal of the Econometric Society}} (\bibinfo{year}{1987}), \bibinfo{pages}{1--18}.
\newblock


\bibitem[Babaioff et~al\mbox{.}(2012)]%
        {combinatorial-agency}
\bibfield{author}{\bibinfo{person}{Moshe Babaioff}, \bibinfo{person}{Michal Feldman}, \bibinfo{person}{Noam Nisan}, {and} \bibinfo{person}{Eyal Winter}.} \bibinfo{year}{2012}\natexlab{}.
\newblock \showarticletitle{Combinatorial agency}.
\newblock \bibinfo{journal}{\emph{Journal of Economic Theory}} \bibinfo{volume}{147}, \bibinfo{number}{3} (\bibinfo{year}{2012}), \bibinfo{pages}{999--1034}.
\newblock
\showISSN{0022-0531}
\urldef\tempurl%
\url{https://doi.org/10.1016/j.jet.2012.01.010}
\showDOI{\tempurl}


\bibitem[Babaioff and Winter(2014)]%
        {Babaioff-contract-complexity}
\bibfield{author}{\bibinfo{person}{Moshe Babaioff} {and} \bibinfo{person}{Eyal Winter}.} \bibinfo{year}{2014}\natexlab{}.
\newblock \showarticletitle{Contract Complexity}. In \bibinfo{booktitle}{\emph{Proceedings of the Fifteenth ACM Conference on Economics and Computation}} (Palo Alto, California, USA) \emph{(\bibinfo{series}{EC '14})}. \bibinfo{publisher}{Association for Computing Machinery}, \bibinfo{address}{New York, NY, USA}, \bibinfo{pages}{911}.
\newblock
\showISBNx{9781450325653}
\urldef\tempurl%
\url{https://doi.org/10.1145/2600057.2602852}
\showDOI{\tempurl}


\bibitem[Barron and Waddell(2003)]%
        {Barron2003}
\bibfield{author}{\bibinfo{person}{John~M. Barron} {and} \bibinfo{person}{Glen~R. Waddell}.} \bibinfo{year}{2003}\natexlab{}.
\newblock \showarticletitle{Executive rank, pay and project selection}.
\newblock \bibinfo{journal}{\emph{J. Financ. Econ.}}  \bibinfo{volume}{67} (\bibinfo{year}{2003}), \bibinfo{pages}{305--349}.
\newblock


\bibitem[Briest(2008)]%
        {auctions-hardness}
\bibfield{author}{\bibinfo{person}{Patrick Briest}.} \bibinfo{year}{2008}\natexlab{}.
\newblock \showarticletitle{Uniform Budgets and the Envy-Free Pricing Problem}. In \bibinfo{booktitle}{\emph{Proceedings of the 35th International Colloquium on Automata, Languages and Programming - Volume Part I}} (Reykjavik, Iceland) \emph{(\bibinfo{series}{ICALP '08})}. \bibinfo{publisher}{Springer-Verlag}, \bibinfo{address}{Berlin, Heidelberg}, \bibinfo{pages}{808–819}.
\newblock
\showISBNx{3540705740}
\urldef\tempurl%
\url{https://doi.org/10.1007/978-3-540-70575-8_66}
\showDOI{\tempurl}


\bibitem[Castiglioni et~al\mbox{.}(2019)]%
        {castiglioni2019leader-multiple-leaders}
\bibfield{author}{\bibinfo{person}{Matteo Castiglioni}, \bibinfo{person}{Alberto Marchesi}, {and} \bibinfo{person}{Nicola Gatti}.} \bibinfo{year}{2019}\natexlab{}.
\newblock \showarticletitle{Be a Leader or Become a Follower: The Strategy to Commit to with Multiple Leaders}. In \bibinfo{booktitle}{\emph{Proceedings of the Twenty-Eighth International Joint Conference on Artificial Intelligence, {IJCAI-19}}}. \bibinfo{publisher}{International Joint Conferences on Artificial Intelligence Organization}, \bibinfo{pages}{123--129}.
\newblock
\urldef\tempurl%
\url{https://doi.org/10.24963/ijcai.2019/18}
\showDOI{\tempurl}


\bibitem[Castiglioni et~al\mbox{.}(2021)]%
        {Bayesian-agency-linear-vs-tractable-contracts}
\bibfield{author}{\bibinfo{person}{Matteo Castiglioni}, \bibinfo{person}{Alberto Marchesi}, {and} \bibinfo{person}{Nicola Gatti}.} \bibinfo{year}{2021}\natexlab{}.
\newblock \showarticletitle{Bayesian Agency: Linear versus Tractable Contracts}. In \bibinfo{booktitle}{\emph{Proceedings of the 22nd ACM Conference on Economics and Computation}} (Budapest, Hungary) \emph{(\bibinfo{series}{EC '21})}. \bibinfo{publisher}{Association for Computing Machinery}, \bibinfo{address}{New York, NY, USA}, \bibinfo{pages}{285–286}.
\newblock
\showISBNx{9781450385541}
\urldef\tempurl%
\url{https://doi.org/10.1145/3465456.3467602}
\showDOI{\tempurl}


\bibitem[Chawla and Sivan(2014)]%
        {shuchi-algorithmic-mechanism-design-survey}
\bibfield{author}{\bibinfo{person}{Shuchi Chawla} {and} \bibinfo{person}{Balasubramanian Sivan}.} \bibinfo{year}{2014}\natexlab{}.
\newblock \showarticletitle{Bayesian Algorithmic Mechanism Design}.
\newblock \bibinfo{journal}{\emph{SIGecom Exch.}} \bibinfo{volume}{13}, \bibinfo{number}{1} (\bibinfo{date}{nov} \bibinfo{year}{2014}), \bibinfo{pages}{5–49}.
\newblock
\urldef\tempurl%
\url{https://doi.org/10.1145/2692375.2692378}
\showDOI{\tempurl}


\bibitem[Chen et~al\mbox{.}(2018)]%
        {unit-demand-pricing-independent-hardness}
\bibfield{author}{\bibinfo{person}{Xi Chen}, \bibinfo{person}{Ilias Diakonikolas}, \bibinfo{person}{Dimitris Paparas}, \bibinfo{person}{Xiaorui Sun}, {and} \bibinfo{person}{Mihalis Yannakakis}.} \bibinfo{year}{2018}\natexlab{}.
\newblock \showarticletitle{The complexity of optimal multidimensional pricing for a unit-demand buyer}.
\newblock \bibinfo{journal}{\emph{Games and Economic Behavior}}  \bibinfo{volume}{110} (\bibinfo{year}{2018}), \bibinfo{pages}{139--164}.
\newblock
\showISSN{0899-8256}
\urldef\tempurl%
\url{https://doi.org/10.1016/j.geb.2018.03.016}
\showDOI{\tempurl}


\bibitem[Christoffersen et~al\mbox{.}(2023)]%
        {Haupt2023}
\bibfield{author}{\bibinfo{person}{Phillip~JK Christoffersen}, \bibinfo{person}{Andreas~A Haupt}, {and} \bibinfo{person}{Dylan Hadfield-Menell}.} \bibinfo{year}{2023}\natexlab{}.
\newblock \showarticletitle{Get It in Writing: Formal Contracts Mitigate Social Dilemmas in Multi-Agent RL}.
\newblock In \bibinfo{booktitle}{\emph{Proc. of the 22nd International Conference on Autonomous Agents and Multiagent Systems}}. \bibinfo{publisher}{IFAAMAS}.
\newblock


\bibitem[Coniglio et~al\mbox{.}(2020)]%
        {coniglio2020computing-mixed-pure}
\bibfield{author}{\bibinfo{person}{Stefano Coniglio}, \bibinfo{person}{Nicola Gatti}, {and} \bibinfo{person}{Alberto Marchesi}.} \bibinfo{year}{2020}\natexlab{}.
\newblock \showarticletitle{Computing a pessimistic stackelberg equilibrium with multiple followers: The mixed-pure case}.
\newblock \bibinfo{journal}{\emph{Algorithmica}} \bibinfo{volume}{82}, \bibinfo{number}{5} (\bibinfo{year}{2020}), \bibinfo{pages}{1189--1238}.
\newblock


\bibitem[Conitzer and Korzhyk(2011)]%
        {commitment-to-correlated-strats}
\bibfield{author}{\bibinfo{person}{Vincent Conitzer} {and} \bibinfo{person}{Dmytro Korzhyk}.} \bibinfo{year}{2011}\natexlab{}.
\newblock \showarticletitle{Commitment to Correlated Strategies}.
\newblock \bibinfo{journal}{\emph{Proceedings of the AAAI Conference on Artificial Intelligence}} \bibinfo{volume}{25}, \bibinfo{number}{1} (\bibinfo{date}{Aug.} \bibinfo{year}{2011}), \bibinfo{pages}{632--637}.
\newblock
\urldef\tempurl%
\url{https://doi.org/10.1609/aaai.v25i1.7875}
\showDOI{\tempurl}


\bibitem[Conitzer and Sandholm(2006)]%
        {conitzer_optimal_strategy_to_commit}
\bibfield{author}{\bibinfo{person}{Vincent Conitzer} {and} \bibinfo{person}{Tuomas Sandholm}.} \bibinfo{year}{2006}\natexlab{}.
\newblock \showarticletitle{Computing the Optimal Strategy to Commit To}. In \bibinfo{booktitle}{\emph{Proceedings of the 7th ACM Conference on Electronic Commerce}} (Ann Arbor, Michigan, USA) \emph{(\bibinfo{series}{EC '06})}. \bibinfo{publisher}{Association for Computing Machinery}, \bibinfo{address}{New York, NY, USA}, \bibinfo{pages}{82–90}.
\newblock
\showISBNx{1595932364}
\urldef\tempurl%
\url{https://doi.org/10.1145/1134707.1134717}
\showDOI{\tempurl}


\bibitem[Core et~al\mbox{.}(2002)]%
        {Core2002}
\bibfield{author}{\bibinfo{person}{John~E Core}, \bibinfo{person}{Jun Qian}, {et~al\mbox{.}}} \bibinfo{year}{2002}\natexlab{}.
\newblock \showarticletitle{Project selection, production, uncertainty, and incentives}.
\newblock \bibinfo{journal}{\emph{Production, Uncertainty, and Incentives (January 2002)}} (\bibinfo{date}{1} \bibinfo{year}{2002}).
\newblock
\urldef\tempurl%
\url{https://doi.org/10.2139/ssrn.297461}
\showDOI{\tempurl}


\bibitem[Cournot(1838)]%
        {cournot1838recherches}
\bibfield{author}{\bibinfo{person}{Antoine~Augustin Cournot}.} \bibinfo{year}{1838}\natexlab{}.
\newblock \bibinfo{booktitle}{\emph{Recherches sur les principes math{\'e}matiques de la th{\'e}orie des richesses}}.
\newblock \bibinfo{publisher}{L. Hachette}.
\newblock


\bibitem[Demski and Sappington(1987)]%
        {Demski1987}
\bibfield{author}{\bibinfo{person}{Joel~S. Demski} {and} \bibinfo{person}{David E.~M. Sappington}.} \bibinfo{year}{1987}\natexlab{}.
\newblock \showarticletitle{Delegated Expertise}.
\newblock \bibinfo{journal}{\emph{J. Account. Res.}} \bibinfo{volume}{25}, \bibinfo{number}{1} (\bibinfo{year}{1987}), \bibinfo{pages}{68--89}.
\newblock


\bibitem[Deng and Conitzer(2017)]%
        {disarmament-games}
\bibfield{author}{\bibinfo{person}{Yuan Deng} {and} \bibinfo{person}{Vincent Conitzer}.} \bibinfo{year}{2017}\natexlab{}.
\newblock \showarticletitle{Disarmament Games}.
\newblock \bibinfo{journal}{\emph{Proceedings of the AAAI Conference on Artificial Intelligence}} \bibinfo{volume}{31}, \bibinfo{number}{1} (\bibinfo{date}{Feb.} \bibinfo{year}{2017}).
\newblock
\urldef\tempurl%
\url{https://doi.org/10.1609/aaai.v31i1.10573}
\showDOI{\tempurl}


\bibitem[Deng et~al\mbox{.}(2016)]%
        {complexity-and-algs-k-implementation}
\bibfield{author}{\bibinfo{person}{Yuan Deng}, \bibinfo{person}{Pingzhong Tang}, {and} \bibinfo{person}{Shuran Zheng}.} \bibinfo{year}{2016}\natexlab{}.
\newblock \showarticletitle{Complexity and Algorithms of K-Implementation}. In \bibinfo{booktitle}{\emph{Proceedings of the 2016 International Conference on Autonomous Agents \& Multiagent Systems}} (Singapore, Singapore) \emph{(\bibinfo{series}{AAMAS '16})}. \bibinfo{publisher}{International Foundation for Autonomous Agents and Multiagent Systems}, \bibinfo{address}{Richland, SC}, \bibinfo{pages}{5–13}.
\newblock
\showISBNx{9781450342391}


\bibitem[D\"{u}tting et~al\mbox{.}(2019)]%
        {simple-vs-optimal-contracts}
\bibfield{author}{\bibinfo{person}{Paul D\"{u}tting}, \bibinfo{person}{Tim Roughgarden}, {and} \bibinfo{person}{Inbal Talgam-Cohen}.} \bibinfo{year}{2019}\natexlab{}.
\newblock \showarticletitle{Simple versus Optimal Contracts}. In \bibinfo{booktitle}{\emph{Proceedings of the 2019 ACM Conference on Economics and Computation}} (Phoenix, AZ, USA) \emph{(\bibinfo{series}{EC '19})}. \bibinfo{publisher}{Association for Computing Machinery}, \bibinfo{address}{New York, NY, USA}, \bibinfo{pages}{369–387}.
\newblock
\showISBNx{9781450367929}
\urldef\tempurl%
\url{https://doi.org/10.1145/3328526.3329591}
\showDOI{\tempurl}


\bibitem[Emek and Feldman(2012)]%
        {combinatorial-agency-optimal-contracts}
\bibfield{author}{\bibinfo{person}{Yuval Emek} {and} \bibinfo{person}{Michal Feldman}.} \bibinfo{year}{2012}\natexlab{}.
\newblock \showarticletitle{Computing optimal contracts in combinatorial agencies}.
\newblock \bibinfo{journal}{\emph{Theoretical Computer Science}}  \bibinfo{volume}{452} (\bibinfo{year}{2012}), \bibinfo{pages}{56--74}.
\newblock
\showISSN{0304-3975}
\urldef\tempurl%
\url{https://doi.org/10.1016/j.tcs.2012.05.018}
\showDOI{\tempurl}


\bibitem[Feess and Walzl(2004)]%
        {Feess2004}
\bibfield{author}{\bibinfo{person}{Eberhard Feess} {and} \bibinfo{person}{Markus Walzl}.} \bibinfo{year}{2004}\natexlab{}.
\newblock \showarticletitle{Delegated expertise -- when are good projects bad news?}
\newblock \bibinfo{journal}{\emph{Econ. Lett.}} \bibinfo{volume}{82}, \bibinfo{number}{1} (\bibinfo{date}{1} \bibinfo{year}{2004}), \bibinfo{pages}{77--82}.
\newblock


\bibitem[Gilboa and Zemel(1989)]%
        {gilboa1989nash-complexity}
\bibfield{author}{\bibinfo{person}{Itzhak Gilboa} {and} \bibinfo{person}{Eitan Zemel}.} \bibinfo{year}{1989}\natexlab{}.
\newblock \showarticletitle{Nash and correlated equilibria: Some complexity considerations}.
\newblock \bibinfo{journal}{\emph{Games and Economic Behavior}} \bibinfo{volume}{1}, \bibinfo{number}{1} (\bibinfo{year}{1989}), \bibinfo{pages}{80--93}.
\newblock


\bibitem[Gromb and Martimort(2007)]%
        {Gromb2007}
\bibfield{author}{\bibinfo{person}{Denis Gromb} {and} \bibinfo{person}{David Martimort}.} \bibinfo{year}{2007}\natexlab{}.
\newblock \showarticletitle{Collusion and the organization of delegated expertise}.
\newblock \bibinfo{journal}{\emph{J. Econ. Theory}} \bibinfo{volume}{137}, \bibinfo{number}{1} (\bibinfo{date}{11} \bibinfo{year}{2007}), \bibinfo{pages}{271--299}.
\newblock


\bibitem[Gupta and Schewe(2015)]%
        {rational-explanation-bribery}
\bibfield{author}{\bibinfo{person}{Anshul Gupta} {and} \bibinfo{person}{Sven Schewe}.} \bibinfo{year}{2015}\natexlab{}.
\newblock \showarticletitle{It Pays to Pay in Bi-Matrix Games: A Rational Explanation for Bribery}. In \bibinfo{booktitle}{\emph{Proceedings of the 2015 International Conference on Autonomous Agents and Multiagent Systems}} (Istanbul, Turkey) \emph{(\bibinfo{series}{AAMAS '15})}. \bibinfo{publisher}{International Foundation for Autonomous Agents and Multiagent Systems}, \bibinfo{address}{Richland, SC}, \bibinfo{pages}{1361–1369}.
\newblock
\showISBNx{9781450334136}


\bibitem[Guruganesh et~al\mbox{.}(2021)]%
        {contracts-under-moral-hazard-and-adverse-selection}
\bibfield{author}{\bibinfo{person}{Guru Guruganesh}, \bibinfo{person}{Jon Schneider}, {and} \bibinfo{person}{Joshua~R. Wang}.} \bibinfo{year}{2021}\natexlab{}.
\newblock \showarticletitle{Contracts under Moral Hazard and Adverse Selection}. In \bibinfo{booktitle}{\emph{Proceedings of the 22nd ACM Conference on Economics and Computation}} (Budapest, Hungary) \emph{(\bibinfo{series}{EC '21})}. \bibinfo{publisher}{Association for Computing Machinery}, \bibinfo{address}{New York, NY, USA}, \bibinfo{pages}{563–582}.
\newblock
\showISBNx{9781450385541}
\urldef\tempurl%
\url{https://doi.org/10.1145/3465456.3467637}
\showDOI{\tempurl}


\bibitem[Johnson(1987)]%
        {balanced-complete-bipartite-subgraph}
\bibfield{author}{\bibinfo{person}{David~S Johnson}.} \bibinfo{year}{1987}\natexlab{}.
\newblock \showarticletitle{The NP-completeness column: An ongoing guide}.
\newblock \bibinfo{journal}{\emph{Journal of Algorithms}} \bibinfo{volume}{8}, \bibinfo{number}{3} (\bibinfo{year}{1987}), \bibinfo{pages}{438--448}.
\newblock
\showISSN{0196-6774}
\urldef\tempurl%
\url{https://doi.org/10.1016/0196-6774(87)90021-6}
\showDOI{\tempurl}


\bibitem[Kamenica and Gentzkow(2011)]%
        {bayesian-persuasion}
\bibfield{author}{\bibinfo{person}{Emir Kamenica} {and} \bibinfo{person}{Matthew Gentzkow}.} \bibinfo{year}{2011}\natexlab{}.
\newblock \showarticletitle{Bayesian persuasion}.
\newblock \bibinfo{journal}{\emph{American Economic Review}} \bibinfo{volume}{101}, \bibinfo{number}{6} (\bibinfo{year}{2011}), \bibinfo{pages}{2590--2615}.
\newblock


\bibitem[Karp(1972)]%
        {Karp-Original-NP-Completeness}
\bibfield{author}{\bibinfo{person}{Richard~M. Karp}.} \bibinfo{year}{1972}\natexlab{}.
\newblock \bibinfo{booktitle}{\emph{Reducibility among Combinatorial Problems}}.
\newblock \bibinfo{publisher}{Springer US}, \bibinfo{address}{Boston, MA}, \bibinfo{pages}{85--103}.
\newblock
\showISBNx{978-1-4684-2001-2}
\urldef\tempurl%
\url{https://doi.org/10.1007/978-1-4684-2001-2_9}
\showDOI{\tempurl}


\bibitem[Kölle et~al\mbox{.}(2023)]%
        {kolle2023learning}
\bibfield{author}{\bibinfo{person}{Michael Kölle}, \bibinfo{person}{Tim Matheis}, \bibinfo{person}{Philipp Altmann}, {and} \bibinfo{person}{Kyrill Schmid.}} \bibinfo{year}{2023}\natexlab{}.
\newblock \showarticletitle{Learning to Participate Through Trading of Reward Shares}. In \bibinfo{booktitle}{\emph{Proceedings of the 15th International Conference on Agents and Artificial Intelligence - Volume 1: ICAART}}. INSTICC, \bibinfo{publisher}{SciTePress}, \bibinfo{pages}{355--362}.
\newblock
\showISBNx{978-989-758-623-1}
\showISSN{2184-433X}
\urldef\tempurl%
\url{https://doi.org/10.5220/0011781600003393}
\showDOI{\tempurl}


\bibitem[Lambert(1986)]%
        {Lambert1986}
\bibfield{author}{\bibinfo{person}{Richard~A. Lambert}.} \bibinfo{year}{1986}\natexlab{}.
\newblock \showarticletitle{Executive Effort and Selection of Risky Projects}.
\newblock \bibinfo{journal}{\emph{RAND J. Econ.}} \bibinfo{volume}{17}, \bibinfo{number}{1} (\bibinfo{year}{1986}), \bibinfo{pages}{77--88}.
\newblock


\bibitem[Letchford and Conitzer(2010)]%
        {Letchford2010}
\bibfield{author}{\bibinfo{person}{Joshua Letchford} {and} \bibinfo{person}{Vincent Conitzer}.} \bibinfo{year}{2010}\natexlab{}.
\newblock \showarticletitle{Computing Optimal Strategies to Commit to in Extensive-Form Games}. In \bibinfo{booktitle}{\emph{Proceedings of the 11th ACM Conference on Electronic Commerce}} (Cambridge, Massachusetts, USA) \emph{(\bibinfo{series}{EC '10})}. \bibinfo{publisher}{Association for Computing Machinery}, \bibinfo{address}{New York, NY, USA}, \bibinfo{pages}{83–92}.
\newblock
\showISBNx{9781605588223}
\urldef\tempurl%
\url{https://doi.org/10.1145/1807342.1807354}
\showDOI{\tempurl}


\bibitem[Letchford et~al\mbox{.}(2021)]%
        {Letchford2021}
\bibfield{author}{\bibinfo{person}{Joshua Letchford}, \bibinfo{person}{Liam MacDermed}, \bibinfo{person}{Vincent Conitzer}, \bibinfo{person}{Ronald Parr}, {and} \bibinfo{person}{Charles Isbell}.} \bibinfo{year}{2021}\natexlab{}.
\newblock \showarticletitle{Computing Optimal Strategies to Commit to in Stochastic Games}.
\newblock \bibinfo{journal}{\emph{Proceedings of the AAAI Conference on Artificial Intelligence}} \bibinfo{volume}{26}, \bibinfo{number}{1} (\bibinfo{date}{Sep.} \bibinfo{year}{2021}), \bibinfo{pages}{1380--1386}.
\newblock
\urldef\tempurl%
\url{https://doi.org/10.1609/aaai.v26i1.8267}
\showDOI{\tempurl}


\bibitem[Lupu and Precup(2020)]%
        {lupu2020gifting}
\bibfield{author}{\bibinfo{person}{Andrei Lupu} {and} \bibinfo{person}{Doina Precup}.} \bibinfo{year}{2020}\natexlab{}.
\newblock \showarticletitle{Gifting in Multi-Agent Reinforcement Learning}. In \bibinfo{booktitle}{\emph{Proceedings of the 19th International Conference on Autonomous Agents and MultiAgent Systems}} (Auckland, New Zealand) \emph{(\bibinfo{series}{AAMAS '20})}. \bibinfo{publisher}{International Foundation for Autonomous Agents and Multiagent Systems}, \bibinfo{address}{Richland, SC}, \bibinfo{pages}{789–797}.
\newblock
\showISBNx{9781450375184}


\bibitem[Malcomson(2009)]%
        {Malcomson2009}
\bibfield{author}{\bibinfo{person}{James~M Malcomson}.} \bibinfo{year}{2009}\natexlab{}.
\newblock \showarticletitle{Principal and Expert Agent}.
\newblock \bibinfo{journal}{\emph{The B.E. Journal of Theoretical Economics}} \bibinfo{volume}{9}, \bibinfo{number}{1} (\bibinfo{year}{2009}).
\newblock
\urldef\tempurl%
\url{https://doi.org/10.2202/1935-1704.1528}
\showDOI{\tempurl}


\bibitem[Monderer and Tennenholtz(2004)]%
        {k-implementation}
\bibfield{author}{\bibinfo{person}{Dov Monderer} {and} \bibinfo{person}{Moshe Tennenholtz}.} \bibinfo{year}{2004}\natexlab{}.
\newblock \showarticletitle{k-Implementation}.
\newblock \bibinfo{journal}{\emph{Journal of Artificial Intelligence Research}}  \bibinfo{volume}{21} (\bibinfo{year}{2004}), \bibinfo{pages}{37--62}.
\newblock


\bibitem[Moulin(1976)]%
        {moulin1976_money_burning}
\bibfield{author}{\bibinfo{person}{Herv{\'e} Moulin}.} \bibinfo{year}{1976}\natexlab{}.
\newblock \showarticletitle{Cooperation in mixed equilibrium}.
\newblock \bibinfo{journal}{\emph{Mathematics of Operations Research}} \bibinfo{volume}{1}, \bibinfo{number}{3} (\bibinfo{year}{1976}), \bibinfo{pages}{273--286}.
\newblock


\bibitem[Oesterheld and Conitzer(2020)]%
        {OesConAcquisition}
\bibfield{author}{\bibinfo{person}{Caspar Oesterheld} {and} \bibinfo{person}{Vincent Conitzer}.} \bibinfo{year}{2020}\natexlab{}.
\newblock \showarticletitle{Minimum-regret contracts for principal-expert problems}.
\newblock In \bibinfo{booktitle}{\emph{Proceedings of the 16th Conference on Web and Internet Economics (WINE)}}.
\newblock


\bibitem[Sodomka et~al\mbox{.}(2013)]%
        {coco-Q}
\bibfield{author}{\bibinfo{person}{Eric Sodomka}, \bibinfo{person}{Elizabeth Hilliard}, \bibinfo{person}{Michael Littman}, {and} \bibinfo{person}{Amy Greenwald}.} \bibinfo{year}{2013}\natexlab{}.
\newblock \showarticletitle{Coco-Q: Learning in Stochastic Games with Side Payments}. In \bibinfo{booktitle}{\emph{Proceedings of the 30th International Conference on Machine Learning}} \emph{(\bibinfo{series}{Proceedings of Machine Learning Research}, Vol.~\bibinfo{volume}{28})}, \bibfield{editor}{\bibinfo{person}{Sanjoy Dasgupta} {and} \bibinfo{person}{David McAllester}} (Eds.). \bibinfo{publisher}{PMLR}, \bibinfo{address}{Atlanta, Georgia, USA}, \bibinfo{pages}{1471--1479}.
\newblock
\urldef\tempurl%
\url{https://proceedings.mlr.press/v28/sodomka13.html}
\showURL{%
\tempurl}


\bibitem[Souza and Rêgo(2012)]%
        {money-burning-2x2}
\bibfield{author}{\bibinfo{person}{Filipe Souza} {and} \bibinfo{person}{Leandro Rêgo}.} \bibinfo{year}{2012}\natexlab{}.
\newblock \bibinfo{booktitle}{\emph{Mixed Equilibrium: When Burning Money is Rational}}.
\newblock \bibinfo{type}{MPRA Paper}. \bibinfo{institution}{University Library of Munich, Germany}.
\newblock
\urldef\tempurl%
\url{https://EconPapers.repec.org/RePEc:pra:mprapa:43410}
\showURL{%
\tempurl}


\bibitem[Stoughton(1993)]%
        {Stoughton1993}
\bibfield{author}{\bibinfo{person}{Neal~M. Stoughton}.} \bibinfo{year}{1993}\natexlab{}.
\newblock \showarticletitle{Moral Hazard and the Portfolio Management Problem}.
\newblock \bibinfo{journal}{\emph{J. Finance}} \bibinfo{volume}{48}, \bibinfo{number}{5} (\bibinfo{date}{12} \bibinfo{year}{1993}), \bibinfo{pages}{2009--2028}.
\newblock


\bibitem[von Neumann and Morgenstern(1947)]%
        {von-VNM-zero-sum}
\bibfield{author}{\bibinfo{person}{John von Neumann} {and} \bibinfo{person}{Oskar Morgenstern}.} \bibinfo{year}{1947}\natexlab{}.
\newblock \showarticletitle{Theory of games and economic behavior, 2nd rev}.
\newblock  (\bibinfo{year}{1947}).
\newblock


\bibitem[{von Stackelberg}(1934)]%
        {Stackelberg1934}
\bibfield{author}{\bibinfo{person}{Heinrich {von Stackelberg}}.} \bibinfo{year}{1934}\natexlab{}.
\newblock \bibinfo{booktitle}{\emph{Marktform und Gleichgewicht}}.
\newblock \bibinfo{publisher}{Springer}.
\newblock


\bibitem[von Stengel and Zamir(2004)]%
        {vonStengel-mixed-commitment}
\bibfield{author}{\bibinfo{person}{Bernhard von Stengel} {and} \bibinfo{person}{Shmuel Zamir}.} \bibinfo{year}{2004}\natexlab{}.
\newblock \bibinfo{booktitle}{\emph{Leadership with commitment to mixed strategies}}.
\newblock \bibinfo{type}{{T}echnical {R}eport}. \bibinfo{institution}{Citeseer}.
\newblock


\bibitem[{von Stengel} and Zamir(2010)]%
        {Stengel10:Leadership}
\bibfield{author}{\bibinfo{person}{Bernhard {von Stengel}} {and} \bibinfo{person}{Shmuel Zamir}.} \bibinfo{year}{2010}\natexlab{}.
\newblock \showarticletitle{Leadership Games with Convex Strategy Sets}.
\newblock \bibinfo{journal}{\emph{Games and Economic Behavior}}  \bibinfo{volume}{69} (\bibinfo{year}{2010}), \bibinfo{pages}{446--457}.
\newblock


\bibitem[Wang et~al\mbox{.}(2021)]%
        {Wang2021}
\bibfield{author}{\bibinfo{person}{Woodrow~Z. Wang}, \bibinfo{person}{Mark Beliaev}, \bibinfo{person}{Erdem Bıyık}, \bibinfo{person}{Daniel~A. Lazar}, \bibinfo{person}{Ramtin Pedarsani}, {and} \bibinfo{person}{Dorsa Sadigh}.} \bibinfo{year}{2021}\natexlab{}.
\newblock \showarticletitle{Emergent Prosociality in Multi-Agent Games Through Gifting}. In \bibinfo{booktitle}{\emph{Proceedings of the Thirtieth International Joint Conference on Artificial Intelligence, {IJCAI-21}}}, \bibfield{editor}{\bibinfo{person}{Zhi-Hua Zhou}} (Ed.). \bibinfo{publisher}{International Joint Conferences on Artificial Intelligence Organization}, \bibinfo{pages}{434--442}.
\newblock
\urldef\tempurl%
\url{https://doi.org/10.24963/ijcai.2021/61}
\showDOI{\tempurl}
\newblock
\shownote{Main Track}.


\bibitem[Willis and Luck(2023)]%
        {Willis2023}
\bibfield{author}{\bibinfo{person}{Richard Willis} {and} \bibinfo{person}{Michael Luck}.} \bibinfo{year}{2023}\natexlab{}.
\newblock \showarticletitle{Resolving social dilemmas through reward transfer commitments}. In \bibinfo{booktitle}{\emph{Adaptive and Learning Agents Workshop}}.
\newblock
\urldef\tempurl%
\url{https://kclpure.kcl.ac.uk/ws/portalfiles/portal/207013371/ALA2023_paper_65.pdf}
\showURL{%
\tempurl}


\bibitem[Yi et~al\mbox{.}(2022)]%
        {yi2022learning}
\bibfield{author}{\bibinfo{person}{Yuxuan Yi}, \bibinfo{person}{Ge Li}, \bibinfo{person}{Yaowei Wang}, {and} \bibinfo{person}{Zongqing Lu}.} \bibinfo{year}{2022}\natexlab{}.
\newblock \showarticletitle{Learning to Share in Networked Multi-Agent Reinforcement Learning}. In \bibinfo{booktitle}{\emph{Advances in Neural Information Processing Systems}}, \bibfield{editor}{\bibinfo{person}{S.~Koyejo}, \bibinfo{person}{S.~Mohamed}, \bibinfo{person}{A.~Agarwal}, \bibinfo{person}{D.~Belgrave}, \bibinfo{person}{K.~Cho}, {and} \bibinfo{person}{A.~Oh}} (Eds.), Vol.~\bibinfo{volume}{35}. \bibinfo{publisher}{Curran Associates, Inc.}, \bibinfo{pages}{15119--15131}.
\newblock
\urldef\tempurl%
\url{https://proceedings.neurips.cc/paper_files/paper/2022/file/61d8577984e4ef0cba20966eb3ef2ed8-Paper-Conference.pdf}
\showURL{%
\tempurl}


\bibitem[Zhang and Parkes(2008)]%
        {policy-teaching}
\bibfield{author}{\bibinfo{person}{Haoqi Zhang} {and} \bibinfo{person}{David Parkes}.} \bibinfo{year}{2008}\natexlab{}.
\newblock \showarticletitle{Value-based policy teaching with active indirect elicitation}. In \bibinfo{booktitle}{\emph{Proceedings of the 23rd National Conference on Artificial Intelligence - Volume 1}} (Chicago, Illinois) \emph{(\bibinfo{series}{AAAI'08})}. \bibinfo{publisher}{AAAI Press}, \bibinfo{pages}{208–214}.
\newblock
\showISBNx{9781577353683}


\bibitem[Zhang et~al\mbox{.}(2020)]%
        {adaptive-reward-poisoning-RL}
\bibfield{author}{\bibinfo{person}{Xuezhou Zhang}, \bibinfo{person}{Yuzhe Ma}, \bibinfo{person}{Adish Singla}, {and} \bibinfo{person}{Xiaojin Zhu}.} \bibinfo{year}{2020}\natexlab{}.
\newblock \showarticletitle{Adaptive Reward-Poisoning Attacks against Reinforcement Learning}. In \bibinfo{booktitle}{\emph{Proceedings of the 37th International Conference on Machine Learning}} \emph{(\bibinfo{series}{Proceedings of Machine Learning Research}, Vol.~\bibinfo{volume}{119})}, \bibfield{editor}{\bibinfo{person}{Hal~Daumé III} {and} \bibinfo{person}{Aarti Singh}} (Eds.). \bibinfo{publisher}{PMLR}, \bibinfo{pages}{11225--11234}.
\newblock
\urldef\tempurl%
\url{https://proceedings.mlr.press/v119/zhang20u.html}
\showURL{%
\tempurl}


\end{thebibliography}

\clearpage
\appendix

\section{Deferred Proofs of Utility Claims}\label{appendix:utiliity-implications}
\begin{table}[b]
\begin{tabular}{cccc}
                            & Left                      & Middle                          & Right                     \\ \cline{2-4} 
\multicolumn{1}{c|}{Top}    & \multicolumn{1}{c|}{$-1$, $0$} & \multicolumn{1}{c|}{$-1$, $-2$} & \multicolumn{1}{c|}{$-1$, $0$} \\ \cline{2-4} 
\multicolumn{1}{c|}{Bottom} & \multicolumn{1}{c|}{$\hphantom{-}2$, $0$} & \multicolumn{1}{c|}{$0$, $2$}       & \multicolumn{1}{c|}{$\hphantom{-}0$, $1$}  \\ \cline{2-4} 
\end{tabular}%
\caption*{Table 1: A game in which commitment to strategies and payments is beneficial to the leader.}
\label{ex:intro-appendix-copy}
\end{table}

\begin{proposition}
\label{lemma:analysis-intro-ex}
    In the game in Table 1, the leader can achieve utility $\nicefrac{1}{3}$ via commitment to a mixture over actions and payments. 
    However, she cannot achieve strictly positive expected utility either by committing to actions only or in any Nash equilibrium of the game induced by a commitment to payments only. 
\end{proposition}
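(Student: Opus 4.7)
The plan is to verify each of the three claims separately by direct computation and case analysis.

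For the achievability claim, I would simply verify the commitment described in the introduction. With $\sigma_1 = (\nicefrac{1}{3}, \nicefrac{2}{3})$ over (Top, Bottom) and $P(B,L) = 1$, a direct computation shows the follower is indifferent between all three actions (each yielding expected utility $\nicefrac{2}{3}$), so the leader's favored tiebreaking action (Left) is played, yielding leader utility $\nicefrac{1}{3}\cdot(-1) + \nicefrac{2}{3}\cdot(2-1) = \nicefrac{1}{3}$.

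For the actions-only claim, I would parameterize the leader's mixture as $(p, 1-p)$ over (Top, Bottom) and enumerate the follower's best responses. The follower's utilities are $0$ for Left, $2-4p$ for Middle, and $1-p$ for Right. Then (i) Left is a best response only when $p = 1$, yielding leader utility $-1$; (ii) Middle is a best response when $p \leq \nicefrac{1}{3}$, yielding leader utility $-p \leq 0$; (iii) Right is a best response when $p \geq \nicefrac{1}{3}$, yielding leader utility $-p \leq -\nicefrac{1}{3}$. The maximum is $0$, achieved at $p = 0$, so strictly positive utility is unattainable.

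For the payments-only claim, the idea is to case-split on the leader's equilibrium strategy in the game induced by $P$. Let $\sigma$ denote the follower's distribution over (L, M, R). If the leader plays pure Top, her utility is $-1 - \sum_c \sigma_c P_2(T,c) \leq -1$. If the leader plays pure Bottom and Left is in the follower's support, the incentive constraint $P_2(B,L) \geq 2 + P_2(B,M) \geq 2$ forces leader utility $\leq 2\sigma_L - P_2(B,L)\sigma_L \leq 0$; if Left is not in the support, the leader receives $0 - \sum_c \sigma_c P_2(B,c) \leq 0$. If the leader mixes, indifference between Top and Bottom forces her expected payoff to equal her payoff from playing pure Top, which is $\leq -1$. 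In every case the leader's Nash equilibrium utility is $\leq 0$.

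The main obstacle is the third part: the payments-only case requires care because the induced game can have many Nash equilibria depending on $P$, and one must verify that across all pure-Bottom, pure-Top, and genuinely mixed leader strategies, no payment scheme yields a Nash equilibrium with strictly positive leader payoff. The key observation that unlocks this is that whenever the leader randomizes, her indifference condition lets us evaluate her expected payoff along the pure Top branch, which is unconditionally bounded above by $-1$ because payments are non-negative.
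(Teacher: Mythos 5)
Your proposal is correct and takes essentially the same approach as the paper: the same verification of the $(\nicefrac{1}{3},\nicefrac{2}{3})$ commitment with $P(B,L)=1$, the same best-response enumeration for the actions-only case, and the same key observation for the payments-only case that Top yields at most $-1$ and pure Bottom cannot incentivize Left without a payment of at least $2$. Your version is merely a bit more explicit (full case split on the leader's equilibrium strategy, and listing all three follower best-response regions), but the underlying argument is identical.
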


\begin{proof}
    First, observe (as was shown in the introduction) that the leader can achieve a utility of $\nicefrac{1}{3}$ by committing to play the mixture $\left(\nicefrac{1}{3}, \nicefrac{2}{3}\right)$ over (Top, Bottom) and to the payment function $P(a) = \begin{cases}
        1 & \text{ if } a = (B, L) \\
        0 & \text{ otherwise}
    \end{cases}$.
    Then, the follower's expected utility for playing Right is %
    $\nicefrac{2}{3}$%
    , as is his utility for playing Left since the their payoffs after payments are the same.
    His utility for playing Middle is $2\cdot{}\nicefrac{2}{3} -2\cdot{}\nicefrac{1}{3} = \nicefrac{2}{3}$. 
    Hence, the follower is indifferent between each of his actions and tie-breaks in favor of the leader, playing Left for a leader payoff of $-1\cdot{}\nicefrac{1}{3}+(2-1)\cdot{}\nicefrac{2}{3} = \nicefrac{1}{3}$.

    Next, we claim that with payments alone, the leader cannot achieve a strictly positive expected utility in any Nash equilibrium of a game $G[P]$ induced by a commitment to payments.
    ($G[P]$ is the normal-form game with the same action space as $G$ and with utilities $v_1 = u_1 - P$, $v_2 = u_2-P$, where $-$ is pointwise and $u$ is the utilities in $G$. 
    
    First observe that the leader can only achieve strictly positive utility if the outcome (Bottom, Left) is played. 
    Therefore, she must incentivize the follower to play Left with positive probability. 
    Notice that if the leader always played Bottom, then the follower would prefer Middle to Left unless $P(B, L) \geq 2$, which would leave the leader unable to achieve positive utility. 
    Therefore, the leader must play Top with nonzero probability, and hence Top must be a best response to the follower's strategy in $G[P]$. 
    But since playing Top always results in utility $-1$ for the leader, the leader cannot achieve positive utility at all.     

    Finally, we claim that with commitment to actions alone, the leader cannot achieve a strictly positive expected utility. 
    Again, the leader can only achieve strictly positive utility if the follower plays Left.
    The follower strictly prefers playing Right to Left unless Top is played with probability $1$, so the leader must do so. 
    However, if she always plays Top, the leader cannot get positive utility regardless of P2's action.
\end{proof}

Next, we prove in more detail the two claims made in \Cref{sec:NE} about when commitment benefits the leader. %
First, we show that making an optimal commitment to payments and mixtures over actions always benefits the leader relative to playing any Nash equilibrium of the base game. 
In contrast, optimal commitment to payments and a pure action benefits the leader in some games and hurts her in others. 

\begin{proposition}
    In any game $G$, the leader's expected utility under her optimal commitment to a payment function and mixture over actions is weakly higher than her expected utility under any Nash equilibrium of the base game, and there exist games where this inequality is strict.
\end{proposition}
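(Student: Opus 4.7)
The plan is to exhibit a single commitment that matches any given Nash equilibrium payoff for the leader, and then to invoke the game of \Cref{ex:intro} to witness the strict separation. Concretely, for any Nash equilibrium $\sigma = (\sigma_1, \sigma_{-1})$ of $G$, I will show that the commitment $(\sigma_1, P \equiv 0)$---the leader playing her NE mixture with the identically zero payment function---already yields her at least $u_1(\sigma)$ in expectation, without using any payments at all.

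In the two-player case, every pure $a_2 \in \supp(\sigma_2)$ is a best response to $\sigma_1$ by the NE condition, so the leader-favoring tiebreak selects some best response $a_2^*$ maximizing $u_1(\sigma_1, \cdot)$ over the set of best responses. Since $u_1(\sigma_1, \sigma_2) = \sum_{a_2 \in \supp(\sigma_2)} \sigma_2(a_2)\, u_1(\sigma_1, a_2)$ is a convex combination of leader payoffs across best responses, some best response attains at least this average, and hence $u_1(\sigma_1, a_2^*) \geq u_1(\sigma)$. For the single-commitment case with $n \geq 3$, I will verify that $\sigma_{-1}$ remains a Nash equilibrium of the induced game $G[\sigma_1, P \equiv 0]$: averaging the NE inequality $u_i(\sigma_i, \sigma_{-i}) \geq u_i(a_i', \sigma_{-i})$ over $a_1 \sim \sigma_1$ yields precisely the best-response condition for follower $i$ in the induced game. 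Because the followers are assumed to play the leader-optimal NE of the induced game, the leader secures at least $u_1(\sigma_1, \sigma_{-1}) = u_1(\sigma)$.

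For the strict inequality, I will simply invoke \Cref{lemma:analysis-intro-ex}, which already shows that in the game of \Cref{ex:intro} the leader's optimal commitment to mixtures and payments yields utility $\nicefrac{1}{3}$, whereas the unique base-game Nash equilibrium (Bottom, Middle) gives her $0$. The only mildly subtle step in the whole argument is the averaging observation that reduces a comparison against a mixed NE payoff to the existence of a pure best response attaining at least that average; the rest is bookkeeping around the best-response conditions and the leader-favoring tiebreaking convention.
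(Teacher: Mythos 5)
Your proposal is correct and follows essentially the same route as the paper's proof: commit to the NE mixture with the zero payment function, observe that the support of the follower's NE strategy consists of best responses, and use leader-favoring tiebreaking plus the max-over-support-exceeds-average observation to conclude, with the Table 1 game witnessing strictness. Your additional verification that $\sigma_{-1}$ remains a Nash equilibrium of the induced game in the $n \geq 3$ single-commitment case is a harmless (and slightly more thorough) extension of the same argument.
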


    Note that regarding the first part of the claim, \cite{vonStengel-mixed-commitment} and later \cite{commitment-to-correlated-strats} have shown the stronger result that the leader's utility under optimal commitment to a mixture over actions is at least her utility in any \textit{correlated } equilibrium of the base game.     
    The second part of the claim has already been demonstrated for commitment to mixtures without payments in \cite{conitzer_optimal_strategy_to_commit}.

\begin{proof}
    For the first part of the claim, consider any game $G$ and
    suppose the leader can achieve a utility of $u^*$ in some Nash equilibrium $\sigma$ of $G$.  
    Then we claim that the leader can achieve a utility of at least $u^*$ by committing to the mixture $\sigma_1$ and the zero payment function $P$.
    Note that any action supported in $\sigma_{2}$ is optimal for the follower in $G[\sigma_1,P]$ because $\sigma$ is a Nash equilibrium in $G$, ie $\supp(\sigma_2) \subseteq \BR(\sigma_1)$. (Where $\BR(\cdot)$ denotes the set of best responses.)
    Since the follower tiebreaks among his best responses in favor of the leader, the leader's utility is $$\max_{a_2 \in \BR(\sigma_1)} u_1(\sigma_1, a_2) \geq \max_{a_2 \in \supp(\sigma_2)} u_1(\sigma_1, a_2) \geq u_1(\sigma_1, \sigma_2) = u^*,$$ as desired.

    The game in Table 1 demonstrates the second part of the claim.
    As the analysis in \Cref{lemma:analysis-intro-ex} shows, the leader can achieve a utility of $\nicefrac{1}{3}$ by a commitment to a mixture over actions and a payment function. 
    However, the only Nash equilibrium of the base game is (Bottom, Middle), which would give the leader utility $0$. 
    This can be verified by observing that Bottom is a strictly dominant strategy for Player $1$ and Middle is Player $2$'s unique best response. 
\end{proof}

In contrast, if she doesn't know the game to be played, a leader can't say ahead of time whether she'd prefer to just play the base game simultaneously or whether she'd prefer to commit ahead of time to a payment function and pure action.

\begin{table}[b]
\begin{tabular}{lll}
                            & Left                          & Right                         \\ \cline{2-3} 
\multicolumn{1}{l|}{Top}    & \multicolumn{1}{l|}{$3$, $0$} & \multicolumn{1}{l|}{$1$, $1$}  \\ \cline{2-3} 
\multicolumn{1}{l|}{Bottom} & \multicolumn{1}{l|}{$2$, $1$} & \multicolumn{1}{l|}{$0$, $0$} \\ \cline{2-3} 
\end{tabular}%
\caption{A game in which commitment to payments and pure actions benefits the leader.}
\label{ex:commit-to-dominated-strat}
\end{table}

\begin{table}[b]
\begin{tabular}{lll}
                           & Heads                          & Tails                          \\ \cline{2-3} 
\multicolumn{1}{l|}{Heads} & \multicolumn{1}{l|}{$1$, $-1$} & \multicolumn{1}{l|}{$-1$, $1$}  \\ \cline{2-3} 
\multicolumn{1}{l|}{Tails} & \multicolumn{1}{l|}{$-1$, $1$}  & \multicolumn{1}{l|}{$1$, $-1$} \\ \cline{2-3} 
\end{tabular}%
\caption{Matching Pennies, a game in which commitment to payments and pure actions hurts the leader.}
\label{ex:matching-pennies}
\end{table}

\begin{proposition}
    There exist games for which the leader's expected utility under her optimal commitment to a payment function and a pure action is strictly higher than her expected utility in any Nash equilibrium of $G$. 
    There also exist games for which the leader's expected utility under her optimal commit to a payment function and a pure action is strictly lower than in the worst (for her) Nash equilibrium of the base game. 
\end{proposition}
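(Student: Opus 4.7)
The plan is to prove the proposition by exhibiting two specific games, one for each claim. The natural candidates are already on the page: the game in Table~\ref{ex:commit-to-dominated-strat} will witness the first claim (commitment helps), and Matching Pennies (Table~\ref{ex:matching-pennies}) will witness the second (commitment hurts). Because the proposition is an existence statement, the whole argument reduces to carefully analyzing these two small games; no general theory is needed.

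For the first game, I would first enumerate the Nash equilibria of the base game. The key observation is that Top strictly dominates Bottom for the leader (row-wise $3>2$ and $1>0$), so in any NE the leader plays Top with probability $1$; given this, the follower strictly prefers Right to Left ($1>0$), so $(\text{Top},\text{Right})$ is the unique NE, yielding leader utility $1$. Then I exhibit the pure commitment $(\text{Bottom},P\equiv 0)$: once the leader commits to Bottom, the follower's utilities become $1$ for Left and $0$ for Right, so he strictly prefers Left, giving the leader utility $2 > 1$. Since the leader has a pure commitment that strictly beats the NE value, the claim follows.

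For the second game (Matching Pennies), a standard calculation shows the unique NE is the fully mixed one in which both players randomize $(1/2,1/2)$, giving the leader expected utility $0$ (this is therefore also the worst, and only, NE value for her). Now I check every candidate pure commitment $(a_1,P)$. By symmetry, suppose the leader commits to Heads. Without payments the follower strictly prefers Tails ($1 > -1$), producing $(\text{Heads},\text{Tails})$ and leader utility $-1$. To flip the follower to Heads, the leader must pay at least $2$ so that the follower's payoff for Heads becomes $-1+P \ge 1$; then her own utility is at most $1-2 = -1$. Paying on the off-equilibrium outcome cannot help since it only decreases the leader's realized payoff. An identical argument applies when the leader commits to Tails. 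Thus every pure commitment gives leader utility at most $-1 < 0$.

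The whole proof is essentially a careful case analysis, so the only real obstacle is making sure that both the equilibrium enumeration (including ruling out partially mixed equilibria in the first game) and the enumeration of candidate pure commitments (including payments on off-equilibrium outcomes in Matching Pennies) are genuinely exhaustive. Neither step is technically hard, but both require explicit verification rather than hand-waving, so I would write them out in full to avoid overlooking an asymmetric mixed equilibrium or a clever payment scheme.
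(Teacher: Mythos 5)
Your proposal is correct and uses the same two witness games and essentially the same arguments as the paper. The only cosmetic difference is in Matching Pennies: you bound the leader's payoff by explicitly computing the minimum payment ($2$) needed to flip the follower, whereas the paper notes that the game remains zero-sum after payments so the follower's guaranteed $+1$ forces the leader to $-1$; both are valid and equally short.
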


\begin{proof}
    To see the first part of the claim, consider the game $G$ in \Cref{ex:commit-to-dominated-strat}.
    Note that in $G$, Top is a strictly dominant strategy for the leader, and hence the only Nash equilibrium of the base game is (Top, Right), which results in a leader utility of $1$.
    However, if the leader commits to playing Bottom and making no payments, the follower's best response is Left, which results in a leader utility of $2$. 

    The second part of the claim can be seen by considering a zero sum game $G$ such as matching pennies (\Cref{ex:matching-pennies}).
    In this game, the only Nash equilibrium is for both players to mix uniformly, achieving an expected payoff of $0$. 
    If forced to commit a pure action, however, the leader can only achieve a utility of $-1$ regardless of her payments. 
    She must commit to playing either Heads or Tails (they are symmetric), which allows the follower to achieve a utility of at least $1$ by playing the opposite action, regardless of the payments. 
    Since the game (including payments) is zero sum, this means the leader cannot get a utility more than $-1$. 
\end{proof}

\section{Deferred Proofs from Section~\ref{sec:NE}}\label{appendix:NE-proofs}

\subsection{Two Players}

\thmTwoPlayerPureCommitmentEasy*
\begin{proof}
    Consider an outcome $(a_1, a_2)$. 
    To implement this, the leader must commit to $a_1$ and to payments such that $v_2(a_1 a_2) = u_2(a_1, a_2) + P(a_1, a_2) \geq u_2(a_1, a_2') + P(a_1, a_2')$ for any $a_2'$. 
    Hence, the leader must pay at least $\max_{a_2'} u_2(a_1, a_2') - u_2(a_1, a_2)$ to incentivize the follower to play $a_2$. The payment function $P$ which pays exactly that amount when $(a_1 a_2)$ obtains and $0$ otherwise is sufficient to implement $a$. 
    Therefore, the leader's optimal utility when implementing $(a_1, a_2)$ is $v_1^*(a_1, a_2) \coloneqq u_1(a_1, a_2) - [\max_{a_2'} u_2(a_1, a_2') - u_2(a_1, a_2)]$.

    Therefore, to compute the leader's maximum achievable utility and a commitment achieving it by, for each $a_1$, we can compute $\max_{a_2'} u_2(a_1 a_2')$ and then $v_1^*(a_1, a_2)$ for each $a_2$. 
    Simply maximizing over all outcomes gives the solution in linear time. 
    Note also that any correct algorithm must look at each entry of the payoff matrix at least once and hence run in $\Omega(|A|)$, so this algorithm is optimal.
\end{proof}

\thmTwoPlayerMixedCommitmentEasy*
\begin{proof}
    Note that we can assume without loss of generality that the follower plays a pure action: 
    Since the leader has already committed, any mixture over best responses (after tiebreaking in favor of the leader's utility) is optimal, and assuming the follower chooses a pure action makes no difference to either player's utility. 
    
    Our approach is to compute, for a given follower action $b$, the optimal leader commitment for which $b$ is a best response. 
    This gives us the maximum utility the leader can achieve while implementing $b$.
    We then simply maximize over the $|B|$ actions for the follower to find an optimal leader's commitment and its resulting utility.

    We claim that the following linear program computes a leader's optimal commitment that incentivizes an action $b$ as a best response. 
    Note that it has $O(|A|)$ variables and $O(|A|+|B|)$ constraints. 
    Since linear programs can be solved in time polynomial in the number of variables and constraints, the correctness of the linear program suffices to prove the theorem. 

\begin{align*}
    &\text{Maximize} \quad \sum_{a \in A} p_a  u_1(a, b) - P \\
    &\text{Subject to:} \\
    &P \geq 0 \\
    &\sum_{a \in A} p_a = 1 \\
    &p_a \geq 0 \hspace{102pt}\text{for all} \quad a \in A  \\
    &\sum_{a \in A} p_a u_2(a, b) + P \geq \sum_{a \in A} p_a u_2(a, b') \hspace{5pt} \text{for all } b' \in B\setminus \{b\}
\end{align*}

    Variables $p_i$ represent the probability with which the leader plays each action $a_i$, and $P$ represents the payment from the leader to follower when the follower plays $b$. 
    The utilities $u_i(\cdot)$ and action sets $A$ and $B$ are parameters of the problem instance and so are constants from the perspective of the LP. 

    By \Cref{lemma:n=2_mixed_leader-types_payments-only-depend-on-follower-action} (which applies to a generalization of the present setting), it suffices to consider payment functions that depend only on the follower's action. 
    Since the leader seeks to incentivize action $b$, there is no benefit to making non-zero payments when the follower plays any other action $b'\neq b$. 
    Hence, the leader's choice of payment function amounts to choosing $P(b)$, which is what variable $P$ represents.

    The objective is the leader's expected utility assuming $b$ is played. 
    The first constraint ensures that the payment amount is non-negative.
    The next two sets of constraints ensure that the $p_i$ correspond to a valid mixture over actions and the final set of constraints ensures that $b$ is a best response to the leader's commitment.\footnote{Technically, this only ensures that $b$ is a best response before tie-breaking in favor of the leader.
    However, if another action $b'$ resulted in equal utility for the follower and higher utility for the leader, it would be possible to incentivize $b'$ over $b$ with the mixture \{$p_i$\} as in the current LP and a zero payment function. 
    The LP finding the optimal commitment incentivizing $b'$ optimizes over a region including that commitment, and would find a commitment at least as good. 
    Hence, our approach is still correct. 
    }
    These constraints are clearly necessary and sufficient for the $\{p_a\}$ and $P$ to correspond to a valid commitment incentivizing $b$. 

    This proof adapts the approach of \citet{conitzer_optimal_strategy_to_commit} for computing optimal commitments in the version of the setting without payments.
\end{proof}

\subsection{More than Two Players, Single Commitment}

\thmThreePlayerSingleCommitmentHardness*
\begin{proof}
We reduce from \textsc{Balanced Complete Bipartite Subgraph} in a way that's similar to but simpler than the reduction from \textsc{CLIQUE} to \textsc{Best Nash} in Section 3.1 of Gilboa and Zemel (1981) 
\cite{gilboa1989nash-complexity}.
Let $G=(V,E)$ be a bipartite graph with partite sets $V_1$ and $V_2$. %
Then we define the following three-player normal-form game. 
Player 1 has only one action. Because of this, we'll leave Player 1's actions out of strategy profiles. 
Player 2's action space is  $\{C\} \times V_1 \cup \{E\} \times V_2$ and Player 3's action space is $\{C\} \times V_2 \cup \{E\} \times V_1$ (where $C$ stands for ``cooperate'' and $E$ stands for ``exploit'').

The utilities are as follows:
\begin{itemize}
    \item If nodes $v_1\in V_1$ and $v_2\in V_2$ are adjacent, then
    \begin{eqnarray*}
    u_1((C,v_1),(C,v_2)) &=& u_2((C,v_1),(C,v_2))\\
    &=& u_3((C,v_1),(C,v_2))\\
    &=&1.
    \end{eqnarray*}
    \item If nodes $v_1\in V_1$ and $v_2\in V_2$ are \textit{not} adjacent, then
    \begin{eqnarray*}
    u_1((C,v_1),(C,v_2)) &=& u_2((C,v_1),(C,v_2))\\
    &=& u_3((C,v_1),(C,v_2))\\
    &=& 0.
    \end{eqnarray*}
    \item All other utilities of Player 1 are $0$. 
    \item For $v_2\in V_2$, $u_{2/3}((E,v_2),(C,v_2))=(k,-k-1)$.%
    \item For $v_1\in V_1$, $u_{2/3}((C,v_1), (E,v_1))=(-k-1,k)$.
    \item For $v_1\neq v_1'$ and $v_2\neq v_2'$, $$u_{2/3}((E,v_2'),(C,v_2))= u_{2/3}((C,v_1),(E,v_1'))=(0,0).$$
    \item For any $v_1\in V_1,v_2\in V_2$,
    $$u_{2/3}((E,v_2),(E,v_1))=(0,0).$$
\end{itemize}
So, intuitively, Player $2$ and Player $3$ can either play ``cooperatively'' by playing adjacent nodes in $V_1$ and $V_2$, respectively, or they can try to play exploitatively by trying to guess the other player's node.
If both were guaranteed to play cooperatively, the game would be a coordination game in which all players benefit when adjacent vertices are played.
However, since each player needs to avoid being exploited, neither player can play any of the cooperative actions with probability greater than $1/k$.
The payoff matrix of the game is given in \Cref{table:reduction-n-greater-2-single}.

\begin{table}[t]
	\begin{center}
    \setlength{\extrarowheight}{2pt}
    \begin{tabular}{cc|C{2cm}|C{2cm}|}
      & \multicolumn{1}{c}{} & \multicolumn{2}{c}{P3}\\
      & \multicolumn{1}{c}{}  & \multicolumn{1}{c}{$(C,v_2)$} & \multicolumn{1}{c}{$(E, v_1')$} \\\cline{3-4}
      \multirow{4}*{P2} & $(C,v_1)$ & $(1,1,1)$ if $(v_1,v_2)\in E$; else $(0,0,0)$ & $(0,-k-1,k)$ if $v_1'=v_1$; else $(0,0,0)$ \\\cline{3-4}
      & $(E,v_2')$ & $(0,k,-k-1)$ if $v_2'=v_2$; else $(0,0,0)$ & $0,0,0$  \\\cline{3-4}
    \end{tabular}
    \end{center}
    \caption{The game constructed for the proof of \Cref{thm:3-player-single-commitment-hardness}. The utilities of all three players are given in order.}
    \label{table:reduction-n-greater-2-single}
  \end{table}
  
Now we show that the following two statements are equivalent:
\begin{itemize}
    \item There's a balanced complete bipartite subgraph of size $k$ in the graph $G$.
    \item Player 1 has a commitment to payments s.t.\ Player 1's utility is at least $1$ in some Nash equilibrium of the induced game between Player 2 and Player 3.
\end{itemize}

$\Rightarrow$: Suppose $V_1'\subseteq V_1,V_2'\subseteq V_2$ is a balanced complete bipartite subgraph of size $k$. It is easy to see that if Player 1 makes no payments, uniform mixing by Player $2$ and Player $3$ over $\{(C,v_i)\mid v_i\in V_{i}'\}$ is a Nash equilibrium and that all players receive a utility of $1$ in this Nash equilibrium. (Deviating to some $(E,v_i)$ for $v_i \in V_i'$ yields a utility of exactly $1$ for the deviating player.)

$\Leftarrow$: Let $P_1$ be payments and $\sigma_2,\sigma_3$ be mixtures for Players 2 and 3 such that $(\sigma_2,\sigma_3)$ is a Nash equilibrium of $G[P_1]$ in which Player 1 receives utility at least $1$ after payments.

Notice first that the only way for P1 to receive utility $1$ is for $\mathrm{supp}(\sigma_2)$ and $\mathrm{supp}(\sigma_3)$ to consist only of the $(C,*)$ actions and for each vertex in $\mathrm{supp}(\sigma_2)$ to be adjacent to each vertex in $\mathrm{supp}(\sigma_3)$. Furthermore, Player 1 cannot make any on-equilibrium payments.

Second, it is easy to see that neither player can play any of these actions $(C,v)$ with probability greater than $1/k$, because if they were to do so, the other player could better-respond by playing the corresponding $(E,v)$ action. 
(Player 1 cannot make on equilibrium payments and her off-equilibrium payments cannot decrease the incentive to deviate to $(E,v)$.) 
 
It follows immediately that $\mathrm{supp}(\sigma_2)$ and $\mathrm{supp}(\sigma_3)$ induce a balanced complete bipartite subgraph graph of size at least $k$.
\end{proof}

\subsection{More than Two Players, Sequential Commitment}

\thmThreePlayerIteratedPureLP*
\begin{proof}
    We give a LP-based algorithm that computes, for any outcome $a$, the leader's utility-maximizing commitment that implements $a$ and the utility it achieves. 
    Running this algorithm for each of the $|A|$ action profiles and then maximizing over them computes the leader's optimal strategy. 

    To implement $a$, the leader must commit to $a_1$ and to payments that implement $a$. That is, she must ensure that Player $2$ commits to action $a_2$ and to payments that implement $a$ (i.e., that incentivize Player $3$ to commit to $a_3$).
    
    Consider the following linear program. 
    \begin{align*}
    &\text{Maximize} \quad  u_1(a) - t_{1,2}(a) - t_{1,3}(a)\\
    &\text{Subject to:}\\
    & t_{i,j}(a) \geq 0  \quad\text{for all } (i,j) \in \{(1,2), (1,3), (2,3)\} & (1)\\
    & u_3(a) + t_{1,3}(a) + t_{2,3}(a) \geq u_3(a_1, a_2, a_3') \quad \text{for all } a_3' \in A_3 \setminus \{a_3\} & (2)\\
    & u_2(a) + t_{1,2}(a) + u_3(a) + t_{1,3}(a)  &(3) \\ 
    &\quad \geq u_2(a_1, a_2, a_3') + u_3(a_1, a_2, a_3') \quad \text{for all } a_3' \in A_3 \setminus \{a_3\}  \\
    & u_2(a) + t_{1,2}(a) - t_{2,3}(a) \geq \min_{a_3''} u_2(a_1, a_2', a_3'')  \text{ for all } a_2' \in A_2 \setminus \{a_2\} &(4)
    \end{align*}

    The variables are the $t_{i,j}(a)$, all of which are real valued scalars.
    The utilities $u_i(\cdot)$ and sets $A_i$ are parameters of the game instance and hence constants from the perspective of the LP. 
    (The action profile $a$ is also an input to the LP.)
    It's easy to verify that the program in indeed linear.
    Note that the LP has three variables and $O(|A_2|+|A_3|) \in O(|A|)$ constraints, both of which are polynomial in the size of the game instance, so can be solved efficiently.  

    We now seek to show that the LP can indeed be used to compute the leader's optimal commitment that implements outcome $a$. 
    The variables $t_{i,j}(a)$ represent the payment from Player $i$ to $j$ when action profile $a$ is played.
    We argue that satisfying the constraints is both necessary and sufficient for the existence of a payment function $P_1$ with $P_{1,2}(a) = t_{1,2}(a)$ and $P_{1,3}(a) = t_{1,3}(a)$ that implements $a$ (and that this $P_1$ can be computed efficiently). %

    First, constraint (1) simply ensures the payment variables are non-negative, which is necessary and sufficient for them to be valid.

    \textbf{Necessary:} 
    To implement $a$, the leader needs to commit to $(a_1, P_1)$ such that there exists a  commitment $(a_2, P_{2})$ for Player $2$ such that all three of the following hold:
    \begin{itemize}
        \item Player $3$ prefers playing $a_3$ to any $a_3'$ in $G[(a_1, a_2), (P_1, P_2)]$.
        \item Player $2$ prefers $(a_2, P_{2})$ to any $(a_2', P_{2}')$ where $a_2' \neq a_2$.
        \item Player $2$ prefers $(a_2, P_{2})$ to any $(a_2, P_{2}')$ where $a_3$ is not a best response for Player $3$ in $G[(a_1, a_2), (P_1, P_2)]$.
    \end{itemize}
    
    We'll show that the constraints $(2-4)$ are necessary for this.

    First, observe that the utility Player $3$ can achieve by playing some $a_3'$ after the first two players commit to $a_1$ and $a_2$ is at least $u_3(a_1, a_2, a_3')$ regardless of any payments.  
    Hence, for Player $3$ to instead commit to $a_3$, her utility for doing so $u_3(a) + t_{1,3}(a) + t_{2,3}(a)$ must be at least $u_3(a_1, a_2, a_3')$.
    This is exactly what constraint (2) requires, so it is necessary. 
    
    Now, notice that if Player $2$ commits to any $a_2'$ and the uniformly zero payment function, he is guaranteed a utility of at least $\min_{a_3'} u_2(a_1, a_2', a_3')$ (regardless of Player $1$'s payments).  
    Therefore, Player $2$'s utility when $a$ obtains, $u_2(a) + t_{1,2}(a) - t_{2,3}(a)$, must be at least that amount. 
    Constraint (4) requires exactly that, so is necessary. 

    Finally, we show constraint (3) in the LP must be safisfied for the third bullet above to hold. 
    Player $2$'s maximum utility for playing $a_2$ and implementing any $\hat a_3$ can be computed via the simple DP-like approach from the two-player case (\Cref{Thm:2-player-pure-commitment}).
    Let $m = \max_{\hat a_3} u_3(a_1, a_2, \hat a_3) + P_{1,3}(a_1, a_2, \hat a_3)$, and let $\hat a_3$ be an action realizing that maximum.

    For Player $2$ to implement $a_3$, he must pay Player $3$ at least $m - [u_3(a) + t_{1,3}(a)]$ when $a$ obtains so that Player $3$ prefers playing $a_3$ to $\hat a_3$, leaving him with utility at most $u_2(a) + t_{1,2}(a) - m + u_3(a) + t_{1,3}(a)$.
    Player $2$ can also implement any $a_3'$ by paying $m - [u_3(a_1, a_2, a_3') + t_{1,3}(a_1 a_2 a_3')] \leq m - u_3(a_1 a_2 a_3')$ to Player $3$ when $(a_1, a_2, a_3')$ obtains and $0$ otherwise. %
    This gives him utility $u_2(a_1, a_2, a_3') + t_{1,2}(a_1, a_2, a_3') - m + u_3(a_1, a_2, a_3') \geq u_2(a_1, a_2, a_3') - m + u_3(a_1, a_2, a_3')$.
    Since Player $2$ must prefer implementing $a$ to implementing $(a_1, a_2, a_3')$, it is necessary that $u_2(a) + t_{1,2}(a) - m + u_3(a) + t_{1,3}(a) \geq u_2(a_1, a_2, a_3') - m + u_3(a_1, a_2, a_3')$, which is identical to constraint (3) after cancelling out the $-m$ terms.

    \textbf{Sufficient:}
    Suppose the LP finds an assignment of variables $t_{i,j}(a)$ achieving some value of the objective. 
    Then we claim that the leader can implement $a$ by committing to action $a_1$ and to a payment function $P_1$ whose nonzero values are as follows: 
    \begin{itemize}
        \item $P_{1,2}(a) = t_{1,2}(a)$
        \item $P_{1,3}(a) = t_{1,3}(a)$
        \item $P_{1,3}(a_1 a_2' m(a_2')) = M$
    \end{itemize}
    where $m(a_2') \in \arg\min_{a_3'} u_2(a_1, a_2', a_3')$ and $M$ is a very large constant $M \geq \sum_{i \in \{2,3\}} \left[ \max_{a'} u_i(a') - \min_{a'} u_i(a')\right]$. Let $M_2 = \\ \max_{a'} u_2(a') - \min_{a'} u_2(a')$ and $M_3 = \max_{a'} u_3(a') - \min_{a'} u_3(a')$, so $M \geq M_2 + M_3$. Intuitively, $M_i$ is the difference between Player $i$'s largest and smallest possible utilities.

    Define $t_{2,3}^*(a)$ to be the minimum value of $t_{2,3}(a)$ which is compatible with the constraints, making the lower bound from (2) tight. 
    We claim that $(a_2, P_2)$ is optimal for Player $2$ in $G[(a_1, P_1)]$, where $P_{2,3}(a') = \begin{cases}
        t_{2,3}^*(a') &\text{if } a'=a \\
        0 &\text{if } a'\neq a 
    \end{cases}$, and that this commitment incentivizes Player $3$ to play $a_3$.
    
    The second part of the claim follows immediately from constraint (2) since the left hand side is Player $3$'s utility for committing to $a_3$ in $G[a_1, a_2, P_1, P_2]$ and the right hand side is Player $3$'s utility for committing to some $a_3'\neq a_3$ in the same game.\footnote{Technically, this is only true up to tiebreaking if the inequality is tight. However, as we now argue, this doesn't cause any problems.   
    
    If constraint (2) is tight for some alternative action $a_3'$, Player $3$ will tie-break in favor of Player $2$ (and then in favor of the leader if Player $2$ is also indifferent). 
    Intuitively, Player $2$ also weakly prefers $a_3$ to $a_3'$ because if Player $3$ is indifferent between $a_3$ and $a_3'$, Player $2$ could have implemented $a_3'$ via an arbitrarily small payment, but did not choose to do so. 
    Formally, by constraint (3), we have $u_2(a) + t_{1,2}(a) + u_3(a) + t_{1,3}(a) \geq u_2(a_1, a_2, a_3') + u_3(a_1, a_2, a_3')$. 
    Since Player $3$ is indifferent, $u_3(a) + t_{1,3}(a)+ t_{2,3}(a) = u_3(a_1, a_2, a_3')$ and we can simply the previous expression to $u_2(a) + t_{1,2}(a) - t_{2,3}(a) \geq u_2(a_1, a_2, a_3')$, as desired. 

    If this inequality is strict, Player $3$ will play $a_3$, as desired. 
    If not, Player $3$ will tiebreak in favor of Player $1$, which is is fine for the same reason in the proof of \Cref{Thm:2-player-mixed-strat}: 
    The commitment found actually implements $(a_1, a_2, a_3')$, and the LP that finds the optimal implementation of that outcome will have a higher leader payoff. 
    }
    (There is no reason for Player $3$ to commit to payments for any outcome that actually obtains because all other players have already committed to actions). 

    Now, we claim that it is optimal for Player $2$ to commit to $a_2$ and $P_2$ in $G[a_1, P_1]$. 
    This results in utility $u_2(a) + t_{1,2}(a) - t_{2,3}^*(a)$ after Player $3$ plays $a_3$.
    
    First, consider an alternative commitment to an action $a_2'$ (along with possible payments).
    Player $3$'s utility for committing to $m(a_2')$ would be at least $$u_3(a_1, a_2', m(a_2')) + M \geq M_2 + \max_{a'} u_3(a').$$ 
    Hence, incentivizing Player $3$ to play some other $a_3'$ would require payments of at least $$M_2 + \max_{a'} u_3(a') - u_3(a_1, a_2', a_3') \geq M_2.$$ 
    But making a payment of $M_2$ would leave Player $2$ with utility at most $\min_{a'} u_2(a')$, and so would not be worthwhile. 
    Hence, Player $2$'s optimal utility when committing to play $a_2'$ is $u_2(a_1, a_2', m(a_2')) = \min_{a_3'} u_2(a_1, a_2', a_3')$, which is no more than her utility for following the equilibrium of $u_2(a) + t_{1,2}(a) - t_{2,3}^*(a)$ by constraint (4). 
    Therefore, it is optimal for Player $2$ to commit to $a_2$. 

    Now, we claim it is optimal for Player $2$ to commit to implement $a_3$. 
    Consider some other $a_3' \neq a_3$.
    Suppose $u_3(a) + t_{1,3}(a) - u_3(a_1, a_2, a_3') = \delta$. Let's first consider the case where $\delta \geq 0$.%
    Then Player $2$ would need to pay $P_{2,3}(a_1, a_2, a_3') \geq \delta$ to incentivize Player $3$ to play $a_3'$. 
    But this would give Player $2$ utility $u_2(a_1, a_2, a_3') - \delta$, which is less than $u_2(a) + t_{1,2}(a)$ by constraint (4). 
    Now, suppose $\delta < 0$.
    Then Player $2$ could incentivize Player $3$ to play $a_3$ by paying $P_{2,3}(a_1, a_2, a_3) = -\delta$. 
    This would give Player $2$ utility $u_2(a) + t_{1,2}(a) + \delta = u_2(a) + t_{1,2}(a) + u_3(a) + t_{1,3}(a) - u_3(a_1, a_2, a_3') \geq u_2(a_1, a_2, a_3')$ by constraint (4). 
    Hence Player $2$ always prefer to implement $a_3$ over any $a_3'$. 
\end{proof}

    One may notice that $t_{2,3}(a)$ is lower bounded by constraint (2) and upper bounded by constraint (4), but unconstrained within those bounds. 
    Intuitively, the lower bound ensures that the payment is high enough that Player $3$ is incentivized to play $a_3$, and the upper bound ensures that Player $2$ prefers to implement $a$ (by committing to $t_{2,3}(a)$) over playing some other $a_2'$.
    These constraints ensure that the leader's commitment incentivizes Player $2$ to implement $a$, which is sufficient to solve the leader's problem.
    However, an arbitrary value of $t_{2,3}(a)$ from a solution to the LP won't correspond to the actual amount Player $2$ would commit to pay -- if Player $2$ implements $a$, he'll do so by committing to pay the minimum value which is sufficient to do so. 
    Therefore, for ease of analysis, we'll define $t_{2,3}^*(a)$ to be the minimum value of $t_{2,3}(a)$ which is compatible with the constraints, making the lower bound from (2) tight.\footnote{Another possible approach would be to have a lexicographic objective for the LP, where the first is the current objective and the second is to minimize $t_{2,3}(a)$. 
    While this would also suffice, it introduces a slight inelegance in that not all optimal commitments from the leader would be optimal solutions to the LP. 
    For instance, if there are two optimal commitments for the leader and only one requires Player $2$ to make payments, then only the other will be an optimal solution to the LP, even if the first is actually better for Player $2$.}

\thmIterativeMixedCommitmentHard*

\begin{proof}
We reduce from \textsc{Balanced-Vertex-Cover}. Let $(V,E)$ be a graph. We construct a three-player game as follows. 
For each vertex $v$ in the graph, each player has an action corresponding to $v$. We call these actions $a_v, b_v, c_v$ for Players 1, 2 and 3, respectively. In addition, Player 3 has an action $c_e$ for each edge $e$ and one additional action $c_0$.

Let $\epsilon \in (0, \nicefrac{1}{|V|^5}]$.
The players' utilities are given as follows:
\begin{itemize}
    \item For all $a\in A$ and $b\in B$, set $u_1(a,b, c_0)=u_2(a,b,c_0)=\epsilon$. %
    \item For all $a\in A$, $b\in B$, and $c\in C\setminus \{c_0\}$, set $u_1(a,b,c)=u_2(a,b,c)=0$.
    \item For all $v\in V$ and $ b\in B$, set $u_3(a_v,b,c_v)=0$.
    \item For all $v\in V$ and $a\in A$, set $u_3(a,b_v,c_v)=0$.
    \item For all $v\in V$ and for all $a\in A\setminus \{a_v\}$ and $b\in B\setminus \{b_v\}$, set $u_3(a,b,c_v)=\frac{|V|}{|V|-2}$.
    \item For all $e\in E,b\in B$ and both $v\in e$, set $u_3(a_v,b,c_e)=0$.
    \item For all $e\in E,b\in B$ and all $v\notin e$, set $u_3(a_v,b,c_e)=\frac{|V|}{|V|-2}$.
    \item For all $a\in A,b\in B$, set $u_3(a,b,c_0)=1$.
\end{itemize}
The problem is to decide whether the leader has a commitment to payments and a mixture over actions achieving a utility of at least $k$, where $k$ can be picked arbitrarily from $(0,\epsilon]$.

This construction is almost identical to that of \citet{conitzer_optimal_strategy_to_commit}: The only difference is that Player 1 and 2's utilities are scaled down by a factor of $\epsilon$. 
This change makes Players 1 and 2's ability to commit to payments inconsequential: they can only commit very small amounts that, as we will show, can't influence Player 3's optimal strategy.

We first give some intuition for this game.
Note first that Players 1 and 2 have the same utility function. Also, note that Players 1 and 2 can only achieve nonzero utility if Player 3 plays $c_0$. Thus, the question of whether Player 1 can achieve utility at least$k$ is equivalent to the question of whether Players 1 and 2 can, via their commitments to mixtures and payments, (weakly) incentivize Player 3 to play $c_0$. Player 3 meanwhile, has various different actions that may give high utility. Roughly, the construction is such that Player 3 can earn a utility greater than $1$ if and only if Players 1 and 2 fail to play mixtures corresponding to a balanced vertex cover and Player 3 points out a \enquote{gap} in Player 1 and 2's \enquote{claimed} vertex cover.

To show that the reduction is valid, we need to show that a balanced vertex cover of $(V,E)$ exists if and only if Player 1 can achieve a utility of at least $k$ via some commitment to payments and a mixture over actions.

\underline{$\Rightarrow$}: Assume that a balanced vertex cover exists. Consider the action profile where Player 1 commits to mix uniformly over the actions $a_v$ for $v$ in the balanced vertex cover and Player 2 commits to mix uniformly over the actions $b_v$ for $v$ \textit{not} in the vertex cover. Finally, Player 3 plays $c_0$. Nobody commits to any payments.

We need to show that Player $2$ and $3$ play best responses. 
First, we verify that Player $3$ has no profitable deviations. Since Players $1$ and $2$ have already committed, there is no point in committing to payments. Thus, we only need to show that Player $3$ cannot achieve a utility greater than $1$ by playing an action other than $c_0$. This is done in the same way as in \citeauthor{conitzer_optimal_strategy_to_commit}'s proof:
For each vertex $v$, Player $3$'s expected utility for playing $c_v$ is
$$
\left( 1-\frac{2}{|V|}\right)\frac{|V|}{|V|-2}=1
$$
because for each $v$, there is a $1-\frac{2}{|V|}$ chance that $a_v$ nor $b_v$ is played.
For each edge $e$, Player $3$'s expected utility for playing $c_e$ is at most
$$
\left( 1- \frac{2}{|V|}\right) \frac{|V|}{|V|-2}=1,
$$
because for each $e=\{v_1,v_2\}$, Player $1$ plays at least one of $v_1$ and $v_2$ with probability $ \frac{2}{|V|}$.

It remains to show that Player 2 cannot profitably deviate. Note that Player 2 achieves his maximum possible utility before payments. So, if Player 2 can profitably deviate it must be by receiving payments from one of the other players. However, Player 1 has already committed not to make any payments. Furthermore, Player 3 never commits to payments that are actually made, because when Player 3 commits to payments, Players 1 and 2 have already committed to actions. Therefore, Player 2 cannot achieve a utility higher than $\epsilon$.

\underline{$\Leftarrow$}: We now show that if Player 1 can achieve positive utility, then a balanced vertex cover of the graph exists. 
To achieve positive utility, Players 1 and 2 need to induce Player 3 to play $c_0$. Players 1 and 2 can pay Player 3 for playing $c_0$.
However, if Players $1$ and $2$ pay more than $2\epsilon$ in total to Player $3$, then at least one of them would have negative utility if Player $3$ played $c_0$. 
That player could then profitably deviate by not committing to any payments. 
Therefore, including payments, Player 3's utility for playing $c_0$ is at most $1+2\epsilon$. 
This means that Players 1 and 2's strategies must ensure that all actions other than $c_0$ give utility at most $1+2\epsilon$. 
The rest of the proof will show the following: If all of Player 3's actions give utility at most $1+2\epsilon$, then we can use Player 1's strategy to construct a balanced vertex cover for the underlying graph. 
The proof is structurally similar to that of \citeauthor{conitzer_optimal_strategy_to_commit}. 
The main difference is that we have to deal with the fact that Players 1 and 2 could try to pay Player $3$ to play $c_0$. We have to show that the payments that Players 1 and 2 can \enquote{afford} are too small to get Player $3$ to play $c_0$.

First, consider Player 3's actions $c_v$ for vertices $v$. For Player 3's utility for $c_v$ to be at most $1+2\epsilon$, it must be the case that 
$$
(1-p_v) \frac{|V|}{|V|-2}\leq 1+2\epsilon,
$$
where $p_v$ is the probability that at least one of Players 1 and 2 plays $a_v$ or $b_v$. Solving for $p_v$, we obtain that
\begin{eqnarray*}
p_v &\geq& 1- \frac{(|V|-2)(1+2\epsilon)}{|V|}\\
    &=& 1-\frac{|V|-2 + 2\epsilon|V| - 4\epsilon}{|V|}\\
    &=& \frac{2}{|V|}-2\epsilon+\frac{2\epsilon}{|V|}\\
    &\geq& \frac{2}{|V|} -2\epsilon.
\end{eqnarray*}

Now we would like to show that for each vertex $v$, one of Players 1 and 2 plays their action corresponding to $v$ ($a_v$ or $b_v$) with probability close to $\frac{2}{|V|}$ and the other plays it probability close to $0$. %
Intuitively, this is because they need to use their overall probability mass of $2$ to cover $|V|$ nodes with probability mass almost $\frac{2}{|V|}$ each.
If they play both $a_v$ and $b_v$ with positive probability, they \enquote{waste} some probability mass, because there is some chance that they play $a_v$ and $b_v$ simultaneously (which gives no benefit over having just one of them play $a_v$ or $b_v$).

First, note that each $v$ must be covered with probability at least $\frac{2}{|V|}-2\epsilon$, requiring a total probability mass of at least $|V|\left( \frac{2}{|V|}-2\epsilon \right)$.
Hence, Players $1$ and $2$ can play strategies corresponding to the same vertex with probability at most $2-|V|\left( \frac{2}{|V|}-2\epsilon \right)=2|V|\epsilon$, and so in particular for any $v$, the probability that both Player $1$ plays $a_v$ \textit{and} Player $2$ plays $b_v$ can be at most $2|V|\epsilon$.

Notice that for each vertex $v$, at least one of Player 1 and 2 plays $a_v$/$b_v$ with probability $\geq \frac{1}{|V|}-\epsilon$. We now show that the other of the two players plays the corresponding action with low probability. 
If the other player plays it with probability $\delta$, then there is a probability of at least
$$
\delta \left(\frac{1}{|V|} -\epsilon \right)
$$
that both Players 1 and 2 play the action corresponding to $v$. Because only $2|V|\epsilon$ can be wasted, we get that
$$
\delta \left(\frac{1}{|V|} -\epsilon \right) \leq 2|V|\epsilon,
$$
which is equivalent to
$$
\delta \leq 2|V|\epsilon / \left(\frac{1}{|V|} -\epsilon \right)\leq 2|V|\epsilon / \left(\frac{1}{2|V|} \right) = 4|V|^2\epsilon \leq \frac{4}{|V|^3}.
$$
Call this $\bar \delta$.
It follows immediately that the player who plays the action corresponding to $v$ with probability $\geq \frac{1}{|V|}-\epsilon$ must actually play $v$ with probability at least $ \frac{2}{|V|}-\epsilon - \bar \delta$ in order to ensure that $v$ is played with probability at least $ \frac{2}{|V|}-\epsilon$.

Next we show that Players 1 and 2 each play exactly half of the actions with probability at least $ \frac{2}{|V|} -\epsilon-\bar\delta$ and the other half with probability $\leq \bar \delta$. For this, it is enough to show that neither player can play \textit{more} than half of vertices with probability at least $ \frac{2}{|V|} -\epsilon-\bar\delta$. Notice that to play $|V|/2$ vertices with probability $\frac{2}{|V|}-\epsilon - \bar \delta$, a player spends
$$
\frac{|V|}{2}\left(\frac{2}{|V|}-\epsilon - \bar \delta\right) = 1- \frac{|V|(\epsilon + \bar \delta)}{2} 
$$
of his or her available probability mass. Thus, the player has only $\frac{|V|(\epsilon + \bar \delta)}{2}$ left to distribute among other vertices. The following shows that this is less than $\frac{2}{|V|}-\epsilon-\bar\delta$: 
\begin{eqnarray*}
\frac{|V|(\epsilon + \bar \delta)}{2} &=& \frac{|V|\left(\frac{1}{|V|^5} + \frac{4}{|V|^3}\right)}{2}\\
&=& \frac{1}{2|V|^4} + \frac{2}{|V|^2}\\
&\leq& \frac{3}{|V|^2}\\
&\leq & \frac{1}{|V|}\quad (\text{using }|V|\geq 3)\\
&<& \frac{2}{|V|}-\frac{1}{|V|^5}-\frac{4}{|V|^3}\\
&\leq& \frac{2}{|V|}-\epsilon-\bar\delta.
\end{eqnarray*}

Finally, we show that if Player 3 prefers $c_0$, then the vertices $v$ for which Player 1 plays $a_v$ with probability at least $ \frac{2}{|V|}-\epsilon -\bar \delta$ form a vertex cover of the graph $G$. We prove this by contraposition: we prove that if they do not form a valid vertex cover, then Player 3 will prefer some action $c_e$ over $c_0$. Let $e$ be an edge such that Player 1 plays both vertices of $e$ with probability at most $\bar \delta$, so that with probability at least $ 1-2\bar\delta$ Player 1 plays neither of the vertices of $e$.
Then we can show that Player 3 prefers $c_e$ over $c_0$ as follows:
\begin{eqnarray*}
(1-2\bar\delta)\frac{|V|}{|V|-2} &\geq& \left(1-\frac{8}{|V|^3} \right)\frac{|V|}{|V|-2}\\
&=& \frac{|V|}{|V|-2} - \frac{8}{(|V|-2)|V|^2}\\
&=& 1+\frac{2}{|V|-2} - \frac{8}{(|V|-2)|V|^2}\\
&\geq & 1+\frac{2}{|V|-2} - \frac{1}{|V|}\text{ using }|V|\geq 4\\
&\geq & 1+\frac{1}{|V|}\\
&\geq & 1+2\epsilon \text{ using }|V|\geq 2.
\end{eqnarray*}

Recall that $1+2\epsilon$ is an upper bound on the utility Player $3$ can receive for $c_0$.
Therefore, Player 3 prefers $c_e$ to $c_0$, and hence by contraposition there exists a balanced vertex cover of $G$. 
\end{proof}

\section{Deferred Proofs from Section~\ref{sec:CE}}\label{appendix:CE-proofs}

Recall that \Cref{sec:CE}, we assumed that the leader commits to a distribution $D \in \Delta(A)$. 
She then draws a profile $a \sim D$, plays $a_1$ herself, and sends the recommendation $a_i$ to each other player $i$. 
She also commits to a payment function $P_i: A_i \rightarrow \mathbb{R}_{\geq 0}$, where $P_i(a_i)$ is the amount she pays agent $i$ for following her recommendation when she recommends playing $a_i$. 
We also assume that the leader makes an \textit{incentive compatible} commitment, meaning that all followers are incentivized to follow their recommended action (in a Bayes-Nash equilibrium of the induced game). 

With the following lemma, we essentially show that those assumptions are without loss of generality. 
That is, the type of leader commitment we considered in the main text is no less powerful than one in which the leader commits to a mixture over actions and then to an arbitrary signaling scheme based on her realized action, along with a payment function that can depend on the entire signal profile and outcome. 

\begin{lemma}\label{lemma:CE-signaling-revelation-principal-and-payments-simplification}
    Suppose the leader can commit to a mixture over actions, an arbitrary signaling scheme based on her (realized) action, and a payment function that depends on the realized outcome and all players' signals.
    That is, she commits to a mixture over actions $D_1 \in \Delta(A_1)$ and function $S: |A_1| \rightarrow \Delta([M]^{n-1})$ where $[M] = \{1, 2, \dots, M\}$ for some arbitrarily large natural number $M$, and a payment function $P: A \times [M]^{n-1} \rightarrow \mathbb{R}_{\geq 0}^{n-1}$. 
    Then, she draws an action $a_1 \sim D_1$, plays $a_1$, draws a signal $s \sim S(a_1)$ and sends each player $i$ private signal $s_i$.

    There exists a distribution $D \in \Delta(A)$ and a payment function $\bar P$ with $\bar P_{i}: A_i \times \{0,1\} \rightarrow \mathbb{R}_{\geq 0}$ in which each player receives a payment $\bar P_i(a_i)$ if they follow their recommendation to play $a_i$ and $0$ otherwise such that the distribution over outcomes and expected payoffs are the same under $(D_1, S, P)$ and under $(D, \bar P)$. 
\end{lemma}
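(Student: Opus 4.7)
The plan is a revelation-principle-style construction. Begin by fixing any Bayes--Nash equilibrium $\sigma=(\sigma_2,\ldots,\sigma_n)$, with $\sigma_i\colon [M]\to\Delta(A_i)$, of the finite Bayesian game that $(D_1,S,P)$ induces on the followers; such a BNE exists. The idea is to bundle the stochastic signal and the BNE response into a single direct action recommendation, and to bundle the signal-dependent payment into a coarser payment that depends only on the recommended action.

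Construct the joint distribution over $(a_1,s,a_{-1})$ generated by drawing $a_1\sim D_1$, then $s\sim S(a_1)$, then $a_i\sim \sigma_i(s_i)$ independently for each $i\geq 2$. Let $D\in\Delta(A)$ be the resulting marginal on action profiles; by construction the marginal of $D$ on $A_1$ equals $D_1$, and the joint distribution on $A$ coincides with the outcome distribution of the original mechanism under the BNE. Define
\[ \bar P_i(a_i) \;=\; \E\bigl[\,P_i(a,s)\,\big|\,\text{recommendation to } i \text{ is } a_i\,\bigr], \]
where the expectation is with respect to the joint distribution above. Since $P_i\geq 0$, also $\bar P_i(a_i)\geq 0$, so $\bar P$ is a valid payment function.

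I would then verify three things. First, when all followers obey in the direct mechanism, the induced outcome distribution is $D$ by construction, matching the original. Second, by the tower property, follower $i$'s expected payment is $\E[\bar P_i(a_i)]=\E[P_i(a,s)]$, and all base-game utility terms match directly, so every player's expected payoff is preserved. Third, following recommendations is itself a BNE of the direct mechanism, so $D$ is actually realized. The main obstacle is this last point. For any follower recommended $a_i$, averaging the original BNE incentive inequality over the posterior distribution on the underlying signal $s_i$ (which is supported on signals with $a_i\in\supp(\sigma_i(s_i))$, so the inequality applies termwise) yields
\[ \E[u_i(a_i,a_{-i})+P_i(a,s)\mid a_i] \;\geq\; \E[u_i(a_i',a_{-i})+P_i(a_i',a_{-i},s)\mid a_i] \;\geq\; \E[u_i(a_i',a_{-i})\mid a_i], \]
where the second inequality uses $P_i\geq 0$. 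Since the direct mechanism pays zero after a deviation, this shows that following yields at least as much expected utility as deviating to any $a_i'$. The key conceptual point is that non-negativity of payments is exactly what lets the coarser, signal-free rule $\bar P_i$ reproduce the same incentives: a deviator simply forfeits whatever payment they would have received in the original scheme, which only strengthens their incentive to follow.
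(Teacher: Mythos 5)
Your proposal is correct and follows essentially the same route as the paper's proof: fix a Bayes--Nash equilibrium of the induced game, take $D$ to be the resulting marginal on action profiles, define $\bar P_i(a_i)$ as the conditional expected payment given the recommendation, and verify obedience by averaging the original incentive inequalities over the posterior on signals and using $P_i\geq 0$ to handle the zero payment after deviation. The only difference is presentational --- the paper factors the construction through an intermediate mechanism that appends action recommendations to the original signals before collapsing them, whereas you perform both steps at once.
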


\begin{proof}
    We give a revelation principal type argument which we extend to account for payments. 

    Consider an equilibrium induced by the leader commitment $C = (D_1, S, P)$. 
    For each player $i$, let $\sigma_i: [M] \rightarrow \Delta(A_i)$ give the mixture over actions Player $i$ plays after receiving a message $s_i = k$. 
    For convenience, let's define $\sigma_1(s)$ to be the leader's posterior distribution over actions given that she sends signal $s$, so  $\sigma_1(s)(a_1) = \frac{S(a_1)(s)}{\sum_{s' \in [M]^n} S(a_1)(s')}$.

    Define $\hat C = (D_1, \hat S, \hat P)$ to be an alternate leader commitment. 
    Let the message space to each player $i$ consist of pairs in $[M]\times A_i$.
    Define $\hat S(a_1)(s, a_{-1}) = S(a_1)(s) \prod_i \sigma_i(s)(a_i)$, so that intuitively, whenever $S$ sends message $k$ to Player $i$, $\hat S$ sends each message $(k, a_i)$ with the probability $\sigma_i(k)(a_i)$ that Player $i$ would have taken action $a_i$.  
    
    Define $\hat P$ as follows, where $\hat s_i = (\hat k, \hat a_i)$
    $$\hat P_i(\hat s, a) = \begin{cases}
        \E_{s\sim S(s | s_i=\hat k)}\left[P_i((a_i, \sigma_{-i}(s)), s)\right] &\text{ if } a_i = \hat a_i \\
       0 &\text{ if } a_i \neq \hat a_i
    \end{cases}.$$
    Intuitively, whenever a player $i$ receives $(\hat k, \hat a_i)$ and takes action $\hat a_i$, $\hat P$ pays them what they would have expected to be paid under $P$ for taking $\hat a_i$ when they received message $\hat k$ from $S$, where the expectation is given their posterior over $s$ and $\sigma_{-i}(s)$.
    It pays them nothing if they receive some $\hat s = (\hat k, \hat a_i)$ and taking some $a_i' \neq \hat a_i$. 

    First, we claim that playing $\hat a_i$ is a best response to receiving $\hat s = (\hat k, \hat a_i)$ under $\hat C$.
    When Player $i$ receives signal $\hat s = (\hat k, \hat a_i)$ under $C'$, their expected utility for playing $\hat a_i$ is exactly their expected utility for playing $\hat a_i$ when receiving signal $\hat k$ under $C$. 
    Meanwhile, their expected utility for playing some $a_i' \neq \hat a_i$ when receiving $(\hat k, \hat a_i)$ under $\hat C$ is at most their utility for playing $a_i'$ when receiving $\hat k$ under $C$ (since their expected utility for the action is the same and their expected payment is weakly lower).
    Therefore, since $(\hat k, \hat a_i)$ is only sent with positive probability for $\hat a_i \in \supp \sigma_i(\hat k)$ and $\sigma_i(\hat k)$ is a best response to receiving $\hat k$, playing $\hat a_i$ is a best response to receiving $(\hat k, \hat a_i)$ under $\hat C$. 
    Therefore, it is an equilibrium of $\hat C$ for all players to take $\hat a_i$ when receiving message $(\hat k, \hat a_i)$. 

    Note also that the distribution over messages $k$ under $C$ is identical to the marginal distribution over messages $(k, \cdot{})$ under $\hat C$, and the distribution over recommendations $(\cdot{}, a_{-1})$ under $\hat C$ is identical to the distribution over actions $a_{-1}$ under $\sigma_{-1}$ under $C$. 
    Hence, $C$ and $\hat C$ induce the same distribution over outcomes. 
    Finally, observe that the expected payments from Player $1$ to each player $i$ are the same under $C$ and $\hat C$. 

    We'll now define another new signaling scheme $\tilde C = (D_1, \tilde S, \tilde P)$ which essentially combines all sets of messages $\{(k, a_i) : k\in [M]\}$ under $\hat S$ into a single message $a_i$. 
    Define $\tilde S: A_1 \rightarrow \Delta(A_{-1})$ by $$\tilde S(a_1)(a_{-1}) = \sum_{k \in [M]^{n-1}} \hat S(a_1)(k, a_{-1}).$$    

    It will be useful to define $\hat S$ directly as a distribution over signals by taking the expectation over the leader's action as we did with $S$, so let $\hat S(s, a_{-1}) = \sum_{a_1 \in A_1} D_1(a_1)\cdot{} \hat S(a_1)(s, a_{-1})$.
    Now, let's define $\tilde P: A \times A_{-1} \rightarrow \mathbb{R}_{\geq 0}^{n-1}$ by 
    $$\tilde P_i\left( a, a'_{-1}\right)  = \begin{cases} \frac{\sum_{k \in [M]^{n-1}} \hat S\left( k, a'_{-1}\right)  \cdot{} \hat P_i\left( a, \left( k, a'_{-1}\right) \right) }{\sum_{k \in [M]^{n-1}} \hat S\left( k, a'_{-1}\right) } &\text{ if } a_i = a'_i
    \\ 0 &\text{ if } a_i \neq a'_i
    \end{cases}.$$
    Notice that $\tilde P_i$ depends only on the signal $\hat a_i$ to Player $i$ and whether or not Player $i$ plays the recommended action $\hat a_i$, which are the desired properties of $\bar P$ in the lemma statement. 

    We claim that $\tilde C$ induces the same distribution over outcomes and same expected payments as $\hat C$, and hence as $C$. 
    
    First, observe that it is an equilibrium for all Players $i$ to play action $a_i$ after receiving signal $a_i$ under $\tilde C$.
    This is because Player $i$'s utility for doing so is a convex combinations of their utility for playing $a_i$ after receiving the signals $(k, a_i)$ under $\hat C$, for all $k \in [M]$.
    (Specifically, it is the convex combination where the coefficients are the marginal probabilities that Player $i$ receives each $(k, a_i)$ under $\hat S$.)
    Likewise, their utility for playing any other action $a_i'$ after receiving signal $a_i$ is an analogous convex combination.
    Since $a_i$ is a best response to any $(k, a_i)$, it is also a best response to a convex combination of such signals, and hence to the signal $a_i$ from $\tilde C$. 

    Therefore, $\tilde C$ and $\hat C$ induce the same distribution over outcomes: whenever $\hat S(a_1)$ sends some $(k, a_{-1})$ and induces outcome $(a_1, a_{-1})$, $\tilde S(a_1)$ sends $(a_{-1})$ and induces the same outcome. 
    They also induce the same expected payments because, whenever Player $i$ plays $a_i$ under $\tilde S$, their payment from $\tilde P$ is their expected payment (over messages $k$) from $\hat P$ given that they received signal $(k, a_i)$ from $\hat S$. 

    Finally, observe that if we let $D$ be the distribution over actions defined by $D(a) = D_1(a_1) \tilde S(a_1)(a_{-1})$, $(D, \tilde P)$ satisfies the requirements of the theorem. 
    \end{proof}

\thmCorrelatedMixedActionNormalForm*

\begin{proof}
    We give a linear program which directly computes the leader's optimal commitment.  
    Note that our LP has $O(|A|)$ variables and $O(|A|^2)$ constraints, each of which is polynomial in the input size. 
    Hence, the LP can be solved in polynomial time, and the result follows from the correctness of the LP, which we now show. 

\begin{align*}
    &\text{Maximize} \quad \sum_{a \in A} p_a u_1(a) - \sum_{i=2}^n \sum_{a_i \in A_i} t_i(a_i) \\
    &\text{Subject to:} \\
    &p_a \geq 0 \quad \text{for all } a \in A\\
    &\sum_{a \in A} p_a = 1 \\
    &t_i(a_i) \geq 0 \quad \text{for all } i \in \{2, \dots, n\} \text{ and } a_i \in A_i \\
    &\sum_{a_{-i}} p_{(a_i, a_{-i})} u_i(a_i, a_{-i}) + t_i(a_i) \geq \sum_{a_{-i}} p_{(a_i, a_{-i})} u_i(a_i', a_{-i}) \\ &\quad \text{for all }  i \in \{2, \dots, n\}, a_i \in A_i, \text{ and } a_i' \in A_i\setminus \{a_i\}
\end{align*}

    The variables are the $p_a$ and $t_i(a_i)$, all of which are real valued scalars.
    The utilities $u_i(\cdot)$ and sets $A_i$ are parameters of the game instance and hence constants from the perspective of the LP. 
    It's easy to verify that the program is indeed linear.

    The variables $p_a$ represent the probability that action profile $a$ is played, and the first two sets of constraints ensure that they form a valid probability distribution. 
    The variables $t_i(a_i)$ represent the product $\sum_{a_{-i}}p(a_i, a_{-i}) \cdot{} P_i(a_i)$ for each follower $i$ and action $a_i$, where $\sum_{a_{-i}}p(a_i, a_{-i})$ is the probability that action $a_i$ is recommended and $P_i(a_i)$ is payment from the leader when follower $i$ follows such a recommendation. 
    That is, $t_i(a_i)$ is the ex ante expected payment (over the action distribution) from the leader to follower $i$ for taking action $a_i$.
    Of course, the standard payment function for the leader's commitment $(D, P)$ can be recovered from the LP variables: $P_i(a_i) = t_i(a_i) / \sum_{a_{-i}}p(a_i, a_{-i})$. 

    The third set of constraints ensures these (normalized) payments $t_i(a_i)$ are non-negative.     
    Note that whenever the probability $\sum_{a_{-i}}p(a_i, a_{-i})$ is strictly positive, there is a one to one correspondence between $t_i(a_i)$ and $P_i(a_i)$. 
    When $\sum_{a_{-i}}p(a_i, a_{-i}) = 0$, i.e. $a_i$ is never recommended, the value of $P_i(a_i)$ is meaningless, and the LP would never find a non-zero value for $t_i(a_i)$. 
    This is because the constraint in which $t_i(a_i)$ appears is trivially satisfied because the other two terms are both $0$, but $t_i(a_i)$ also appears in the objective. 
	Hence, the LP searches over all valid $P_i(a_i)$ and, at optimality, finds only valid assignments.

    The second term in the objective corresponds to the total expected payments made by the leader to all followers, and so the objective is the leader's expected utility: her utility for the outcome minus payments made. 
    
    As discussed above, $t_i(a_i)$ represents $\sum_{a_{-i}}p(a_i, a_{-i}) \cdot{} P_i(a_i)$. 
    Therefore, each constraint in the final set can be viewed as a standard incentive constraint, $u_i(a_i, \sigma_{-i}) + P_i(a_i) \geq u_i(a_i', \sigma_{-i})$ (where $\sigma_{-i}$ is the posterior given the recommendation), multiplied by the probability $\sum_{a_{-i}}p(a_i, a_{-i})$ that action $a_i$ is actually recommended. 
    Hence, if $a_i$ is never recommended, the constraint is always satisfied. 
    If $\sum_{a_{-i}} p_{(a_i, a_{-i})} > 0$, then the constraint simplies to a incentive constraint ensuring Player $i$ gets more utility (after payments) for following the recommendation than for deviating.  
    Therefore, the final set of constraints is satisfied if and only if all followers are willing to follow all recomendations they receive with positive probability. 
    
\end{proof}

\thmCorrelatedPureActionNormalForm*

\begin{proof}
    Consider a particular action $a_1^*$ for the leader. 
    We can compute the leader's optimal strategy while committing to the pure action $a_1^*$ with a straightforward modification of the LP from the proof of \Cref{thm:correlated-n-player-mixed-normal-form}. 
    Consider taking that LP and adding the constraint that the leader commits to the pure action $a_1^*$, that is $\sum_{a | a_1 = a_1^*} p_a = 1$. 
    Everything else remains identical, so by the argument in the previous proof, this new LP computes the leader's optimal commitment to actions, payments, and signals, except that her action is now required to be $a_1^*$. 
    
    Observe that the constraint we add is linear in the variables and that the resulting LP still has the same asymptotic number of variables and constraints. 
    Therefore, in polynomial time, we can solve each of the $|A_1|$ LPs corresponding to the leader's actions and maximize over their objective values to find her optimal commitment. 
\end{proof}

\section{Deferred Proofs from Section~\ref{sec:Bayesian}}\label{appendix:Bayesian-proofs}

\subsection{Follower Types}

\probUnitDemandItemPricing*

\begin{problem*}[\textsc{Unit Demand Pricing for Uniform Budgets}]\label{prob:uniform-budget-min-buying-appendix}
    Given a set of items $M$ and a distribution $D$ over the finite set $C$ of customer types $c$. Each type $c$ is a pair $(b_c, S_c)$ where $b_c \in \mathbb{R}^{\geq 0}$ is a budget and $S_c\subseteq M$ is a subset of items that the customer is interested in. 
    An item pricing is a function $p: M \rightarrow \mathbb{R}^{\geq 0}$
    Let $A(p, c) = \{i \in S_c | p(i) \leq b_c\}$ be the set of items type $c$ is interested in which are below his budget. 
    The problem is to decide whether there exists an item pricing $p$ such that 
    $$Rev(p) = \sum_{c \in C | A(p, c) \neq \emptyset } D(c) \min_{i \in A(p, c)} p(i) \geq K$$
\end{problem*}

\begin{theorem*}[Theorem 5 of \cite{auctions-hardness}]
    \textsc{Unit Demand Pricing for Uniform Budgets} is hard to approximate within $O(|M|^\eps)$ for some $\eps > 0$ if \NP{}
    $\not\subseteq \cap_{\delta > 0} \BPTIME(2^{O(n^\delta)})$.
\end{theorem*}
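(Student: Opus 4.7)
The plan is to prove the hardness-of-approximation result via a gap-preserving reduction from a combinatorial problem already known to be inapproximable to within a polynomial factor under the same derandomization assumption. A natural starting point is the Balanced Bipartite Independent Set problem (BBIS): given a bipartite graph $G = (U \cup V, E)$ and an integer $k$, decide whether there exist $U' \subseteq U$ and $V' \subseteq V$ with $|U'| = |V'| = k$ and no edges between them. A known result (due to Khot and subsequent work) rules out $|V|^{\eps'}$-approximation for BBIS for some $\eps' > 0$ under $\NP \not\subseteq \cap_{\delta > 0} \BPTIME(2^{O(n^{\delta})})$. I would design a reduction translating the BBIS gap into a revenue gap for \textsc{Unit Demand Pricing for Uniform Budgets} with only a polynomial blow-up in instance size, so that $|V|^{\eps'}$ inapproximability transfers to $|M|^{\eps}$ inapproximability for some $\eps > 0$.

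The reduction maps $G = (U \cup V, E)$ to a pricing instance whose items are indexed by $V$ and whose customer types are indexed by $U$ under the uniform distribution. Each type $u$ has budget $\beta = 1$ and values item $v$ at $1$ if $uv \notin E$ and at $0$ otherwise, so $u$ is interested precisely in its non-neighbors. The completeness direction is then straightforward: given a balanced independent set $(U', V')$ of size $k$, price items in $V'$ at $1$ and items outside $V'$ strictly above $1$. Every customer in $U'$ is interested in every item of $V'$ (there being no edges between the sides), and all these items are affordable at price $1$; by the economist's min-buying tiebreak, each such customer purchases some item of $V'$ at price $1$, producing revenue at least $k / |U|$.

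The main obstacle is soundness: extracting a balanced independent set from an arbitrary pricing of revenue $\rho$. The standard move is to bucket prices into $O(\log |M|)$ geometric bins, use averaging to pick a single bin carrying an $\Omega(\rho / \log |M|)$ fraction of the revenue, and take $V'$ to be its items and $U'$ the customers buying from it. Min-buying only guarantees that each $u \in U'$ has \emph{some} non-neighbor in $V'$; upgrading this to the statement that $(U', V')$ has no edges requires a further argument, e.g. random sub-sampling of $V'$ together with a union bound over expected customer losses, followed by a balancing step trimming both sides to a common target size. To amplify a constant-factor gap that survives these losses into the polynomial factor $|M|^{\eps}$ claimed by the theorem, I expect the reduction to compose with a tensor-product construction on the bipartite graph, exploiting the fact that BBIS gaps amplify multiplicatively under graph products. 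The principal technical core lies in this accounting: ensuring that bucketing, pruning, and balancing collectively lose only polylogarithmic or constant factors of revenue while preserving enough independence structure to recover a genuine BBIS solution matching the starting inapproximability threshold.
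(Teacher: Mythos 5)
The first thing to note is that this statement has no proof in the paper at all: it is Theorem~5 of the cited work \cite{auctions-hardness}, imported as a black box solely to transfer hardness (via the paper's reduction in \Cref{thm:2-player-follower-types-hardness}) to the commitment problem in \Cref{cor:2-player-bayesian-follower-types-hardness}. So the only question is whether your blind reconstruction is itself a valid proof of that external theorem. Your starting point is reasonable and in fact matches the foundation of the original argument: the known result is indeed built on the inapproximability of balanced bipartite independent set in (constant-degree) bipartite graphs under the same assumption \NP{} $\not\subseteq \cap_{\delta>0}\BPTIME(2^{O(n^\delta)})$, due to Khot. But the two steps you leave open are exactly the hard parts, and the sketch you give for the first one fails. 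Under min-buying semantics, a customer $u$ pays $\min_{v:\, uv\notin E}\, p(v)$ (if that minimum is at most the budget), so a customer contributing high revenue certifies that \emph{all} of its interested items (its non-neighbors) are expensive; equivalently, every cheap item is a \emph{neighbor} of every high-paying customer. The structure that high revenue naturally yields is therefore a biclique between high-paying customers and low-priced items, not an independent set; and the weaker fact you extract from bucketing --- that each $u\in U'$ has \emph{some} affordable non-neighbor in $V'$ --- is nowhere near the ``no edges between $U'$ and $V'$'' condition. Random sub-sampling plus a union bound does not bridge this, because a single customer can be adjacent to almost all of $V'$ while still buying from it.

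The second gap is the amplification to an $|M|^{\eps}$ factor, which you defer to ``composing with a tensor-product construction.'' Graph products do amplify BBIS gaps multiplicatively, but the difficulty is making the amplification commute with the pricing reduction's revenue accounting (the per-customer revenue is a minimum over prices, which does not tensor in any straightforward way); this is precisely why the original proof routes through an intermediate combinatorial problem (expanding sequences) with a carefully designed product structure, rather than applying a product directly to the bipartite graph and re-running a one-shot reduction. As it stands, your proposal is a plan whose two essential steps --- soundness extraction and polynomial gap amplification --- are respectively incorrect as sketched and absent, so it does not constitute a proof of the cited theorem.
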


\thmTwoPlayerFollowerTypesHardness*

\begin{proof}
    Consider an arbitrary instance $U$ of \textsc{Unit Demand Item Pricing} with $m$ items, value distribution $D$, and revenue threshold $K$.
    We'll reduce $U$ to the problem of computing the leader's optimal commitment to a payment function and (mixture over) actions in the two-player Bayesian game $G$ which we now define.

    In $G$, the leader has a single action $s$. 
    The follower has actions $t_1 \dots t_m$ corresponding to the $m$ items as well as an additional action $t_0$. 
    The follower a type $\theta_v$ for each $v$ supported in $D$, and the probability of each follower type $\Pr(\theta_v)$ is $D(v)$.
    Let $Z$ be a constant greater than $\max_{i, v \in \supp(D)} v_i$.
    The leader's utility function is%
    \begin{itemize}
        \item $u_1(s, t_0) = 0$
        \item $u_1(s, t_i) = Z$ for all $i > 0$.
    \end{itemize}
    and follower's utility function is
    \begin{itemize}
        \item $u_2^{\theta_v}(s, t_0) = 0$
        \item $u_2^{\theta_v}(s, t_i) = -Z + v_i$ for all $i >0$.
    \end{itemize}

    Since the leader has only a single action, her payment function is the only meaningful part of her commitment. 
    We claim that there exists a commitment to a payment function $P$ in $G$ that results in utility at least $K$ for the leader if and only if there exists an item-pricing in $U$ giving the principal revenue at least $K$. 
    
    \underline{$\Rightarrow$:} Suppose there exists an item pricing $r \in \mathbb{R}^m_{\geq 0}$ for $U$ which gives the principal expected revenue of at least $K$, that is $Rev(r) = \sum_{v \in A(r)}[ D(v) * r_{i^*(v)}] \geq K$.
    Then we claim that committing to the following payment function $P$ gives the leader utility at least $K$ in $G$:

    \begin{itemize}
        \item $P(s, t_0) = 0$
        \item $P(s, t_i) = Z - r_i \text{ for all } i > 0$.
    \end{itemize}
   
    Recall that $i^*(v)$ is defined as $i^*(v) \in \arg\max_i [v_i - r_i]$, tiebreaking in favor of the largest $r_i$. 
    Observe that playing $t_0$ gives type $\theta_v$ utility $0$, and playing $t_i$ for $i > 0$ gives utility $u_2^{\theta_v}(s, t_i) + P(s, t_i) = (-Z + v_i) + (Z - r_i) = v_i - r_i$.
    Further, the leader's utility when such a $t_i$ is played is $u_1(s, t_i) - P(s, t_i) = Z - (Z - r_i) = r_i$.
    Therefore, $t_{i^*(v)}$ is follower type $\theta_v$'s optimal action among $\{t_i | i > 0 \}$ after accounting for the tie-breaking in favor of the leader. 
    It follows that type $\theta_v$ plays $t_{i^*(v)}$ if $v_{i^*(v)} - r_{i^*(v)} \geq 0$ and plays $t_0$ (receiving utility $0$) otherwise.\footnote{This is consistent with tie-breaking in favor of the leader because the leader's utility for $t_{i^*(v)}$ is $r_i \geq 0$ while and the leader's utility for $t_0$ is 0.}
    Note that these types $\theta_v$ playing $t_{i^*(v)}$ are exactly those in $A(r) = \{ v \in \supp(D) |  v_{i^*(v)} \geq r_{i^*(v)} \}$, and that the leader gets utility $u_1(s, t_0) - P(s, t_0) = 0 - 0= 0$ whenever $v \not\in A(r)$.

    Therefore, the leader's overall expected utility is
    \begin{align*}
        &\sum_{\theta_v : v \in A(r)} \Pr(\theta_v) \left[ u_1(s, t_{i^*(v)}) - P(s, t_{i^*(v)}) \right]\\ 
        =&\sum_{v \in A(r)} D(v) \left[ Z - (Z - r_{i^*(v)}) \right] \\
        =&\sum_{v \in A(r)} D(v) \cdot{} r_{i^*(v)}
    \end{align*}
    But this is exactly $\Rev(r)$ and so is at least $K$, as desired. 
    
    \underline{$\Leftarrow:$} Suppose there exists a payment function $P$ such that the leader's expected utility in $G[P]$ is at least $K$. 
    We'll assume without loss of generality that $P(s, t_i)\leq Z$ for all $i\neq 0$ and that $P(s, t_0) = 0$.\footnote{These assumptions are without loss of generality because replacing an arbitrary $P$ with one satisfying the assumptions only increases the leader's utility. 
    Note that if $P(s, t_i) > Z$ for any $i\neq 0$, the leader gets negative utility whenever $t_i$ is played, and similarly if $P(s, t_0) > 0$. 
    For any $P$, consider a $P'$ identical to $P$ except that $P'(s, t_i) = Z$ for any $i$ where $P(s, t_i) > Z$ and $P'(s, t_0) = 0$ if $P(s, t_0) > 0$.
    Switching to $P'$ from $P$ weakly increase the leader's utility because it only changes the best responses of follower types who would have played actions giving the leader negative utility. 
    These follower types give the leader nonnegative under $P'$ because all follower actions give the leader nonnegative utility under $P'$. 
    }
    Each follower type simply best responds to a fixed game $G[P]$, tiebreaking in favor of the leader, so we'll also assume without loss of generality that they play pure actions.\footnote{If the follower has multiple best responses after tiebreaking in favor of the leader, for consistency we'll assume they tiebreak in favor of the smallest index.
    }  

    We claim that the item pricing $r$ where $r_i = Z- P(s, t_i)$ yields revenue at least $K$ in $U$.
    This is a valid pricing in $U$ because $r_i = Z- P(s, t_i)\geq 0$ for all $i$.
    
    Let $j^*(v)$ be the index of the action 
    played by type $\theta_v$.
    For a payment commitment $P$, let $B(P)$ be the set of types $\theta_v$ which play an action aside from $t_0$, i.e. those with $j^*(v) > 0$.  

    Now, consider some $\theta_v \in B(P)$. 
    Type $\theta_v$'s utility for playing $j^*(v)$ is $u_2^{\theta_v}\left(s, t_{j^*(v)}\right) + P\left(s, t_{j^*(v)}\right) = -Z + v_{j^*(v)} + \left(Z - r_{j^*(v)}\right) = v_{j^*(v)} - r_{j^*(v)}$. 
    Since $\theta_v$'s utility for playing $t_0$ is $u_2^{\theta_v}(s, t_0) + P(s, t_0) = 0$, and $\theta_v$ instead plays $t_{j^*(v)}$, we have $v_{j^*(v)} - r_{j^*(v)} \geq 0$, and so by definition $v \in A(r)$. 
    In other words, for any follower type $\theta_v$ that plays a non-$t_0$ action in $G$, the corresponding buyer type with a value $v$ purchases an item in $U$. 
    
    Recall that $i^*(v)$ is defined by $i^*(v) \in \arg\max_{i>0} [v_{i} - r_{i}]$, tiebreaking in favor of large $r_i$.
    Observe that $\arg\max_{i>0} [v_{i} - r_{i}] = \arg\max_{i>0} [(v_i - Z) - (r_i - Z)] = \arg\max_{i>0} \left[u_2^{\theta_v}(s, t_{i}) + P(s, t_{i})\right]$. 
    Note that this operand $u_2^{\theta_v}(s, t_{i}) + P(s, t_{i})$ is follower type $\theta$'s utility for playing $t_i$ in $G[P]$.
    Furthermore, tiebreaking in favor of large $r_i$ is equivalent to tiebreaking in favor of the leader's utility in $G$ because the leader's utility when $t_j$ is played is $u_1(s, t_j) - P(s, t_j) = Z - P(s, t_j) = r_j$.   
    Therefore, $j^*(v)$ and $i^*(v)$ are identical for all $v$.\footnote{Assuming WLOG that we define both $j^*$ and $i^*$ to behave consistently (ex, to select the smallest index) when multiple indices achieve the maximum value of the $\arg\max$ and have the same tiebreaking value.}    
    In other words, for any follower type $\theta_v \in B(P)$ in $G$, the corresponding buyer type with value $v$ in $U$ will purchase item $j^*(v)$ and generate revenue $r_{j^*(v)}$. 
    Since the principal's revenue from buyers $v \not\in A(r)$ (who do not purchase items) is zero, we have
    \begin{align*}
        \Rev(R) &= \sum_{v \in A(r)} D(v) * r_{i^*(v)} \\ 
        &\geq \sum_{\theta_v \in B(P)} \Pr(\theta_v) * r_{j^*(v)} \\ 
        &= \sum_{\theta_v \in B(P)} \Pr(\theta_v) \left[ Z - (Z - r_{j^*(v)}) \right] \\
        &= \sum_{\theta_v \in B(P)} \Pr(\theta_v) \left[ u_1(s, t_j) - P(s, t_j) \right].
    \end{align*}    
    But this is exactly the leader's expected utility in $G$, and so is at least $K$, as desired.%
\end{proof}

\subsection{Leader Types}

\thmLeaderTypesPureCommitmentTwoPlayerHardness*

\begin{proof}
    We'll prove this via a reduction from \textsc{Vertex Cover}. 
    Consider a vertex cover instance $G = (V, E)$. 
    We'll construct a corresponding two-player Bayesian game $\Gamma$ with leader types only in which there exists a pure commitment for the leader achieving utility at least $k$ (for any $k\in(0, 1]$) if and only if there exists a $K$-cover in $G$, regardless of the leader's access to a correlation device.    
    As noted in the proof sketch, our reduction is the same as that of \citet{conitzer_optimal_strategy_to_commit} to the version of this setting without payments or signaling.

    We now define $\Gamma$ as follows. 
    The leader has $K$ types $\theta_1, \dots, \theta_K$, each of which occurs with probability $1/K$. 
    The leader has actions $a_v$ for each vertex $v \in V$.
    The follower has actions $b_e$ for each $e \in E$ and an additional action $b_0$. 

    The leader's utility function is the same for all types: $1$ if the follower plays $b_0$ and $0$ otherwise. 

    Each leader action corresponds to a vertex $v$, and intuitively, the follower is trying to play an edge $b_e$ which is not covered by the vertex $v$.
    The follower's utility function is: 
    \begin{itemize}
        \item $1$ for $b_e$ if $e$ is uncovered: $u_2(a_v, b_e) = 1$ if $v \not \in e$ for any $v\in V$ and $e\in E$   
        \item $-K$ for $b_e$ if $e$ is covered: $u_2(a_v, b_e) = 1$ if $v \in e$   for any $v\in V$ and $e\in E$   
        \item $0$ guaranteed if he plays $b_0$: $u_2(a_v, b_0) = 0$ for all $v \in V$
    \end{itemize}

    Consider some $k \in (0,1]$. 
    We claim that the leader can achieve expected utility at least $k$ via a pure commitment in $\Gamma$ if and only if there exists a $K$-cover in $G$, regardless of whether she has access to a signaling device. 
    That is, this biconditional holds both in the case that the leader's commitment includes a signaling scheme and in the case that it does not. 
    For the backwards direction, we show that the leader can achieve a utility of $k$ without access to a signaling device, which immediately implies that she can achieve the same with signaling. 
    For the forwards direction, we show that the leader cannot achieve a utility of $k$ even with access to a signaling device, which of course implies that she cannot achieve it without signaling either.

    \underline{$\Leftarrow$:} 
    Suppose there exists a $K$-cover $C \subseteq V$. 
    We demonstrate a leader commitment without signaling which achieves a utility of $1$. 
    Suppose the leader commits to the zero payment function and to actions such that, for each $v \in C$, a different one of her $K$ types plays the pure action $a_v$.

    Since $C$ is a cover, for every follower action $b_e$ (where $e = (v_1, v_2)$), at least one of $a_{v_1}$ and $a_{v_2}$ is played by some leader type $\theta_i$.
    Therefore, the follower's expected utility for playing any $b_e$ is at most $\frac{1}{K} (-K) + \frac{K-1}{K}(1) = -\frac{1}{K}$. 
    Hence, $b_0$ is the unique best response, and the leader receives utility $1$.
    
    \underline{$\Rightarrow$:}
    For a proof by contrapositive, suppose there does not exist a $K$-cover of $G$.
    Fix some incentive-compatible leader commitment to a payment function, pure action for each type, and signaling scheme. 
    Consider the leader's action profile conditional on sending recommendation $b_0$.
    There are at most $K$ distinct leader types for which $b_0$ is recommended with positive probability, and since each must play a pure action, there are at most $K$ distinct vertices $v$ for which $a_v$ is played with positive probability. 
    Since there is no $K$-cover, there must be some $e\in E$ which is not covered by any of the vertices $v$ for which $a_v$ is played. 
    The follower's utility for playing the corresponding $b_e$ is $1$. 
    Since $u_2(a_v, b_0) = 0$ for all $a_v$, the follower will only follow a recommendation to play $b_0$ if the leader commits to expected payments of at least $1$ when $b_0$ is played, in which case the leader's expected utility when recommending $b_0$ would be at most $1-1=0$.
    But the leader's utility when recommending $b_e$ for any $e \in E$ can be at most $0$ since $u(a_v, b_e)=0$ for all $a_v$ and all $b_e$.
    Therefore, the leader cannot achieve utility $k>0$ if there does not exist a $K$-cover of $G$. 
\end{proof}

\begin{lemma}\label{lemma:n=2_mixed_leader-types_payments-only-depend-on-follower-action}
    For a Bayesian game with $n=2$ players and a single follower type, any commitment to a (mixture over) actions and a payment function $P$ is equivalent (in terms of outcomes and expected payoffs) to one in which the payment function $P$ depends only on the follower's action. 
\end{lemma}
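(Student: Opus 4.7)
The plan is to construct an explicit replacement payment function $\tilde P$ that depends only on $a_2$ by averaging $P$ over the leader's randomness in mixture and type, and then verify that this replacement preserves both the follower's best-response behavior and the leader's expected utility.

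Concretely, given a leader commitment $(\sigma_1, P)$ where $\sigma_1 : \Theta \to \Delta(A_1)$ and $P : A_1 \times A_2 \to \mathbb{R}_{\geq 0}$, I would define
\[
    \tilde P(a_2) \;:=\; \sum_{\theta \in \Theta} \pi(\theta) \sum_{a_1 \in A_1} \sigma_1(\theta)(a_1)\, P(a_1, a_2),
\]
i.e., the expected payment to the follower conditional on the follower playing $a_2$. First I would note $\tilde P(a_2) \geq 0$ since $P$ is non-negative and $\sigma_1, \pi$ are probability distributions, so $(\sigma_1, \tilde P)$ is a valid commitment. Then I would compute the follower's expected utility for each $a_2$ under $(\sigma_1, \tilde P)$ and check that it equals the follower's expected utility under $(\sigma_1, P)$; this follows by splitting the expectation into the base-utility term (which does not involve $P$) and the payment term (which by construction of $\tilde P$ evaluates to $\tilde P(a_2)$ in both cases). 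Because the follower's expected-utility function over $A_2$ is identical under the two commitments, the set of best responses coincides.

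Next, I would verify that the leader's expected payoff also matches, including after tiebreaking. For any fixed follower action $a_2$, the leader's expected utility under $(\sigma_1, P)$ is $\sum_\theta \pi(\theta)\sum_{a_1}\sigma_1(\theta)(a_1)[u_1^\theta(a_1,a_2) - P(a_1,a_2)]$, and under $(\sigma_1, \tilde P)$ it is the same expression with $P(a_1, a_2)$ replaced by $\tilde P(a_2)$; by the definition of $\tilde P$, the two expected payments coincide. Since the leader's expected utility for every $a_2$ is preserved, any tiebreaking in favor of the leader selects the same action $a_2^*$ under both commitments, so the induced distribution over outcomes and both players' expected payoffs agree.

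There isn't really a hard step here: the main thing to be careful about is that the follower sees only the leader's commitment (not her realized action or type), so the follower's decision depends only on the \emph{expected} payment conditional on $a_2$, and that $\tilde P$ is designed to match this expectation exactly. The only subtlety is handling tiebreaking, which is resolved by observing that the leader's expected payoff for each $a_2$ is preserved pointwise.
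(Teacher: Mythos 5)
Your proposal is correct and takes essentially the same approach as the paper: define the replacement payment as the expectation of $P$ over the leader's type and realized action conditional on the follower's action, observe that the follower's expected utility for each action is unchanged so the best-response set coincides, and conclude that outcomes and expected payments are preserved. Your explicit verification of the leader's payoff and the tiebreaking step is slightly more detailed than the paper's, which simply asserts these consequences.
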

\begin{proof}
    Intuitively, the lemma holds because the follower's incentive to play an action $b$ depends only on her expected payment (over the leader's type and action) and not on any further details about the payment function. 
    
    Consider a commitment $(\sigma_1, P)$, in which $P$ is arbitrary. 
    The follower's expected utility for taking any action $b$ is 
    $$\sum_{\theta \in \Theta} \pi(\theta) \sum_{a \in A} \sigma_1^\theta(a) \left[ u_2(a, b) + P(a, b)\right].$$

    Now, consider an alternative payment function $P'$ which, whenever the follower plays some action $b$, pays them their expected payment under $P$, where the expectation is over the leader's type and action: 
    $P'(b) = \sum_{\theta \in \Theta} \pi(\theta) \sum_{a \in A} \sigma_1^{\theta}(a) P(a, b)$. 
    The follower's expected utility for any action $b$ is identical under $P$ and $P'$:
    \begin{align*}
        &\sum_{\theta \in \Theta} \pi(\theta) \sum_{a \in A} \sigma_1^\theta(a) \left[ u_2(a, b) + P(a, b)\right]\\
        =&\sum_{\theta \in \Theta} \pi(\theta) \sum_{a \in A} \sigma_1^\theta(a) u_2(a, b) +\sum_{\theta \in \Theta} \pi(\theta) \sum_{a \in A} \sigma_1^\theta(a) P(a, b) \\
        =& \sum_{\theta \in \Theta} \pi(\theta) \sum_{a \in A} \sigma_1^\theta(a) u_2(a, b) + P'(b)
    \end{align*}
    Therefore, the set of best responses are identical under $P$ and $P'$, and hence the expected payments and induced distributions over outcomes and are identical as well.
\end{proof}
A very similar argument shows that allowing the leader's payments to depend on her type also makes no difference to the problem. 

\thmTwoPlayerLeaderTypesMixedStrategiesPolytime*
\begin{proof}
    We extend our approach from the corresponding non-Bayesian setting of \Cref{Thm:2-player-mixed-strat}: We use linear programming to compute, for each follower action $b$, the leader's utility maximizing strategy that incentivizes $b$. 
    For the same reasons as before, there exists an optimal leader commitment for which the follower plays a pure action. 
    Hence, we can maximize over the $|B|$ follower actions to find an optimal leader's commitment and the corresponding utility. 
    Consider the following linear program. 

\begin{align*}
    &\text{Maximize} \quad \sum_{\theta \in \Theta} \pi(\theta) \sum_{a\in A} p_a^\theta  u_1^\theta(a, b) - P \\
    &\text{Subject to:} \\
    &P \geq 0 \\
    &\sum_{a\in A} p_a^\theta = 1 \hspace{12pt}\text{for all } \theta \in \Theta \\
    &p_a^\theta \geq 0  \hspace{30pt}\text{for all } \theta \in \Theta \text{ and } a \in A  \\
    &\sum_{\theta \in \Theta} \pi(\theta) \sum_{a \in A} p_a^\theta u_2(a, b) + P \geq \sum_{\theta \in \Theta} \pi(\theta) \sum_{a\in A} p_a^\theta u_2(a, b') \\ &\quad\quad \text{for all } b' \in B \setminus \{ b\}
\end{align*}

    The variables are $\{p_a^\theta\}$ and $P$, all of which are real valued scalars. 
    The utilities $u_i(\cdot)$, probabilities $\pi(\theta)$, and sets $\Theta$, $A$, and $B$ are parameters of the game instance, and so are constants from the perspective of the LP.
    The distinguished follower action $b$ is also an input to the LP. 
    It's easy to verify that the program is indeed linear in the variables. 

    The linear program is very similar to before. 
    We now have a probability distribution over leader actions for each leader type $\theta$ given by the variables $p_i^\theta$ and compute the leader's and follower's utilities in expectation over both the leader's type and the randomness in her mixture over actions. 
    By \Cref{lemma:n=2_mixed_leader-types_payments-only-depend-on-follower-action}, it suffices to consider payment functions that depend only on the follower's action. 
    Since the leader seeks to incentivize only action $b$, there is no benefit to making non-zero payments when the follower takes some $b'$. 
    Hence, the leader's choice of payment function amounts to choosing $P(b)$, which is what variable $P$ represents. 

    The objective is the leader's \textit{ex ante} expected utility after payments. 
    The first constraint ensures that the payment is non-negative.
    The next two sets of constraints ensure that, for each leader type $\theta$, $\{p_a^\theta\}$ represents a valid mixture over actions.
    The final set of constraints ensures that $b$ is a best response for the follower: the follower's expected utility for playing $b$ is at least that for playing any $b'$, where the expectations are over the leader's type and mixture over actions.  
    It is easy to see that these constraints are necessary and sufficient for the variables to correspond to a valid commitment incentivizing $b$.

    Observe that the LP has $O(|A|\cdot{}|\Theta|$) variables and $O(|A|\cdot{}|\Theta| + |B|)$ constraints. 
    Since these are both polynomial in the instance size, the LP can also be solved in polynomial time.
\end{proof}

\thmCELeaderTypesMixedAction*

\begin{proof}
    We extend our approach from \Cref{thm:correlated-n-player-mixed-normal-form} to handle leader types, again giving a polynomially sized linear program which directly computed the leader's optimal commitment. 
    Note that our LP has $O(|A|\cdot{}|\Theta|)$ variables and $O(|A|\cdot{}|\Theta|)$ constraints, each of which are polynomial in the instance size, so can be solved in polynomial time. 
    Hence, the correctness of the following LP implies the result. 
    
\begin{align*}
    &\text{Maximize} \quad \sum_{\theta \in \Theta} \pi(\theta) \sum_{a \in A} p_a^\theta u_1^\theta(a) - \sum_{i=2}^n t_i(a_i) \\
    &\text{Subject to:} \\
    &p_a^\theta \geq 0  \hspace{30pt}\text{for all } \theta \in \Theta \text{ and } i \in \{1, 2, \ldots, n\}  \\
    &\sum_{i = 1}^{|S|} p_a^\theta = 1 \hspace{12pt}\text{for all } \theta \in \Theta \\
    &t_i(a_i) \geq 0 \quad \text{for all } i \in \{2, \dots, n\} \text{ and } a_i \in A_i \\
    & \sum_{\theta \in \Theta} \pi(\theta) \sum_{a_{-i}} p_{(a_i, a_{-i})}^\theta u_i(a_i, a_{-i}) + t_i(a_i) \\ &\quad \geq \sum_{\theta \in \Theta} \pi(\theta) \sum_{a_{-i}} p_{(a_i, a_{-i})}^\theta u_i(a_i', a_{-i}) \\ &\quad \text{for all }  i \in \{2, \dots, n\} \text{ and } a_i \in A_i
\end{align*}

    The variables are the $p^\theta_a$ and $t_i(a_i)$, all of which are real valued scalars. 
    The $p^\theta_a$ represent the probability that action profile $a$ is recommended when the leader's realized type is $\theta$. 
    The $t_i(a_i)$ represent the payment from the leader to each follower $i$ for following a recommendation to play action $a_i$. 
    The utilities $u_i(\cdot)$ and $u_1^\theta(\cdot)$ are parameters of the game instance and so are constants from the perspective of the program; It's easy to verify that the program is linear.  

    The first two sets of constraints ensure that the $p^\theta_a$ variables form a valid probability distribution over action profiles for each leader type. 
    The third ensures that all payments are non-negative.
    Collectively, these first three sets of constraints ensure the variables correspond to a valid leader commitment. 
    The fourth set ensures that playing a recommended action $a_i$ is always a best response the follower's beliefs about the other players' actions conditional on the recommendation. 
    This means the leader's commitment is incentive compatible and hence that the objective correctly computes the leader's \textit{ex ante} expected utility. 
\end{proof}

\end{document}